\newtheorem{theorem}{Theorem}[section]
\newtheorem{lemma}[theorem]{Lemma}
\newtheorem{corollary}[theorem]{Corollary}
\newtheorem{no-longer-open}{\st{Open} Problem}
\newtheorem{open}{Open Problem}
\theoremstyle{definition}
\newtheorem{definition}[theorem]{Definition}
\def\defn#1{\textit{\textbf{#1}}}
\let\realbfseries=\bfseries
\def\bfseries{\realbfseries\boldmath}
 \gdef\xxxmark{%
   \expandafter\ifx\csname @mpargs\endcsname\relax 
     \expandafter\ifx\csname @captype\endcsname\relax 
       \marginpar{xxx}
     \else
       xxx 
     \fi
   \else
     xxx 
   \fi}
 \gdef\xxx{\@ifnextchar[\xxx@lab\xxx@nolab}
 \long\gdef\xxx@lab[#1]#2{\textbf{[\xxxmark #2 ---{\sc #1}]}}
 \long\gdef\xxx@nolab#1{\textbf{[\xxxmark #1]}}
 \gdef\labelthis#1#2{{\edef\@currentlabel{#2}\label{#1}\@currentlabel}}}
\let\epsilon\varepsilon
\def\tfrac#1#2{{\textstyle\frac{#1}{#2}}}
\title{Finding Closed Quasigeodesics on Convex Polyhedra%
  \thanks{A preliminary version of this paper appeared at the
    \emph{36th International Symposium on Computational Geometry (SoCG 2020)}
    \cite{Quasigeodesics_SoCG2020}.}}
\author{%
  Erik D. Demaine%
    \thanks{Computer Science and Artificial Intelligence Laboratory,
      Massachusetts Institute of Technology, Cambridge, MA, USA,
      \protect\url{edemaine@mit.edu}}
\and
  Adam C. Hesterberg%
    \thanks{John A. Paulson School of Engineering and Applied Sciences,
      Harvard University, Cambridge, MA, USA,
      \protect\url{ahesterberg@seas.harvard.edu}}
\and
  Jason S. Ku%
    \thanks{Department of Mechanical Engineering,
      National University of Singapore, Singapore,
      \protect\url{jasonku@nus.edu.sg}}
}
\date{}
\begin{document}
\maketitle

\begin{abstract}

A \defn{closed quasigeodesic} is a closed curve on the surface of a polyhedron 
with at most $180^\circ$ of surface on both sides at all points;
such curves can be locally unfolded straight.
In 1949, Pogorelov proved that every convex polyhedron has at least three
(non-self-intersecting) closed quasigeodesics,
but the proof relies on a nonconstructive topological argument.
We present the first finite algorithm to find a closed quasigeodesic on a
given convex polyhedron, which is the first positive progress on a 1990
open problem by O'Rourke and Wyman.
The algorithm also establishes
a pseudopolynomial upper bound on the total number of visits to faces (number of line segments),
namely, $O\left(\frac{n \, L^2}{\epsilon^2 \, \ell^2}\right)$
where $n$ is the number of vertices of the polyhedron,
$\epsilon$ is the minimum curvature of a vertex,
$L$ is the length of the longest edge, and $\ell$ is the smallest
distance within a face between a vertex and a nonincident edge
(minimum feature size of any face).
On the real RAM, the algorithm's running time is also pseudopolynomial, namely 
$O\left(\frac{L^2}{\epsilon^2 \, \ell^2} \, n \lg n\right)$.
On a word RAM, the running time grows to
$O\left(
\tfrac{b^2 \, \Delta^{36} \, L^{146}}{\epsilon^{98} \, \ell^{146}} \, n \lg n
\cdot
2^{O( |R| )}\right)$,
where $\Delta \leq n$ is the polyhedron's maximum vertex degree,
assuming the polyhedron's intrinsic geometry is given by constant-size radical expressions
with $b$-bit integers and at most $|R|$ distinct square-roots.
Along the way, we introduce the expression RAM model of computation,
formalizing a connection between the real RAM and word RAM
hinted at by past work on exact geometric computation.

\end{abstract}

\section{Introduction}

A \defn{geodesic} on a surface is a path that is locally shortest at every
point,
i.e., cannot be made shorter by modifying the path in a small neighborhood.
A \defn{closed geodesic} on a surface is a \defn{loop} (closed curve)
with the same property;
notably, the locally shortest property must hold at all points,
including the ``wrap around'' point where the curve meets itself.
In 1905, Poincar\'e \cite{Poincare} conjectured that every convex surface
has a non-self-intersecting closed geodesic.%
\footnote{Non-self-intersecting (quasi)geodesics are often called \emph{simple}
  (quasi)geodesics in the literature; we avoid this term to avoid ambiguity
  with other notions of ``simple''.}
In 1927, Birkhoff \cite{OneGeodesic} proved this result,
even in higher dimensions (for any smooth metric on the $n$-sphere).
In 1929, Lyusternik and Schnirelmann~\cite{ThreeGeodesicsWrong} claimed that
every smooth surface of genus $0$
in fact has at least \emph{three} non-self-intersecting closed geodesics.
Their argument ``contains some gaps'' \cite{ThreeGeodesicsDiscuss},
filled in later by
Ballmann~\cite{ThreeGeodesics}.

For non-smooth surfaces (such as polyhedra), an analog of a geodesic is a
\defn{quasigeodesic} --- a path with $\leq 180^\circ$ of surface on both sides
locally at every point along the path.
Equivalently, a quasigeodesic can be locally unfolded to a straight line:
on a face, a quasigeodesic is a straight line; at an edge, a quasigeodesic is a
straight line after the faces meeting at that edge are unfolded (developed)
flat at that edge; and at a vertex of curvature $\kappa$
(that is, a vertex whose sum of incident face angles is $360^\circ - \kappa$), 
a quasigeodesic entering the vertex at a given angle can exit it anywhere
in an angular interval of length $\kappa$,
as in Figure~\ref{QuasigeodesicVertexFigure}.
Analogously, a \defn{closed quasigeodesic} is a loop which is quasigeodesic.
In 1949, Pogorelov \cite{DiscreteThreeGeodesics} proved that every convex
surface has at least three non-self-intersecting closed quasigeodesics, by applying the 
theory of geodesics on smooth surfaces
to smooth approximations of arbitrary convex surfaces and taking limits.

\begin{figure}
\centering
\includegraphics[width=0.75\linewidth]{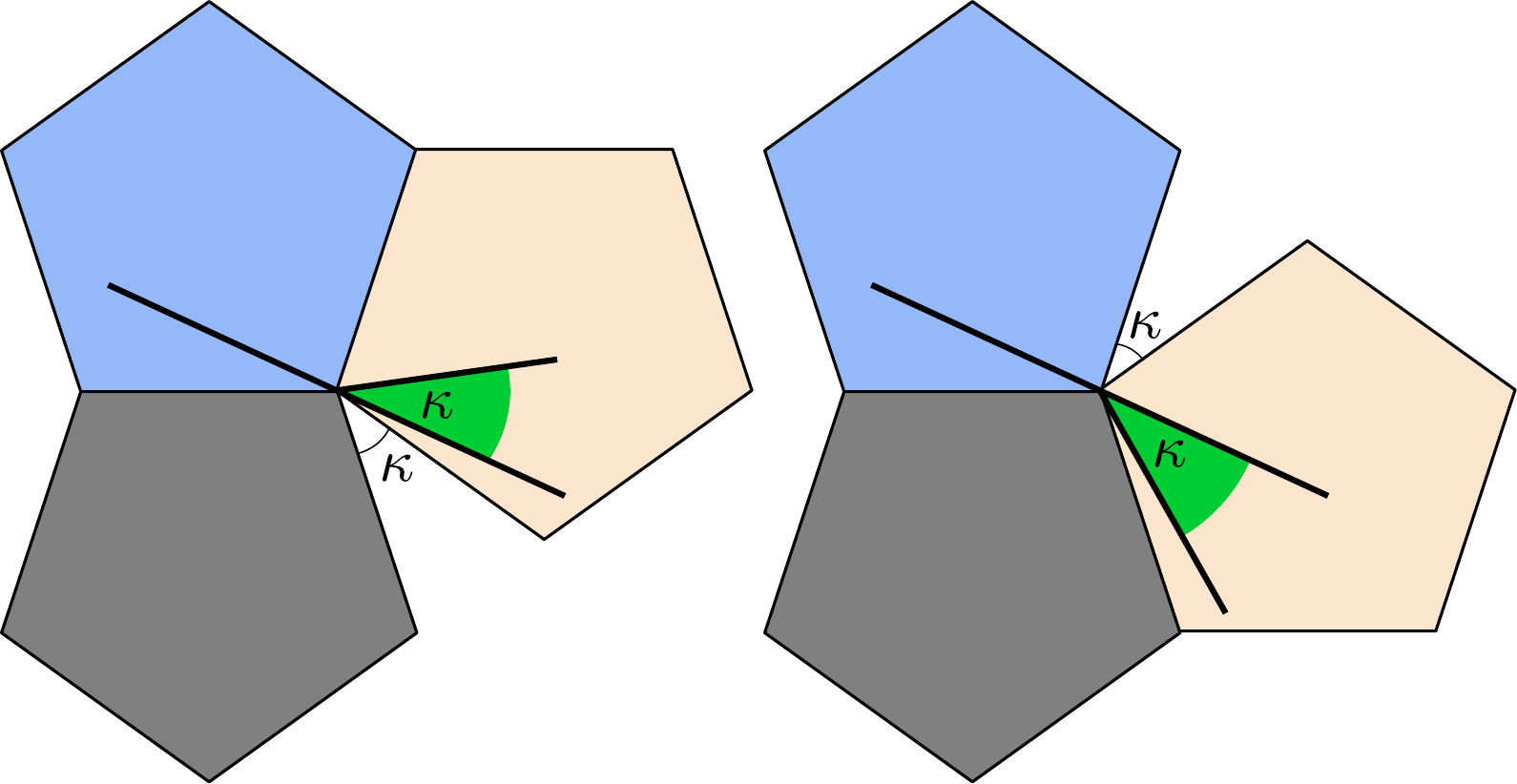}
\caption{At a vertex of curvature $\kappa$, there is a $\kappa$-size interval
of angles in which a segment of a quasigeodesic can be extended: the segment of
geodesic starting on the left can continue straight in either of the pictured
unfoldings, or any of the intermediate unfoldings in which the right pentagon
touches only at a vertex.}
\label{QuasigeodesicVertexFigure}
\end{figure}

The existence proof of three closed quasigeodesics is nonconstructive,
because the smooth argument uses a nonconstructive topological argument
(a homotopy version of the intermediate value theorem).%
\footnote{A proof sketch for the existence of one closed geodesic on a smooth
  convex surface is as follows.  By homotopy, there is a continuous deformation
  of a small clockwise loop into its (counterclockwise) reversal that avoids
  self-intersection throughout.  Consider the deformation that minimizes
  the maximum arclength of any loop during the deformation.  By local
  cut-and-paste arguments, the maximum-arclength intermediate loop is in fact
  a non-self-intersecting closed geodesic.
  The same argument can be made for the nonsmooth case.}
In 1990, Joseph O'Rourke and Stacia Wyman posed the problem of finding a
polynomial-time algorithm to find any closed quasigeodesic on a given convex
polyhedron
(aiming in particular for a non-self-intersecting closed quasigeodesic)
\cite{ORourke-2020}.
This open problem was stated during the open
problem session at SoCG 2002 (by O'Rourke) and finally appeared in print
in 2007
\cite[Open Problem 24.24]{Demaine-O'Rourke-2007}.
Two negative results described in
\cite[Section~24.4]{Demaine-O'Rourke-2007}
are that an $n$-vertex polyhedron can have $2^{\Omega(n)}$
non-self-intersecting closed quasigeodesics
\cite[Theorem 24.4.1]{Demaine-O'Rourke-2007}
(a previously unpublished result by Aronov and O'Rourke) and that, for any $k$,
there is a convex polyhedron whose shortest closed geodesic is not composed of
$k$ shortest paths (an unpublished result from the discussion at SoCG 2002).

Even a finite algorithm is not obvious.
One tempting approach is to find two
(or more)
shortest paths whose union
is a closed quasigeodesic. For example, the source unfolding
\cite{Agarwal-Aronov-O'Rourke-Schevon-1997,Demaine-O'Rourke-2007}
(Voronoi diagram) from a polyhedron vertex $V$ consists of all points on
the polyhedron having multiple shortest paths to~$V$,
as in Figure~\ref{fig:source}.
Can we find a point $P$ on the source unfolding and two shortest paths between
$P$ and $V$ whose union forms a closed quasigeodesic?
We know that there is a closed quasigeodesic through some vertex,
because if we have a closed quasigeodesic through no vertices,
we can shift it on the surface (remaining parallel to itself)
until it hits at least one vertex.
By guessing, we can assume that $V$ is a vertex visited by
a closed quasigeodesic.
But there might not be any choice for $P$ that makes the two shortest paths
meet with sufficiently straight angles at both $V$ and~$P$,
as in Figure~\ref{fig:source}.

\begin{figure}
\centering
\includegraphics[width=0.9\linewidth]{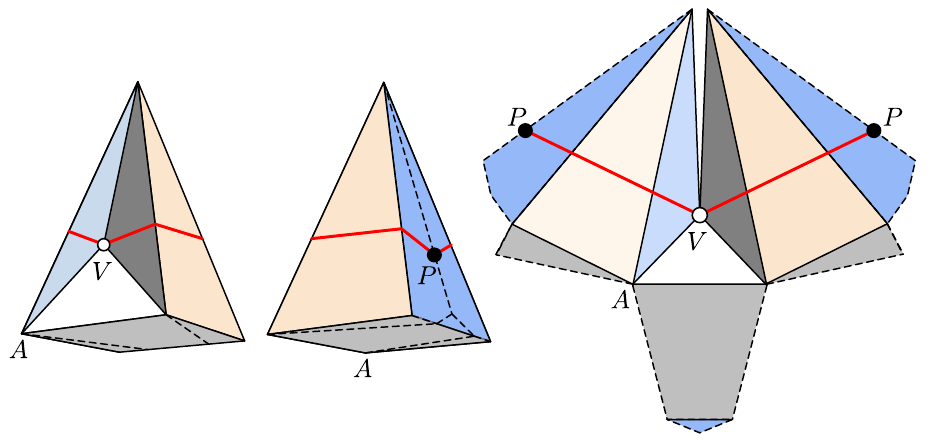}
\caption{A source unfolding from vertex $V$ of a six-vertex polyhedron
  (the convex hull of a square-based pyramid and vertex $V$ which is slightly
  outside the pyramid),
  similar to \cite[Figure~1]{Agarwal-Aronov-O'Rourke-Schevon-1997}
  and \cite[Figure~24.2]{Demaine-O'Rourke-2007}.
  No closed quasigeodesic can be formed by two shortest paths from $V$ to
  another point $P$, but there is a (vertical) closed quasigeodesic passing through~$V$.}
\label{fig:source}
\end{figure}

A more general approach is to
argue that there is a closed quasigeodesic consisting of some function $s(n)$
(e.g., $O(n)$) segments on faces.
If true, there are $O(n)^{s(n)}$ combinatorial types
of quasigeodesics to consider, and each can be checked via the existential
theory of the reals (in exponential time), resulting in an exponential-time
algorithm.
But we do not know any such bound $s(n)$.
It seems plausible that the ``short'' closed quasigeodesics from
the nonconstructive proofs satisfy $s(n) = O(n)$, say,
but as far as we know the only proved property about them
is that they are non-self-intersecting, which does
not suffice: a quasigeodesic can wind many times,
turn around, and symmetrically unwind, all without
collisions, as in Figure~\ref{fig:prism_many}.
Polyhedra such as isosceles tetrahedra
have arbitrarily long non-self-intersecting closed geodesics
(and even infinitely long non-self-intersecting geodesics)
\cite{Itoh-Rouyer-Vilcu-2019}, so the only hope is to find an upper bound
$s(n)$ on some (fewest-edge) closed quasigeodesic.

\begin{figure}
\centering
\includegraphics[width=0.9\linewidth]{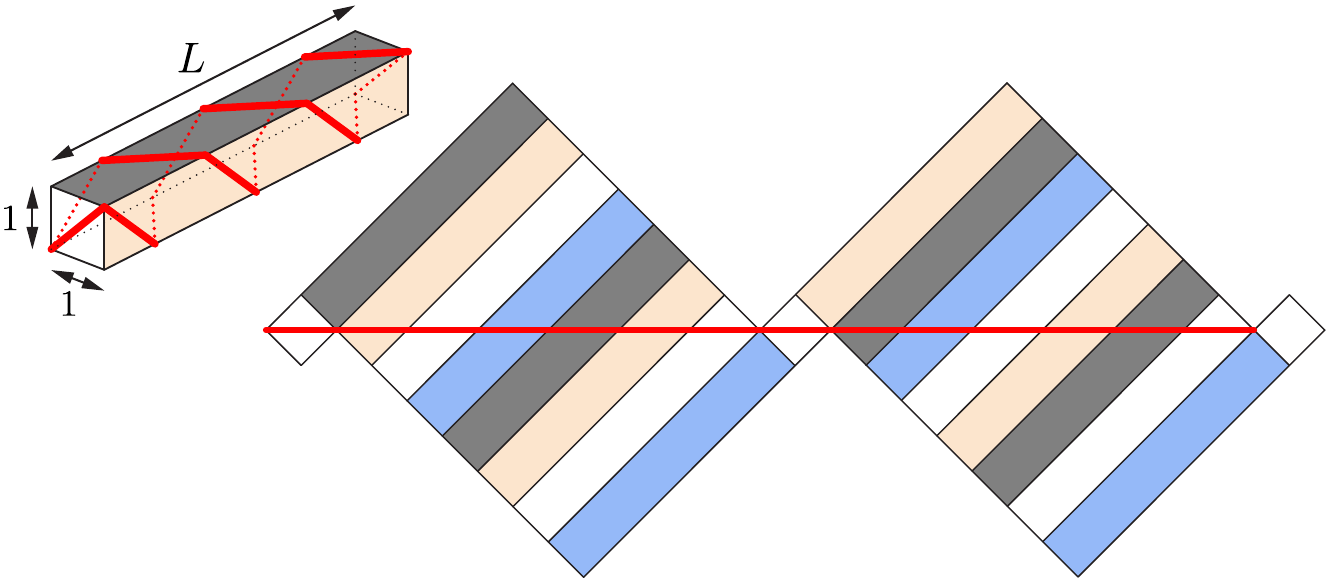}
\caption{
Non-self-intersecting quasigeodesics may cross a face many times. For example, a
$1\times 1\times L$ rectangular prism admits closed quasigeodesics which cross a
face $\Omega(L)$ times.}
\label{fig:prism_many}
\end{figure}

Since the conference version of this paper \cite{Quasigeodesics_SoCG2020},
two other approaches have been developed.
Sharp and Crane \cite{Sharp-Crane-2020} gave a practical heuristic
based on edge flipping, but it has no proof of convergence.
\label{sec:simple-geodesics-time}
Chartier and de Mesmay \cite{simple-geodesics}
gave a finite guaranteed-correct real-RAM algorithm to find a
non-self-intersecting closed quasigeodesic,
but the running time is pseudo-exponential
(exponential in both $n$ and $L/\ell$ defined below).
Their algorithm is a version of the idea above.
They prove a pseudopolynomial upper bound on $s(n)$,
generalized to four parameters:
in our notation, $s(n,\Delta,L,\ell) = O(\Delta n L/\ell)$, where
$\Delta$ is the maximum vertex degree,
$L$ is the length of the longest edge, and $\ell$ is the smallest
distance within a face between a vertex and a nonincident edge
(minimum feature size of any face).
Their bound holds even when
restricted to non-self-intersecting closed quasigeodesics.
The resulting running time of their algorithm is
$n^{O(\Delta n L/\ell)}$ on the real RAM,
by brute forcing over the possible sequences.
(They avoid the existential theory of the reals by sliding any
non-self-intersecting closed quasigeodesic to hit at least one vertex,
and then drawing straight lines.)

\subsection{Our Results}

We develop in Section~\ref{RealRAMAlgorithmSection} the first finite algorithm
that finds a
%
%
closed quasigeodesic on a given convex polyhedron, using 
$O\left(\frac{L^2}{\epsilon^2 \, \ell^2} \, n \lg n\right)$
real-RAM operations (arithmetic and square roots),
where $n$ is the number of vertices of the polyhedron,
$\epsilon$ is the smallest curvature at a vertex,
$L$ is the length of the longest edge, and $\ell$ is the smallest distance
within a face between a vertex and a nonincident edge
(minimum feature size of any face).%
\footnote{As usual, $\lg n$ denotes $\log_2 n$.}
Furthermore, the found closed quasigeodesic consists of
$O\left(\frac{n \, L^2}{\epsilon^2 \, \ell^2}\right)$
segments on faces, which is a new finite upper bound
$s(n,\epsilon,L,\ell)$.%
\footnote{The conference version of this paper \cite{Quasigeodesics_SoCG2020}
  contained the real-RAM algorithm presented here, but did not yet prove a
  bound on $s(n,\epsilon,L,\ell)$.
  Thus, chronologically, our real-RAM algorithm was first,
  while Chartier and de Mesmay's bound on $s(\cdots)$ \cite{simple-geodesics}
  was first.
}
These time and segment bounds are both pseudopolynomial.
In particular, this running time is much faster than the
pseudo-exponential running time of \cite{simple-geodesics}.
(The two bounds on $s$ are both pseudopolynomial, but their details are
different enough that we cannot apply either bound to the other algorithm.)

The real-RAM model of computation is common in computational geometry,
but it is an unrealistic model for digital computers
which are restricted to finite-precision computation.
We introduce in Section~\ref{sec:model} a model of computation
for realistic manipulation of radical real numbers,
called the \defn{expression RAM}.
This model provides a simple interface for manipulation of radical expressions
(arithmetic, roots, and comparisons), similar to the real RAM.
The key feature is that the expression RAM model can be implemented on top of
either the real RAM or the word RAM (the standard model for digital computers),
with different operation costs.
In particular, we give the first general transformation of a real-RAM algorithm
into a word-RAM algorithm with ``only'' singly exponential slowdown.
Furthermore,
when an expression has height $O(\lg n)$ and
contains at most $|R|$ distinct $O(1)$-roots
(e.g., square roots) and integers of at most $b$ bits,
the transformation has slowdown $b \cdot n^{O(1)} \cdot 2^{O(|R|)}$.
For example, real-RAM expressions of constant size can be implemented
with slowdown $O(b)$, preserving polynomial time bounds.%
\footnote{An earlier version of this paper \cite{Quasigeodesics_SoCG2020}
  mistakenly claimed that our closed-quasigeodesic algorithm fit within this
  ``$O(1)$-expression RAM'', and thus mistakenly claimed that the running time
  remained pseudopolynomial on the word RAM.}
These results formalize and analyze
the existing practical work on exact geometric computation
pioneered by LEDA/CGAL reals \cite{SepBound} and CORE reals \cite{CORE}.

We develop in Section~\ref{ExpressionRAMAlgorithmSection}
an expression-RAM version of our algorithm for finding a closed quasigeodesic,
whose running time on a word RAM is
$O\left(
\tfrac{b^2 \, \Delta^{36} \, L^{146}}{\epsilon^{98} \, \ell^{146}} \, n \lg n
\cdot
2^{O( |R| )}\right)$,
where $\Delta$ is the maximum vertex degree,
and the polyhedron is specified intrinsically by a 2D coordinatization
of each face, where each coordinate is a constant-size radical expression
with $b$-bit integers and at most $|R|$ distinct square-roots.
When $|R| = O(\lg n)$ or
$|R| = O\left(\lg \left({n \, L \over \epsilon \, \ell}\right)\right)$,
this running time is pseudopolynomial.
More generally, the represented gluing of faces only needs to be an
Alexandrov gluing, which by Alexandrov's Theorem corresponds to a
unique convex polyhedron \cite{Alexandrov-1996,Demaine-O'Rourke-2007}
(see Section~\ref{Polyhedral Inputs}).

\label{sec:sum-of-square-roots}
In fact, it is unlikely that a closed quasigeodesic can be found
in worst-case subexponential time on the word RAM, until we resolve the
famous sum-of-square-roots open problem \cite{TOPP-33}.
For example, comparing the lengths of two paths given by $O(n)$ segments
on faces (as required by any shortest-path algorithm)
requires comparing two sums of $O(n)$ square roots,
which is not known to be solvable in subexponential time.
(For the same reason, the natural decision versions of Euclidean shortest paths
and minimum spanning tree are not known to be in NP.)
For closed quasigeodesics,
it seems necessary to answer decision problems such as
``is there a geodesic path connecting two given vertices that visits the
given sequence of edges/faces in order?''
While we do not know a reduction from sum-of-square-roots to this problem,
the most natural solution by unfolding the faces one after the other
involves arithmetic (sums and multiplications) over the lengths of the edges,
which are square roots.

Our algorithm supports more general inputs than the representation described
above, but the time bound gets more complicated.
For example, given a \emph{triangulated} polyhedron via
\emph{3D vertex coordinates}, each given by a constant-size radical expression
with $b$-bit integers and at most $|R|$ distinct square-roots,
and having $|\Lambda|$ edge lengths,
we obtain a running time of
$O\left(\tfrac{b^2 \, \Delta^{36} L^{146}}{\epsilon^{98} \, \ell^{146}}
\, n \lg n \cdot
2^{O(|\Lambda|^3+|R|)}\right)$
on the word RAM; see Section~\ref{Polyhedral Inputs}.

We can also apply our techniques to Chartier and de Mesmay's algorithm for
finding a \emph{non-self-intersecting} closed quasigeodesic
\cite{simple-geodesics}.
The geometric part of their algorithm lays out the faces met by a hypothetical
geodesic in the plane, and checks that the straight line properly meets those
faces, which is a step we analyze in
Lemma~\ref{lem2:ray follow exact}.
By Equation~\eqref{eq:T'(M,h)},
the running time for this step on the word RAM is
$
    O\left(
      b^2 \cdot \big( \frac{n L}{\ell}\big)^{O(1)} \cdot 2^{O(|R|)}
    \right)
$.
The algorithm runs this step for every possible
unfolding sequence of length $O(\Delta n L/\ell)$,
multiplying the running time by $n^{O(\Delta n L/\ell)}$.
Thus the total running time on the word RAM is
$
    O\left(
      b^2 \cdot 2^{O(|R|)} \cdot n^{O(\Delta n L/\ell)}
    \right),
$
where
the polyhedron is specified intrinsically by a 2D coordinatization
of each face, where each coordinate is a constant-size radical expression
with $b$-bit integers and at most $|R|$ distinct square-roots.

\section{Real RAM Algorithm}
\label{RealRAMAlgorithmSection}

In this section, we give an algorithm to find a closed quasigeodesic on the
surface of a convex polyhedron $P$.
We assume that the input polyhedron is given intrinsically by an isometric
embedding of each face into 2D; Section~\ref{Polyhedral Inputs} describes
reductions from other input representations.
Here we will analyze the algorithm on the real RAM model supporting exact real
arithmetic and square roots (see Section~\ref{sec:real RAM} for details).
Later, in Section~\ref{ExpressionRAMAlgorithmSection}, we will implement
and analyze roughly the same algorithm on the word RAM.

\begin{figure}
\centering
\includegraphics[width=\linewidth]{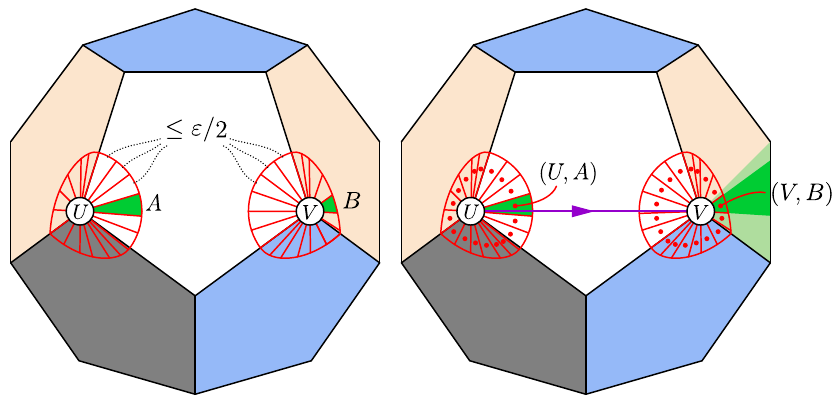}
\caption{We construct a directed graph where vertices are pairs $(U, A)$, where
$U$ is a polyhedron vertex and $A$ is an interval of directions/angles leaving $U$.
The left figure shows two polyhedron vertices, where the space of directions
leaving each vertex is partitioned into intervals of size $\leq \epsilon/2$. The
right figure shows what an edge from $(U, A)$ to $(V, B)$ represents:
there exists a geodesic (purple)
leaving $U$ in a direction from interval $A$ (dark green)
that hits $V$ such that a quasigeodesic may continue from $V$ along any
direction in interval~$B$ (dark green).
The lighter green interval of directions from $V$ containing $B$ shows the full range of angles that the ray may continue while being quasigeodesic.
\looseness=-1}
\label{fig:graph}
\end{figure}

\subsection{Outline}
\label{AlgorithmOutlineSubsection}

First, a bit of terminology: we define a \defn{(quasi)geodesic ray/segment}
to be a one/two-ended path that is (quasi)geodesic.

The idea of the algorithm is roughly as follows.
First, we define a directed graph where each node%
\footnote{We use the word ``node'' and lower-case
letters for vertices of the graph to distinguish them from vertices of a
polyhedron, for which we use capital letters and the word ``vertex''.}
is a pair
$(V, [\vec v_1, \vec v_2])$ of a vertex $V$ of $P$ and a small interval of
directions at $V$, with an edge from one such node $(U,A)$ to another node $(V,B)$
if a geodesic ray starting at the polyhedron vertex $U$ and somewhere in the
interval of directions $A$ can reach $V$ and continue quasigeodesically
everywhere in $B$;\footnote{Because we consider only geodesic rays that can
continue quasigeodesically \emph{everywhere} in $B$, there are some closed
quasigeodesics that we cannot find: those that leave a polyhedron vertex in a
direction in an interval $B$ for which some directions are not quasigeodesic
continuations. In particular, this algorithm is unlikely to find closed
quasigeodesics that turn maximally at a polyhedron vertex.} see
Figure~\ref{fig:graph}.
We show that every node of this graph has out-degree at least~$1$,
and furthermore how to calculate at least one out-edge from
any node of that graph, so we can start at any node and follow directed edges
until hitting a node twice, giving a closed quasigeodesic.
Thus we also obtain a new proof of the existence of a closed quasigeodesic.

The key part of this algorithm is to calculate, given a polyhedron vertex $U$ and a range
of directions as above, another vertex $V$ that can be reached starting from that vertex
and in that range of directions, even though reaching $V$ may require crossing
superpolynomially many faces.


\subsection{Divergence of Geodesic Rays}

In this section, we prove an upper bound on how long we must follow
two geodesic rays before they diverge in the sequence of polyhedron edges
they visit.  First we define some terms.

\begin{definition}
If $X$ is a point on the surface of a polyhedron, $\vec v$ is a direction at
$X$, and $d > 0$, then $S = (X, \vec v, d)$ is the geodesic segment starting at
$X$ in the direction $\vec v$ and continuing for a distance $d$ or until it
hits a polyhedron vertex, whichever comes first.\footnote{This definition is purely geometric;
we reserve calculating these paths for Lemma~\ref{lem:ray follow exact}.} 
We allow $d = \infty$; in that case, $S = (X, \vec v, d)$ may be a geodesic ray
(or it may stop prematurely at a vertex).
\end{definition}

Algorithmically, we will represent a geodesic segment $S = (X, \vec v, d)$
by giving coordinates for $X$ and $\vec v$
in the local coordinate system of some face $f$ containing $X$ ---
usually, the face that the segment $S$ first enters.

\begin{definition}
If $(X, \vec v, d)$ is a geodesic segment or ray, the \defn{edge sequence}
$E(X,\vec v,d)$ is the (possibly infinite) sequence of polyhedron edges that
$(X,\vec v,d)$ visits.
\end{definition}

\begin{lemma}
\label{FaceSequenceDifferenceLemma}
If $S_1 = (X, \vec v_1, \infty)$ and $S_2 = (X, \vec v_2, \infty)$ are two
geodesic rays from a common starting point $X$ with an angle
of $\theta \in (0, \pi)$ between them,
then the edge sequences $E(S_1)$ and $E(S_2)$ are distinct,
and the first difference between them occurs at most one edge after a geodesic
distance of $O(L/\theta)$.
\end{lemma}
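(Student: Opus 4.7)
The plan is to unfold the common prefix of the two face sequences into a flat planar region, in which both rays appear as straight Euclidean rays from a common apex at angle $\theta$, and then use the assumption that every edge has length at most $L$ to force a quick divergence.

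First I would argue by induction on $k$ that if $F(R_1)$ and $F(R_2)$ agree on the first $k$ faces $f_1, \ldots, f_k$, then the unfolding of this shared prefix is planar and the two rays appear as straight lines from $X$ at angle $\theta$. The inductive step is immediate: a shared transition from $f_i$ to $f_{i+1}$ means both rays cross the same edge $e_i$, so unfolding $f_{i+1}$ about $e_i$ keeps each ray straight in the unfolded plane. In particular, in that plane the perpendicular distance between the two rays at geodesic distance $d$ from $X$ is exactly $2d\sin(\theta/2)$.

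Next I would extract the geometric bound. Suppose both rays cross the same edge $e$ at distances $d_1$ and $d_2$ from $X$. The two crossing points lie on a chord of length at most $L$, so the perpendicular separation is at most $L$, which forces $2\min(d_1,d_2)\sin(\theta/2) \le L$. Using $\sin(\theta/2) \ge \theta/\pi$ on $(0,\pi)$, this gives $\min(d_1,d_2) \le \pi L/(2\theta) = O(L/\theta)$. Consequently, once both rays have traveled geodesic distance strictly exceeding this threshold $D = O(L/\theta)$, no edge can be crossed by both of them. Let $f_j$ be the (common) face both rays are in at the moment both have traveled distance $D$: if they are already in different faces then the face sequences have already differed by this point, and otherwise they must exit $f_j$ through different edges, so $F(R_1)$ and $F(R_2)$ differ at position $j+1$, which is at most one face after geodesic distance $D$.

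The main obstacle I anticipate is handling a few edge cases cleanly. One or both rays may terminate at a polyhedron vertex before reaching distance $D$, truncating its face sequence; in that case the face sequences are trivially distinct (finite versus finite or finite versus infinite) and the divergence occurred even earlier, so the bound still holds. The starting point $X$ may also lie on an edge or at a polyhedron vertex, which requires interpreting the ``initial face'' carefully, but this only affects which face is counted as $f_1$ and does not disturb the unfolding argument. Beyond these bookkeeping issues, the proof is a clean unfolding-plus-planar-geometry calculation.
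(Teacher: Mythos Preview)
Your proposal is correct and follows essentially the same approach as the paper: unfold the shared prefix into the plane so that the two rays become straight lines from a common apex at angle~$\theta$, observe that the separation at distance $d$ is $2d\sin(\theta/2) > d\theta/\pi$, and conclude that once this exceeds~$L$ the rays cannot exit through the same edge, hence the sequences split within one more face. Two small quibbles that do not affect correctness: the quantity $2d\sin(\theta/2)$ is the chord length between the two points at equal distance $d$, not literally a ``perpendicular distance''; and your sentence ``if they are already in different faces then the face sequences have already differed'' is not quite right (one ray could simply be ahead of the other in the common sequence), but the paper's proof is equally informal on this point and the intended statement---that at least one ray is still within distance $O(L/\theta)$ at the face where the split occurs---follows from the chord bound just as you outline.
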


\begin{proof}
For a finite distance $d$, the prefix segment $S_i^d = (X, \vec v_i, d)$ is
a straight segment on the unfolded sequence of corresponding faces
intersected by $S_i^d$.
Given a (prefix of) $E(S_i)$, the segment of $S_i$ is a straight line on the unfolded sequence of those faces.
Thus, while $E(S_1^d) = E(S_2^d)$,
the two geodesics $S_1^d$ and $S_2^d$ form a planar wedge in a common unfolding,
as in Figure~\ref{FaceSequenceUnfoldingFigure}.
The distance
between the points on the unfolded rays at distance $d$ from $X$ is $2d \sin
\frac{\theta}{2} > d \theta/\pi$ (since $\frac{\theta}{2} < \frac{\pi}{2}$), so
for points at a distance of $\Omega(L/\theta)$, that distance is at least $L$. So either
$E(S_1)$ and $E(S_2)$ differ before then, or the next edge that $S_1$ and $S_2$
cross is a different edge, in which case $E(S_1)$ and $E(S_2)$ differ in the
next edge, as claimed.
\end{proof}

\begin{figure}
\centering
\includegraphics[width=0.85\linewidth]{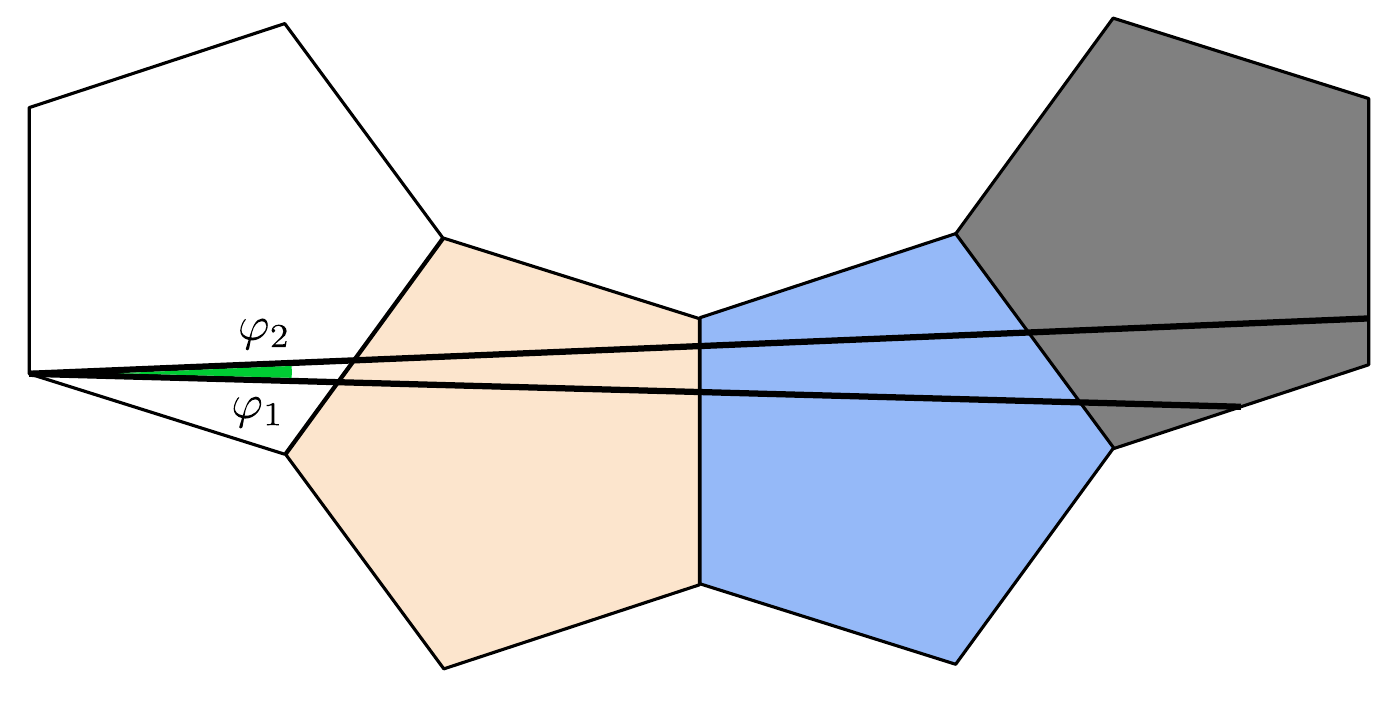}
\caption{A segment of a geodesic is a straight line in the unfolding of the
sequence of faces through which it passes, as in this unfolding of a regular
dodecahedron.}
\label{FaceSequenceUnfoldingFigure}
\end{figure}

If we had defined $L$ analogously to $\ell$ as not just the length of the
longest edge but the greatest distance within a face between a polyhedron vertex and an
edge not containing it, we could remove the ``at most one edge after'' condition
from Lemma~\ref{FaceSequenceDifferenceLemma}.


Lemma~\ref{FaceSequenceDifferenceLemma} gives a bound on the geodesic
distance to the first difference in the edge sequences
(or one edge before that).
We now relate geodesic distance to the number of edges
visited by the geodesic.

\begin{lemma} \label{lem:bounding geodesics}
  Let $S = (X,\vec v, d)$ be a geodesic segment with $d = \Omega(\ell)$.
  Then $E(S)$ consists of $O\big(\frac{d \, L}{\ell^2}\big)$ edges.
\end{lemma}

\begin{proof}
  We prove that, if the geodesic segment $S$ has length $d \leq \ell/4$,
  then $E(S)$ consists of $O(L/\ell)$ edges.
  The lemma then follows by considering $\lceil 4d/\ell \rceil$
  consecutive subsegments along the full geodesic segment.

  Consider the sequence $x_1, x_2, \dots, x_k$ of intersection points
  between the segment $S$ and the respective edges $e_1, e_2, \dots, e_k$
  of the polyhedron.
  Call $x_i$ \defn{near} a vertex $V$ if the intrinsic distance
  $\|x_i - V\| \leq \ell/3$.
  Call the segment $S$ near a vertex $V$ if some $x_i$ is near~$V$.

  We claim that $S$ can be near at most one vertex.
  Assume for contradiction that $S$ has a point $x_i$ near vertex $U$
  and a point $x_j$ near vertex~$V$.
  By the triangle inequality,
  \begin{align*}
  \underbrace{\|U-V\|}_{{}\geq \ell}
  &\leq
  \underbrace{\|U-x_i\|}_{{}\leq \ell/3}
  +
  \underbrace{\|x_i-x_j\|}_{{}\leq d \leq \ell/4}
  +
  \underbrace{\|x_j-V\|}_{{}\leq \ell/3}
  \\
  \ell &\leq (\tfrac{2}{3} + \tfrac{1}{4}) \ell,
  \end{align*}
  a contradiction.

  We can thus divide into two cases, depending on whether $S$ is
  near any (single) vertex:

\begin{figure}
\centering
\includegraphics[width=0.85\linewidth]{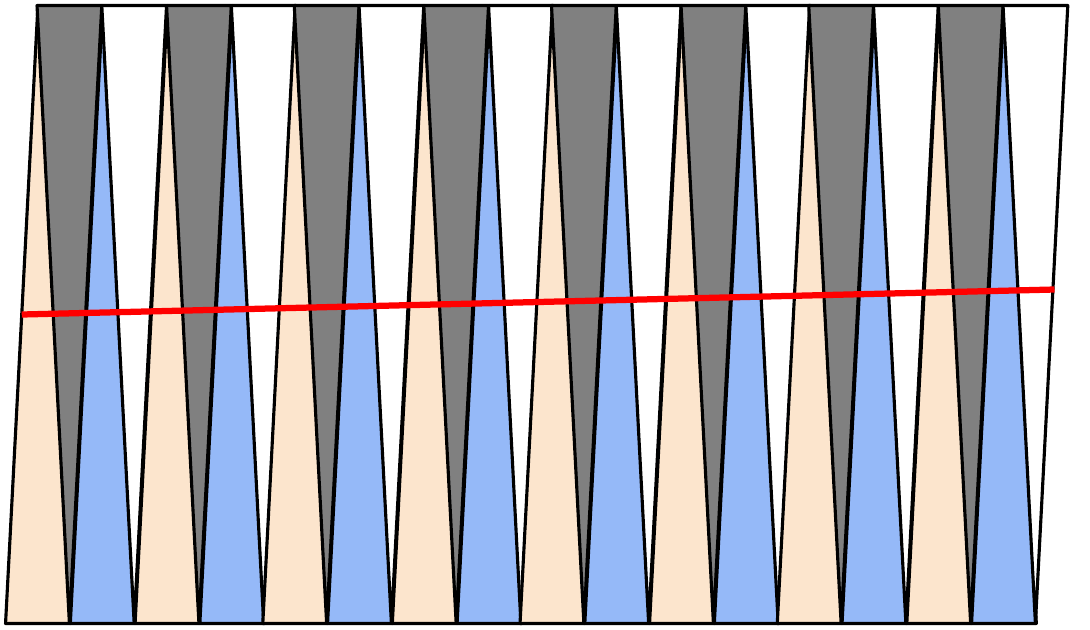}
\caption{If a geodesic path (drawn thick) encounters the same edge twice in nearly the same
place and nearly the same direction, then it may pass the same sequence of faces in the same order a superpolynomial number of times.
In this example, every fourth triangle is the same face, and equally colored faces represent copies of the same face being visited multiple times.}
\label{RepeatedFacesFigure}
\end{figure}

  \paragraph{Case 1: $S$ is far from all vertices.}

  Figure~\ref{RepeatedFacesFigure} shows an example
  where $S$ might cross many faces far from all vertices.
  Consider a segment $x_i x_{i+1}$ crossing a face $f$.
  In this case, $x_i$ and $x_{i+1}$ are not near a vertex.
  Consider shifting this segment, keeping it parallel to $x_i x_{i+1}$,
  until the shifted segment hits a vertex of the face~$f$;
  refer to Figure~\ref{fig:nonvertex bound}.

  \begin{figure}
    \centering
    \includegraphics[width=\linewidth]{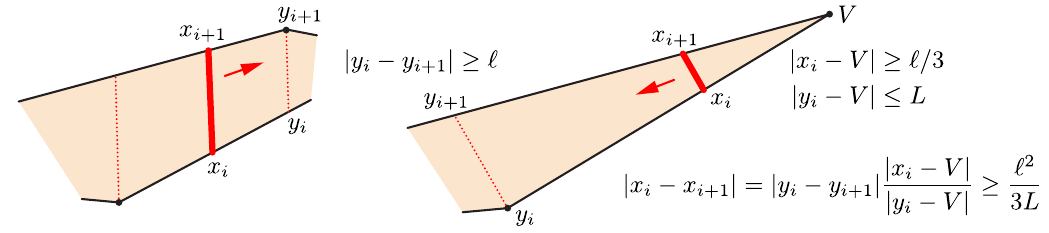}
    \caption{Case 1 of the proof of Lemma~\ref{lem:bounding geodesics}:
      shifting a segment $x_i x_{i+1}$ of the segment $S$ to remain parallel
      while crossing a single face.
      (Left) Case 1.1: shifting in the shrinking direction
      hits a vertex before shrinking to zero length.
      (Right) Case 1.2: shifting in the growing direction
      when the shrinking direction degenerates to a single vertex~$V$.}
    \label{fig:nonvertex bound}
  \end{figure}

  \paragraph{Case 1.1:}
  If one shift direction causes the segment to get shorter
  and the shifted segment $y_i y_{i+1}$ hits a vertex
  before it becomes zero length
  (as in Figure~\ref{fig:nonvertex bound}, left),
  then $y_i y_{i+1}$ has length at least $\ell$ (by definition of~$\ell$),
  and thus $x_i x_{i+1}$ has length at least $\ell \geq \ell^2/(3 L)$.

  \paragraph{Case 1.2:}
  Otherwise, we know that there is a shift direction where the edge
  first hits a vertex $V$ of $f$ when it collapses to zero length at~$V$,
  which is a common endpoint of edges $e_i$ and $e_{i+1}$ of $f$
  that $x_i$ and $x_{i+1}$ lie on
  (as in Figure~\ref{fig:nonvertex bound}, right).
  Now shift the edge in the opposite direction, which lengthens the edge,
  until we obtain an edge $y_i y_{i+1}$ where either $y_i$ or $y_{i+1}$
  is a vertex of~$f$.
  The shifted segment $y_i y_{i+1}$ has length at least $\ell$
  (by definition of~$\ell$).
  Segments $x_i x_{i+1}$ and $y_i y_{i+1}$ form similar triangles with~$V$.
  We have either $\|y_i - V\|$ or $\|y_{i+1} - V\| \leq L$
  (whichever is an edge, by definition of~$L$)
  and both $\|x_i - V\| \geq \ell/3$ and $\|x_{i+1} - V\| \geq \ell/3$
  (by farness), so the coefficient of similarity is at most $L/(\ell/3)$.
  Therefore $x_i x_{i+1}$ has length at least
  $\ell/(L/(\ell/3)) = \ell^2/(3 L)$.

  \medskip

  Hence, in both Case 1.1 and 1.2,
  $x_i x_{i+1}$ has length at least $\ell^2/(3 L)$.
  But $S$ has length $\leq \ell/4$.
  Therefore, in Case 1, the number of polyhedron edges hit by segment $S$
  is at most $(\ell/4)/(\ell^2/(3 L)) = O(L/\ell)$.

  \paragraph{Case 2: $S$ is near a vertex $V$.}
  (This case includes when $S$ starts or ends at a vertex~$V$.)
  First consider a segment $x_i x_{i+1}$ where one endpoint (say, $x_i$)
  is near $V$ while the other endpoint (say, $x_{i+1}$) is far from~$V$.
  By the triangle inequality,
  \begin{align*}
    \|V - x_{i+1}\|
    &\leq \underbrace{\|V - x_i\|}_{{}\leq \ell/3}
        + \underbrace{\|x_i - x_{i+1}\|}_{{}\leq \ell/4}
    \\
    &< \ell.
  \end{align*}

  Thus any segment $x_i x_{i+1}$ in Case~2 must have both $x_i$ and $x_{i+1}$
  within distance $< \ell$ of vertex~$V$.
  Hence, by definition of $\ell$,
  the edges $e_i$ hit by $S$ are all incident to~$V$.
  Consider the sequence of faces $f_0, f_1, \dots, f_k$ intersected by $S$,
  where $f_i$ is the face between intersections $x_i$ and $x_{i+1}$
  for $1 \leq i < k$;
  $f_0$ is the face before the intersection $x_0$ (if any); and
  $f_k$ is the face after the intersection $x_k$ (if any).
  As argued above, the $f_i$ all have $V$ as a vertex.
  Thus, if we unfold these faces consecutively into the plane,
  as shown in Figure~\ref{fig:ManyFaces3},
  then the unfolded faces all rotate around a common point~$V$.
  In this unfolded view, $S$ is a straight segment
  passing through the collinear points $x_1, x_2, \dots, x_k$.
  Each angle $\angle x_i, V, x_{i+1}$ is an angle of a face (at~$V$),
  which is at least $\arcsin(\ell/L) \geq \ell/L$;
  see Figure~\ref{fig:sin bound}.
  Because the unfolding is planar,
  $\sum_{i=1}^{k-1} \angle x_i, V, x_{i+1}
  = \angle x_1, V, x_k \leq 180^\circ$.
  Thus $k \leq 180^\circ / (\ell/L) = O(L/\ell)$,
  so the number of faces hit by $S$ is $O(L/\ell)$.
\end{proof}

\begin{figure}
  \centering
  \begin{subcaptionblock}{\linewidth}
    \centering
    \includegraphics[width=\linewidth]{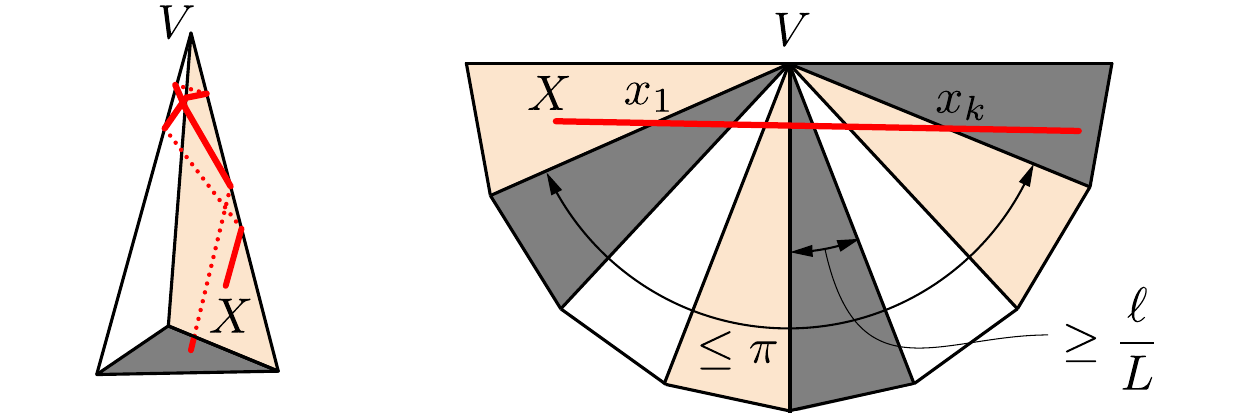}
    \caption{The geodesic is straight so the total angle of visited faces must be at most~$\pi$.}
    \label{fig:ManyFaces3}
  \end{subcaptionblock}
  \begin{subcaptionblock}{\linewidth}
    \centering
    \includegraphics{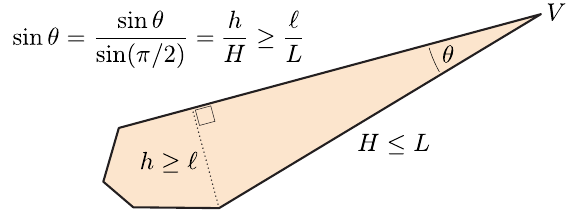}
    \caption{Each angle $\theta$ of a face is at least $\ell/L$.}
    \label{fig:sin bound}
  \end{subcaptionblock}
  \caption{A short quasigeodesic can visit $\omega(1)$ faces
    but $O(L/\ell)$ faces near a vertex $V$.}
\end{figure}

\subsection{Computing Quasigeodesic Rays}

Next we show how to algorithmically follow quasigeodesic rays.
First we need a lemma about locally unfolding the polyhedron's surface.

\begin{lemma} \label{lem:coordinate system transforms}
  Given the intrinsic coordinates $C_1,C_2$ for two adjacent faces $f_1,f_2$
  sharing an edge~$e$, where each $C_i$ is a vector specifying
  the planar coordinates of all vertices of~$f_i$,
  we can compute the orientation-preserving isometry
  bringing $e$ on $f_1$ to $e$ on $f_2$ as a transformation matrix~$I$.
  On the real RAM, the running time is $O(1)$.
\end{lemma}

\begin{proof}
  Let $e_i = (u_i, v_i)$ be the embedding of edge $e$ in the intrinsic
  coordinates $C_i$ of $f_i$, where $u_1$ and $u_2$ correspond to the
  same vertex.
  The orientation-preserving isometry $T$ bringing $e_1$ to $e_2$
  can be written as the composition of a translation by $-u_1$,
  followed by a rotation $R$, followed by a translation by $+u_2$.
  The rotation $R = \left(\begin{smallmatrix}
    c & s \\
    -s & c \\
  \end{smallmatrix}\right)
  $ is given by the 2-equation 2-unknown linear system
  $$
  \begin{pmatrix}
    c & s \\
    -s & c \\
  \end{pmatrix}
  (v_1 - u_1)
  =
  v_2 - u_2.
  $$
  This system has a solution, namely $c = \cos \theta$ and $s = \sin \theta$
  where $\theta$ is the rotation angle between the vectors.
  We can solve this system using $O(1)$ arithmetic operations on
  the coordinates of $e_1$ and~$e_2$, e.g., using Gaussian elimination.%
  \footnote{Surprisingly, this direct transformation is cleaner computationally
    than transforming each edge to/from a canonical location on the $x$ axis,
    which would require computing the length of the vectors,
    which involves a square root.}
  Finally, we obtain the desired affine isometry by multiplying the
  $3 \times 3$ translation, rotation, and translation matrices
  operating on homogeneous coordinates.
\end{proof}

\begin{lemma} \label{lem:ray follow exact}
  Let $S = (X, \vec v, \infty)$ be a geodesic ray
  on an $n$-vertex polyhedron,
  where point $X$ is on the boundary of a face $f$,
  $\vec v$ points inside $f$ from $X$, and
  $X$ and $\vec v$ are specified in the local coordinate system of~$f$.
  We can compute the first $k$ faces and $k$ edges visited by~$S$;
  the corresponding $k$ intersection points; and
  a planar embedding of an unfolding of these faces and edges.
  In particular, this determines the direction by which $S$ exits the
  last face at the last intersection point.
  If $S$ hits a vertex within the first $k$ steps, we stop there and
  also output the vertex; otherwise, we proceed for exactly $k$ steps.
  On a real RAM, the running time is $O(k \lg n)$.
\end{lemma}

\begin{proof}
  Suppose that $S$ enters face $f_i$
  at point $Q_{i-1} = (Q_x,Q_y)$ with direction $\vec v_{i-1} = (v_x,v_y)$,
  where $Q_i$ and $\vec v_{i-1}$ are in the local coordinate system of~$f_i$.
  (To start, $Q_0 = X$, $\vec v_0 = \vec v$, and $f_1 = f$.)
  We binary-search to find an edge of $f_i$ where $S$ exits~$f_i$.
  For each visited edge $e=(A,B)$ of $f_i$
  with endpoints $(A_x,A_y)$ and $(B_x,B_y)$,
  the intersection $Q'$ of the line extension of $S$,
  which has equation $$(Q'_x-Q_x) \cdot v_y = v_x \cdot (Q'_y-Q_y),$$
  and the line extension of~$e$,
  which has equation $$(Q'_x-B_x)\cdot (A_y-B_y) = (A_x-B_x)\cdot (Q'_y-B_y),$$
  is the point $Q' = (Q'_x,Q'_y)$ given by the solution to the
  above linear system.
  To determine whether $S$ in fact crosses~$e$,
  we compute the signed areas of triangles $\triangle Q_{i-1} Q' A$
  and $\triangle Q_{i-1} Q' B$
  and compare them to zero to determine their signs.
  If the two areas have opposite sign, or either sign is zero, then
  we have found a desired edge~$e_i = e$
  and the resulting exit point~$Q_i = Q'$.
  Otherwise, the shared sign of the triangles tells us
  whether the intersection $Q'$ is before $A$ or after $B$ on $e$,
  and we direct the binary search accordingly.
  This search takes $O(\lg n)$ time.

  If $S$ hits a vertex of face $f_i$, we will determine so by detecting
  that it hits two edges of~$f_i$.
  Otherwise, for the correct edge $e_i$,
  we can write $(Q_i, \vec v_{i-1})$
  as an affine transformation $\rho_i$ applied to $(Q_{i-1}, \vec v_{i-1})$,
  namely, $\rho_i$ is the translation by vector $Q_i - Q_{i-1}$.
  We then take Lemma~\ref{lem:coordinate system transforms}'s
  affine transformation $\sigma_i$ that brings edge $e_i$ of $f_i$
  in the local coordinate system of $f_i$ to the corresponding edge $e_i$
  of the adjacent face $f_{i+1}$
  in the local coordinate system of $f_{i+1}$,
  and apply it to both $Q_i$ and $\vec v_{i-1}$
  to obtain the point $Q_i$ at which and the direction $\vec v_i$ by which
  that ray $S$ will next enter $f_{i+1}$ in its local coordinate system.
  Composing $\rho_i$ and $\sigma_i$, we obtain $(Q_i, \vec v_i)$
  as an affine transformation $\tau_i$ of $(Q_{i-1}, \vec v_{i-1})$,
  in $O(1)$ time.


  Applying this method repeatedly, we can obtain the point $Q_i$ and
  direction $\vec v_i$ after $S$ traverses $i$ faces in sequence,
  where $(Q_i, \vec v_i)$ is given by $\tau_i$
  applied to $(Q_{i-1}, \vec v_{i-1})$.
  We compute each $(Q_i, \vec v_i)$ vector sequentially from the
  previous one, in $O(\lg n)$ time because of the binary search.
  The total running time for $k$ such steps is thus $O(k \lg n)$.

  Finally, we compute a planar embedding of an unfolding of the faces
  $f_1, f_2, \dots, f_k$ and edges $e_1, e_2, \dots, e_k$
  visited by the ray~$S$.
  We transform each
  $f_i, e_i$ from the local coordinate system of $f_i$
  to a common local coordinate system of $f_1$,
  by applying the inverse affine transformations
  $\sigma^{-1}_{i-1}, \sigma^{-1}_{i-2}, \dots, \sigma^{-1}_1$
  in that order.
\end{proof}

\begin{corollary} \label{cor:cone following}
  Consider an angle-$\theta$ cone between geodesic rays
  $S_1 = (X,\vec v_1, \infty)$ and $S_2 = (X,\vec v_2, \infty)$.
  We can compute a geodesic segment $S = (X, \vec v, d)$
  that is in the given cone and ends at a vertex $Y$ of~$P$
  and has $k = O\left(\frac{L^2}{\theta \, \ell^2}\right)$
  intersections between $S$ and edges of~$P$,
  along with the identity of the faces and edges intersected by~$S$.
  The output consists of $\vec v$, $d$, and the $k$ intersections.
  On a real RAM, the running time is $O(k \lg n)$.
\end{corollary}

\begin{proof}
  %
  We apply Lemma~\ref{lem:ray follow exact}
  to follow the given geodesic rays $S_1$ and $S_2$
  until they cross different edges.
  As a special case, if $S_1$ and $S_2$ immediately enter different faces,
  then $X$ is on an edge of $P$ and the cone contains one of the endpoints
  of that edge, and $S$ can simply be a portion of that edge.
  Otherwise, $S_1$ and $S_2$ initially enter the same face, and each step,
  either they exit the face along the same edge (and thus enter the same face)
  or they exit along different edges.
  By Lemma~\ref{FaceSequenceDifferenceLemma},
  after a distance of $d = O(L/\theta)$,
  $S_1$ and $S_2$ must exit a common face $f$ along different edges,
  say $e_1$ and~$e_2$.
  By Lemma~\ref{lem:bounding geodesics},
  this event happens after crossing $O\left(\frac{d \, L}{\ell^2}\right) =
  O\left(\frac{L^2}{\theta \, \ell^2}\right)$ edges of~$P$.
  Thus we can find the desired vertex $Y$ by choosing a vertex
  between $e_1$ and $e_2$ on~$f$
  (where ``between'' is defined by the sides of the rays
  corresponding to the cone).
  Lemma~\ref{lem:ray follow exact} also provides a planar embedding of an
  unfolding of the sequence of faces and edges visited by $S_1$ and $S_2$
  up to and including~$f$.
  Thus we can draw the unfolded segment $S$ from $X$ to $Y$ in this embedding,
  and intersect with each of the unfolded edges to find the
  intersection points along the way (which can be mapped back to the
  polyhedron, if desired, via the transformations provided by
  Lemma~\ref{lem:ray follow exact}).
\end{proof}

\subsection{Full Algorithm}

We are now ready to state the algorithm for finding a closed quasigeodesic in
pseudopolynomial time:

\begin{theorem}
\label{FindQuasigeodesicTheorem}
Let $P$ be a convex polyhedron with $n$ vertices all of curvature at least
$\epsilon$, let $L$ be the length of the longest edge, and let $\ell$ be the
smallest distance within a face between a vertex and a nonincident edge.
Then we can find a closed quasigeodesic on $P$ consisting of
$O\left(\frac{n \, L^2}{\epsilon^2 \, \ell^2}\right)$
segments on faces of~$P$.
On a real RAM, the running time is
$O\left(\frac{L^2}{\epsilon^2 \, \ell^2} \, n \lg n\right)$.
\end{theorem}

\begin{proof}
First, we represent the minimum curvature $\epsilon$ of the polyhedron's
vertices by computing a unit vector $\vec v_\epsilon$
whose counterclockwise angle from the positive $x$ axis is~$\epsilon$.
For each vertex~$V$, we compute a planar embedding of the faces
$f_1, f_2, \dots, f_k$ incident to $V$ in clockwise order, as follows.
Let $e$ be the edge shared by $f_1$ and $f_k$.
We apply Lemma~\ref{lem:coordinate system transforms} to find a transformation
from $f_1$'s local coordinate system that maps $V$ to the origin and
maps $e$ to the positive $x$~axis,
with $f_1$ below the axis,
by constructing an artificial local coordinate system
that places $V$ and $e$ in this way.
Then we repeatedly apply Lemma~\ref{lem:coordinate system transforms}
to place the subsequent faces $f_2, \dots, f_k$, aligning corresponding edges
(similar to Lemma~\ref{lem:ray follow exact}).
The computed placement of $f_k$ gives an embedding of $e$
which, viewed as a vector from $V$, forms a counterclockwise angle
with the $x$~axis equal to the curvature of vertex~$V$.
Dividing this vector by $\|e\|$ gives us a unit vector.
On the real RAM, the vector computation at each vertex $V$ costs $O(\deg(V))$,
for a total of $O(n)$ time by the Handshaking Lemma and planarity.
We can compare two unit vectors computed for two different vertices
by comparing the slopes of the vectors (the ratio of the two coordinates)
and the signs of the two coordinates (determining the quadrant).
Via a linear scan, we can
find the vector $\vec v_\epsilon$ with the smallest angle in the best quadrant,
whose counterclockwise angle from the positive $x$~axis
is the minimum angle~$\epsilon$.
On the real RAM, this scan costs $O(n)$ time.

Second, we bisect the angle $\epsilon$,
by computing a unit vector $\vec v_{\epsilon/2}$
whose counterclockwise angle from the positive $x$ axis is~$\epsilon/2$.
There are three cases.
When $\vec v_\epsilon$ has strictly positive $y$ coordinate,
we take $\vec v_{\epsilon/2}$ to be the average of vectors
$\vec v_\epsilon$ and $(1,0)$,
and normalize this vector to obtain a unit vector.
When $\vec v_\epsilon$ has strictly negative $y$ coordinate,
we take $\vec v_{\epsilon/2}$ to be the negation of this average.
When $\vec v_\epsilon$ has zero $y$ coordinate,
we take $\vec v_{\epsilon/2}$ to be $(0,1)$.
Then we repeat this process to construct a unit vector $\vec v_{\epsilon/4}$
whose counterclockwise angle from the positive $x$ axis is~$\epsilon/4$.
On the real RAM, this step takes constant time.


Third, we construct $k = \left\lfloor \frac{2\pi}{\epsilon/4} \right\rfloor$
unit direction vectors $\vec d_0, \vec d_1, \dots, \vec d_{k-1}$
such that the counterclockwise angle between consecutive vectors
$\vec d_i, \vec d_{i+1}$ is exactly $\epsilon/4$,
and the wraparound pair $\vec d_{k-1}, \vec d_0$
has angle between $\epsilon/4$ and $\epsilon/2$.
We start with $\vec d_0 = (1,0)$ (the positive $x$ axis)
and $\vec d_1 = \vec v_{\epsilon/4}$.
Then, for $i = 2, 3, \dots$, we define the $i$th direction vector $\vec d_i$
to be the rotation of $\vec d_{i-1}$ counterclockwise by angle $\epsilon/4$.
Thus $\vec d_i = (x_i, y_i)$ forms angle $i \epsilon/4$
with the positive $x$ axis, so it can be computed using
cosine/sine sum formulas:
\begin{align*}
  \vec d_i = (x_i, y_i)
  &=\left(\cos i \tfrac{\epsilon}{4}, ~ \sin i \tfrac{\epsilon}{4}\right) \\
  &=\left(
      \cos \left[ (i-1) \tfrac{\epsilon}{4} + \tfrac{\epsilon}{4} \right], ~
      \sin \left[ (i-1) \tfrac{\epsilon}{4} + \tfrac{\epsilon}{4} \right]
    \right) \\
  &=\left(
      \cos (i-1) \tfrac{\epsilon}{4} \cos \tfrac{\epsilon}{4} -
      \sin (i-1) \tfrac{\epsilon}{4} \sin \tfrac{\epsilon}{4}, ~
      \sin (i-1) \tfrac{\epsilon}{4} \cos \tfrac{\epsilon}{4} +
      \cos (i-1) \tfrac{\epsilon}{4} \sin \tfrac{\epsilon}{4}
    \right) \\
  &=\left(
      x_{i-1} x_1 - y_{i-1} y_1, ~
      y_{i-1} x_1 + x_{i-1} y_1
    \right).
\end{align*}
At each step, we check whether we have completed a full cycle around the origin
by testing whether $y_{i-1}$ is negative and $y_i$ is positive.
In this case, $i-1 = k = \left\lfloor \frac{2\pi}{\epsilon/4} \right\rfloor$
and we are done (discarding the last two vectors).
On the real RAM, this computation costs $O(1/\epsilon)$ time.

Fourth, for each vertex $V$ of each face $F$ of~$P$,
we choose $O(1/\epsilon)$ direction vectors in $F$'s local coordinate system
that include the two incident edges of~$F$,
and so that the angle of the wedge between consecutive direction vectors
is at most~$\epsilon/2$.
Namely, we take the two incident edge direction vectors $\vec e_1, \vec e_2$
and the subset of the $\vec d_i$ vectors that point within the face $F$ at~$V$.
We can detect whether a vector $\vec d_i$ points within $F$
by comparing the slope of $\vec d_i$ with the slope of each $\vec e_j$
and evaluating the signs of each vector component to determine the quadrant.
On the real RAM, this computation costs
$O(\deg(V)/\epsilon)$ time
for each vertex~$V$, for a total of $O(n/\epsilon)$ by the Handshaking Lemma
and planarity.

As preparation for the fifth step,
define a directed graph $G$ with one node for
each pair $(V,A)$ of a vertex $V$ from $P$
and one of its wedges $A$ (for ``angular range''),
giving the graph $O(n/\epsilon)$ nodes
(by the previous analysis).
Graph $G$ has an edge from a node
$s = (U, A)$ to a node $t = (V, B)$
if there exists a direction in $A$ such that
a quasigeodesic ray starting in that direction from the polyhedron vertex $U$
and hitting the polyhedron vertex $B$
can continue from every direction in~$V$. 

Fifth, we describe how to find an outgoing edge from any given node
$s = (U, A)$ in~$G$, with corresponding vertex $U$ and
a wedge $A$ of directions from $\vec v_1$ to $\vec v_2$. 
Refer to Figure~\ref{fig:graph}.
If wedge $A$ contains a polyhedron edge $(U,V)$
(by our construction, this edge will be in the direction $\vec v_1$
or $\vec v_2$ from~$U$),
then we will proceed to polyhedron vertex $V$ using that direction vector
$\vec v_f$, which we view in the local coordinate system of one of the
faces $f$ incident to the polyhedron edge $(U,V)$.
Otherwise, apply Corollary~\ref{cor:cone following} to follow the cone between
geodesic rays $S_1 = (U, \vec v_1, \infty)$ and $S_2 = (U, \vec v_2, \infty)$
to find a vertex $V$ within the cone, i.e., reachable by a quasigeodesic
segment $S$ starting at $U$ within the wedge~$A$,
along with the final face $f$ visited by $S$
and the final direction vector $\vec v_f$ of the segment $S$
in the local coordinate system of~$f$.
In either case, we obtain a quasigeodesic that can then exit the vertex $V$
anywhere in a cone with angle equal to that vertex's curvature,
which is at least $\epsilon$, so for
at least one of the wedges $B$ of size $\leq \epsilon/2$ at~$V$,
the quasigeodesic can exit anywhere in that wedge.
To find such a wedge, we construct a planar embedding of the
polyhedron faces incident to $V$ (as in the first paragraph of the proof)
in counterclockwise order from~$f$, and then extend the vector $\vec v_f$
from the embedded $V$ and compute (via binary search) which face $f'$ we enter.
If there is no such face $f'$, then we instead construct a planar embedding
of the polyhedron faces incident to $V$ in \emph{clockwise} order from~$f$,
and again extend the vector $\vec v_f$ and compute (via binary search)
which face $f'$ we enter.
If there is still no such face $f'$, then we can safely enter any face
other than $f$ in any direction at~$V$
(as these other faces can be rotated around $V$ so that a chosen direction
becomes the straight extension of $\vec v_f$),
so we can choose any corresponding node $(V, B)$ of the graph.
Otherwise, assuming $f'$ exists in one of the two layouts,
transform the vector $\vec v_f$ to the local coordinate system
of~$f'$, and find the clockwise next whole wedge $B$ in $f'$;
if there is no such whole wedge within $f'$,
choose the clockwise first wedge $B$ in the clockwise next face after~$f'$.
Rounding to the clockwise next wedge bends by at most $\epsilon/2$,
and leaving at a ray within the wedge bends by at most another $\epsilon/2$,
for a total bend of at most $\epsilon$, so all rays within the exit wedge $B$
form valid quasigeodesic extensions.
Thus we find an outgoing edge from node $s = (U, A)$ to node $t = (V, B)$.
The running time on the real RAM is
$O(k \lg n + \lg n) = O(k \lg n)$
where $k = O\left( \frac{L^2}{\theta \, \ell^2} \right)$
and by construction $\theta \geq \epsilon/4$.
Therefore the running time is
$O\left(\frac{L^2}{\epsilon \, \ell^2} \lg n\right)$.

Sixth, we start from any node of $G$, and repeatedly traverse outgoing edges
of $G$ until we repeat a node of~$G$.
In the worst case, we compute an outgoing edge for each of the
$O(n/\epsilon)$ nodes of $G$ before finding a cycle.
The resulting cycle in $G$ corresponds to a closed quasigeodesic
on the polyhedron, by the definition of the graph.
Each traversal of an edge in $G$ corresponds to a vertex-to-vertex geodesic
path, which either follows a polyhedron edge so has just one segment,
or applies Corollary~\ref{cor:cone following} to a cone
with angle $\geq \epsilon/4$
(by the third paragraph, because the cone did not contain a polyhedron edge),
so has $O\left(\frac{L^2}{\epsilon \, \ell^2}\right)$ segments.
The closed geodesic can be described by $O(n/\varepsilon)$ such
vertex-to-vertex paths, for a total of $O\left(\frac{n \, L^2}{\epsilon^2 \, \ell^2}\right)$
segments and
$O\left(\frac{L^2}{\epsilon^2 \, \ell^2} \, n \lg n\right)$ time.

On the real RAM,
the overall running time is $O(n)$ for the preprocessing in the
first paragraph,
$O(1)$ for the preprocessing in the second paragraph,
$O(1/\epsilon)$ for the preprocessing in the third paragraph,
$O(n/\epsilon)$ for the preprocessing in the fourth paragraph, and
$O\left(\frac{L^2}{\epsilon^2 \, \ell^2} \, n \lg n\right)$ for
walking/constructing the graph, of which the last term dominates.
\end{proof}

If $D$ is the greatest diameter of a face, then a closed quasigeodesic found by
Theorem~\ref{FindQuasigeodesicTheorem} has length $O\left({n \over \epsilon}
\left({L \over \epsilon} + D\right)\right)$, because the quasigeodesic visits $O(n/\epsilon)$
graph nodes and, by Lemma~\ref{FaceSequenceDifferenceLemma}, goes a distance at
most $L/\epsilon + D$ between each consecutive pair.  

If we define the graph $G$ in terms of $\epsilon/s$ instead of $\epsilon/2$
for $s \geq 2$,
then we can guarantee that, at each vertex, the closed quasigeodesic has
at most $180^\circ - \epsilon/2 + \epsilon/s$ material on either side.
In particular, for $s > 2$,
we guarantee strictly less than $180^\circ$ of material
on either side of the closed quasigeodesic at each vertex.

\section{Models of Computation}
\label{sec:model}

In this section, we review standard models of computation and
introduce a new model --- the expression RAM --- that makes it
easy to state an algorithm that manipulates real numbers
while grounding it in computational feasibility.
The expression RAM is essentially a form of the real RAM,
and indeed the expression RAM can be implemented on top of the real RAM
(except for possible lack of floor/ceiling operations).
More interesting is that the expression RAM can be implemented on top of
the word RAM, with different operation costs; see Figure~\ref{fig:models}.
Thus we can express a single algorithm in the expression RAM, and then
analyze its running time on both the real RAM and word RAM.

\begin{figure}[htbp]
  \centering
  \includegraphics[scale=0.35]{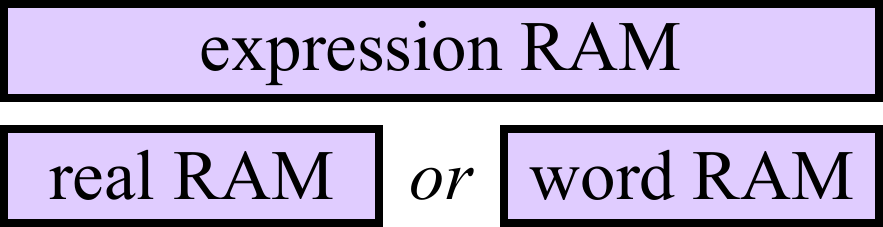}
  \caption{The expression RAM model can be implemented on top of the
    real RAM or the word RAM (with different operation costs).}
  \label{fig:models}
\end{figure}

Our approach is essentially the ``exact geometric computation'' framework
of LEDA/CGAL reals \cite{SepBound} and CORE reals \cite{CORE}.
So while the model and associated algorithms have essentially been described
before (and even implemented in code), they have not previously been analyzed
to the level of precise costs in the word-RAM model.
We expect that the expression RAM can be applied to analyze many algorithms
in computational geometry, both old and new, making it a contribution
of independent interest.

\subsection{Standard Models: Real RAM and Word RAM}
\label{sec:real RAM}

First we recall the two most common models of computation in
computational geometry and data structures, respectively:
the real RAM and the word RAM.

Both models define a \defn{Random Access Machine (RAM)} to have a 
\defn{memory} consisting of an array $M$ with cells $M[0], M[1], \dots$
(where the first $O(1)$ cells act as ``registers''),
but they differ in what the cells can store.
The RAM supports the $O(1)$-time operations
``$M[i] = M[M[j]]$'' (read) and ``$M[M[j]] = M[i]$'' (write)
where $i$ and $j$ are constant integers (e.g., $7$ and~$42$)
representing register indices,
and $M[j]$ is promised to be an integer
(or else the operation fails).
When we later allow operations on cell values, such as ``$a = b + c$'',
these operations are actually restricted on a RAM to be of the form
``$M[i_1] = M[i_2] + M[i_3]$''
where $i_1,i_2,i_3$ are all constant integers
(again representing register indices).

\label{sec:real RAM floor}
The \defn{real RAM} model
(dating back to Shamos's original thesis \cite{Shamos-1978})
allows storage and manipulation of black-box real numbers
supporting arithmetic operations $+,-,\times,\div$,
radical operations $\sqrt[d]{\phantom{\imath}}$, and comparisons $<, >, =$
in constant time,
in addition to storage and manipulation of integers used for indexing.%
\footnote{Shamos's original description of the real RAM
  \cite{Shamos-1978} supports the more general operations
  $\exp$ and $\ln$, from which one can derive $\sqrt[d]{\phantom{\imath}}$,
  but these features do not seem to be in common use in computational geometry.
  The original description also supports trigonometric functions,
  but this has fallen out of favor because of related undecidability results
  \cite{Laczkovich-2003} described below.
}
%
While popular in computational geometry for its simplicity, this model is not
directly realistic for a digital computer, because it does not bound the
required precision, which can grow without bound.
For example, the real RAM model crucially does not support converting
black-box real numbers into integers (e.g., via the floor function),
or else one can solve PSPACE \cite{s-pram-79} and
\#SAT \cite{bms-scram-85} in polynomial time.
The real RAM (without floor) is one of the models we will use to analyze
our algorithm.
(Refer ahead to Table~\ref{tab:ops}, last column, for the exact operations
we allow on a real RAM.)

The \defn{word RAM} \cite{Fredman-Willard-1993}
allows storage and manipulation of
$w$-bit integers (called \defn{words}) supporting arithmetic
($+, \allowbreak -, \allowbreak \times, \allowbreak \div,
\allowbreak \operatorname{mod}$), bitwise operations
($\textsc{and}, \allowbreak \textsc{or}, \allowbreak \textsc{xor},
\allowbreak \textsc{not}$), and comparisons ($<, \allowbreak >, \allowbreak =$)
in constant time.
Furthermore, the model makes the \defn{transdichotomous assumption}
that $w \geq \lg n$ where $n$ is the problem size
(named for how it bridges the problem and machine \cite{Fredman-Willard-1993}).
This assumption ensures that a single word can address the memory required
to store the $n$ inputs; it is essentially the minimal assumption that
suffices to enable efficient random access on a RAM.
Technically, we assume that the real RAM also includes all word RAM operations
(i.e., the real RAM can store and manipulate both black-box real numbers
and $w$-bit words where $w \geq \lg n$),
so that it can support random access and standard data structures.%
\footnote{Shamos's original description of the real RAM
  \cite{Shamos-1978} says that it supports
  ``Indirect accessing of memory (integer addresses only)'',
  presumably with the intent that such addresses should be formed via
  arithmetic on integers (as opposed to real arithmetic that happens to
  result in integers, which would be difficult to detect).
  Adding the entire uncontroversial word RAM model (which did not exist in 1978)
  enables more flexible address manipulation (e.g., word division and mod),
  so that the real RAM is just about adding black-box real numbers.}


The word RAM is the standard model of computation in data structures and much
of theoretical computer science, capturing the word-level parallelism in
real computers.  But it is highly restrictive for computational geometry
because it restricts the \emph{inputs} to be integers or rationals, whereas we
normally want to allow continuous real inputs (e.g., a point set or a polygon).
Recent work starting with \cite{Chan-Patrascu-I,Chan-Patrascu-II} achieves
strong results in computational geometry on the word RAM, but not all problems
adapt well to integer or rational inputs.
For example, in our quasigeodesics problem, should the input consist of
vertex coordinates restricted to be integers, or an intrinsic polyhedron
metric restricted to have integer edge lengths?
Each answer defines a different subset of problems,
and neither obviously captures all instances that we might want to solve.
For example, the regular dodecahedron and icosahedron cannot be represented
with rational vertex coordinates (and in 4D, rational vertex coordinates forbid
getting even a desired graph structure) \cite{Ziegler-2008}.

Our solution is to build a new model called the ``expression RAM''.
On the one hand, this model has a built-in notion of restricted real numbers,
thereby encompassing a form of the real RAM.
This enables problem inputs to be given as certain real numbers,
including integers and rationals but also roots
(though these incur a cost in any computation involving them).
As we show in Section~\ref{Polyhedral Inputs}, this model provides reductions
between the different possible representations of polyhedral inputs,
though at some cost.
On the other hand, our model can be implemented directly on
the word RAM (or the real RAM) with appropriate operation costs,
so its algorithms can be realistically run on digital computers.
We believe our model is the first to unify the simplicity of the real RAM with
guaranteed algorithmic performance as measured on the word RAM,
though our approach is closely based on the theory and practice of
exact geometric computation \cite{SepBound,CORE}.

%

\subsection{Expression RAM}
\label{sec:Expression RAM}


The \defn{expression RAM (Random Access Machine)} is a RAM model of computation
allowing storage and manipulation of real radical expressions over integers,
called ``expressions'' for short.
More precisely, an \defn{expression} represents a real number as
an ordered DAG with a specified source node%
\footnote{We use the term ``source'' instead of the more standard ``root''
  to avoid confusion with nodes that represent roots in radical expressions.}
(representing the overall value of the expression), where
each leaf corresponds to an integer,
each two-child node corresponds to a binary operator among $\{+,-,\times,\div\}$,
each one-child node corresponds to a unary operator $\sqrt[d]{\phantom{\imath}}$
for an integer $d$ stored within the node.%
%
\footnote{In our quasigeodesics algorithm, we only need square roots
  ($d=2$), but we define general fixed roots
  for potential future uses in other algorithms.
  %
  Another potential addition to the model is the expression
  ``$j$th smallest real root of the degree-$d$ polynomial with coefficients
  given by $d+1$ expressions'' \cite{SepBound,CORE}, but this would require
  additional work which we do not carry out here.}
%
A special case of an expression is an \defn{integer expression} which consists
of a single integer (leaf) and no operators.
Like the real RAM, we define the expression RAM to extend the word RAM,
so it also allows storage and manipulation of $w$-bit words
where $w$ is a model parameter satisfying $w \geq \lg n$.
To unify the two data types, we define a $b$-bit integer expression to be
explicitly encoded as $\lceil b/w \rceil$ words of $w$ bits,
which can be manipulated as usual by word RAM operations.
In particular, we can convert words into $w$-bit integer expressions
and vice versa in $O(1)$ time.

Table~\ref{tab:ops} defines the expression-RAM operations for forming
and evaluating expressions.
Constructing expressions (Operation~\ref{op:combine}) is essentially free,
but also does not compute anything other than an expression DAG,
which can then be input into a computation
(Operations~\ref{op:real}, \ref{op:B}, \ref{op:realbits}, and~\ref{op:floor}).
Roughly speaking, Operation~\ref{op:realbits} computes the integer part and
the first $b$ bits of the fractional part of $E$ (useful e.g.\ for
reporting or plotting approximate values), but the last bit may be incorrect.
An incorrect last bit is especially problematic when all the other bits are
zero, but it turns out that in this case the correct last bit must in fact be
zero (because we assume $b \geq B(E)$, where $B$ is defined in \eqref{eq:B}
in Section~\ref{sec:B} below and can be computed via Operation~\ref{op:B}).
Operations~\ref{op:real} and~\ref{op:floor}
(which are built on Operation~\ref{op:realbits})
show that this enables exact computation of sign, floor, and ceiling,
despite these depending on the value of the last bit.
(In our quasigeodesic algorithm,
we do not actually need the floors and ceilings of Operation~\ref{op:floor},
or the approximations of Operation~\ref{op:realbits},
but we describe how to easily implement them for possible future~use.)

The expression-RAM operations of Table~\ref{tab:ops} can be implemented on
the real RAM by representing an ``expression'' as a ``built-in real number'',
except for Operation~\ref{op:floor}
(assuming no floor or ceiling operation is available, as motivated in
Section~\ref{sec:real RAM floor}).
Thus we can state a single algorithm in terms of
Operations~\ref{op:combine}, \ref{op:real}, \ref{op:B}, and \ref{op:realbits},
and then analyze the running time both on the real RAM and on the word RAM.
The rightmost column of Table~\ref{tab:ops}
gives the real-RAM cost of each operation.
Operations~\ref{op:combine}, \ref{op:real}, and~\ref{op:B}
are directly supported in one operation on the real RAM
(where Operation~\ref{op:combine} now
actually computes the real number resulting from the expression).
Operation~\ref{op:realbits} requires a bit more effort,
and will be analyzed in Lemma~\ref{lem:real RAM} below.

\begin{table}
  \centering
  \tabcolsep=6pt
  \renewcommand\arraystretch{2}
  \def\REF#1{\\[3pt]{\footnotesize [#1]}}
  \def\NOREF{\\[3pt]{\footnotesize \phantom{[]}}}
  \begin{tabular}{cm{2.6in}ccc}
    && \multicolumn{2}{c}{Word RAM Time Cost}
    \\ \cline{3-4}
    & Expression RAM Operation
    & \makecell*{Recursive\\Cost\\Model}
    & \makecell*{Simple\\Cost\\Model}
    & \makecell*{Real\\RAM\\Cost}
    \\ \hline
    \makecell{\labelthis{op:combine}{O1}. \\ \\ \\ \\}
      & Given two expressions $E_1$ and $E_2$ (possibly integer expressions)
        and a positive integer $d$, construct
        the expression $E_1 + E_2$, $E_1 - E_2$, $E_1 \cdot E_2$,
        $E_1 / E_2$, or $\sqrt[d]{E_1}$.
      & \makecell{$O(1)$ \NOREF}
      & \makecell{$O(1)$ \NOREF}
      & \makecell{$O(1)$
          \REF{Lem.~\ref{lem:real RAM}}}
    \\
    \makecell{\labelthis{op:real}{O2}. \\ \\ \\}
      & Given an expression $E$, compute the sign of $E$,
        i.e., whether it is zero, positive, or negative.
      & \makecell{$O(T(E))$ 
          \REF{Thm.~\ref{thm:real/floor}}}
      & 
        \footnotesize\makecell{$O\Big( b^*(E)^2 \cdot{}$ \\
          $\big(32\,K(E)\big)^{2 |R(E)| + 2 H(E)+6} \Big)$
          \REF{Thm.~\ref{thm:simple}}}
      & \makecell{$O(1)$
          \REF{Lem.~\ref{lem:real RAM}}}
    \\
    \makecell{\labelthis{op:B}{O3}. \\ \\ \\}
      & Given an expression $E$, compute a number $B(E)$ such that
        $E = 0$ or ${1 \over 2^{B(E)}} \leq |E| \leq 2^{B(E)}$.
      & \footnotesize\makecell{$O\Big(\sum_{E' \in E} \lg B(E')\Big)$ \\
          ${}=O\Big(\#(E) \lg B(E)\Big)$
          \REF{Lem.~\ref{lem:B}}}
      & \footnotesize\makecell{$O\Big(\#(E)^2 \lg K(E) +{}$ \\
          $\#(E) \lg b^*(E)\Big)$
          \REF{Cor.~\ref{cor:simple B}}}
      & \makecell{$O(1)$
          \REF{Lem.~\ref{lem:real RAM}}}
    \\
    \makecell{\labelthis{op:realbits}{O4}. \\ \\ \\ \\ \\ \\}
      & Given an expression $E$ and a positive integer $b \geq B(E)$,
        \defn{$\frac{1}{2^b}$-compute} $E$:
        compute an interval $[l,u]$ of rational numbers $l,u$
        (represented by quotients of integer expressions)
        such that $[l,u]$ contains the value of~$E$
        and has length $u-l \leq \frac{1}{2^b}$.
      & \makecell{$O(T(E, b))$
          \REF{Thm.~\ref{thm:realbits}}}
      & \footnotesize\makecell{$O\left(b \cdot \big(8\,K(E)\big)^{H(E)+2}\right)$
          \REF{Thm.~\ref{thm:somewhat simple}}}
      & \makecell{$O(b)$
          \REF{Lem.~\ref{lem:real RAM}}}
    \\
    \makecell{\labelthis{op:floor}{O5}. \\ \\}
      & Given an expression $E$, compute integer expression
        $\lfloor E \rfloor$ or $\lceil E \rceil$.
      & \makecell{$O\big(T(E) \cdot$ \\ $S(E) \big)$ 
          \REF{Thm.~\ref{thm:real/floor}}}
      & \footnotesize\makecell{$O\Big( b^*(E)^2 \cdot{}$ \\
          $\big(32\,K(E)\big)^{2 |R(E)| + 2 H(E)+6} \Big)$
          \REF{Thm.~\ref{thm:simple}}}
      & n/a
    \\
    \multicolumn{1}{r}{$\bullet$}
      & \multicolumn{4}{p{6in}}{
          In Operations \ref{op:real}, \ref{op:realbits}, and \ref{op:floor},
          if $E$ contains any invalid computation ---
          division by zero, or even roots of negative numbers
          ---
          then these computations simply produce a special result of ``undefined''
          (following \cite{SepBound}).
        }
    \\
  \end{tabular}

  \caption{Expression operations supported by the expression RAM,
    and their costs on the word RAM (middle columns)
    and on the real RAM (right column).
    (In addition, the expression RAM supports all word RAM operations.)
    Section~\ref{sec:B} defines $B(E)$ and $D(E)$;
    Section~\ref{sec:recursive cost model} defines $S(E)$, $T(E)$, and $T(E,b)$; and
    Section~\ref{sec:simple cost model} defines
    $\#(E)$, $b^*(E)$, $K(E)$, $R(E)$, and $H(E)$.}
  \label{tab:ops}
\end{table}

In the remainder of this section, we define and prove the various time bounds
in Table~\ref{tab:ops}, including two different cost models on the word RAM.
In particular, we develop algorithms for the expression-RAM operations
on the real RAM and word RAM, and analyze these algorithms
to prove the claimed time bounds in the underlying models.
Our ``recursive'' cost model (Section~\ref{sec:recursive cost model})
specifies the running time of expression-RAM operations on the word RAM
in terms of a recurrence over the structure of the expression.
This cost model is more difficult to work with,
but may be of use in certain situations
where the expressions have specific form.
It will also serve as the foundation
for our ``simple'' cost model (Section~\ref{sec:simple cost model}),
which bounds the running time of expression-RAM operations on the word RAM
in terms of parameters that are more natural and easier to bound.
This model also includes algorithms to remove duplicate roots in an expression,
so that the running time can depend on the number of distinct roots.

It is unlikely that trigonometric functions could be added
to the expression RAM, as it is unclear how to bound their separation
from zero in general and thus obtain word-RAM algorithms for
Operation~\ref{op:real}.
For example, it is undecidable to determine whether a single-variable function
built from operators $+$, $-$, $\times$, $\sin$, $\exp$, and composition
is always nonnegative \cite{Laczkovich-2003}.


\subsubsection{Expression Bounds: $B(E)$, $C(E)$, and $D(E)$}

\label{sec:B}
At the center of our approach is a ``separation bound'' from \cite{SepBound}
that limits how close to zero an expression $E$ can be without
actually being zero.
We express this bound in terms of three functions from expressions to
positive integers,
which are simplifications of functions from \cite[Theorem~1]{SepBound}.
Specifically, define the functions $B(E)$ (``bit bound''),
$C(E)$ (``calculation complexity''), and $D(E)$ (``degree disadvantage'') as follows:
\begin{align}
\label{eq:B}
B(E) &= C(E) \cdot D(E),
\\
\label{eq:C}
C(E) &= \begin{cases}
  \max\{1,\lceil \lg |E| \rceil\} & \text{if $E$ is an integer expression,}
  \\
  2 [C(E_1) + C(E_2)]
      & \text{if $E = E_1 \circ E_2$ for some operator $\circ \in \{+,-,\times,\div\}$,}
  \\
  2 \, C(E_1)
      & \text{if $E = \sqrt[d]{E_1}$,}
\end{cases}
\end{align}
\begin{align}
%
%
  D(E)
  &= \text{the product of the degrees $d$ of all
        radical expressions $\sqrt[d]{E_1}$ in~$E$} \label{eq:D}
  \\
  &\leq \begin{cases}
    1 & \text{if $E$ is an integer expression,}
    \\
    D(E_1) \cdot D(E_2)
        & \text{if $E = E_1 \circ E_2$ for some operator $\circ \in \{+,-,\times,\div\}$,}
    \\
    d \cdot D(E_1)
        & \text{if $E = \sqrt[d]{E_1}$.}
    \label{ineq:D}
  \end{cases}
\end{align}
(Inequality \eqref{ineq:D} upper bounding $D(E)$ is an equality for trees,
but can over-count for DAGs.)

Crucially, these functions provide bounds on how large an expression $E$
can get, and on how close to zero a nonzero expression $E$ can get:

\begin{theorem} {\rm \cite[Theorem~1]{SepBound}} \label{thm:sep}
  Any real algebraic expression $E \neq 0$ satisfies
  \begin{equation} \label{eq:CD bounds}
    {1 \over 2^{B(E)}} \leq |E| \leq 2^{B(E)},
  \end{equation}
  or taking logarithms,
  \begin{equation} \label{eq:B bound}
    -B(E) \leq \lg |E| \leq B(E).
  \end{equation}
\end{theorem}

\begin{proof}
  In fact, \cite[Theorem~1]{SepBound} states that
  \begin{equation} \label{eq:SepBound}
    {1 \over l(E) \cdot u(E)^{D(E)-1}} \leq |E| \leq u(E) \cdot l(E)^{D(E)-1},
  \end{equation}
  where $D(E)$ is defined the same, and $l(E)$ and $u(E)$ are
  defined recursively as in the cases below.%
  \footnote{The functions $l(E)$ and $u(E)$ from \cite{SepBound}
    are used only within our proof of Theorem~\ref{thm:sep}, and
    are unrelated to the rational numbers $l$ and $u$ output by
    Operation~\ref{op:realbits}.}
  We have simplified the statement of the theorem to give a
  rough upper bound $2^C$ that applies in all cases, related to their functions
  $u,l$ via $2^{C(E)} \geq u(E) \cdot l(E)$.
  The base case (for integer expressions~$E$) has been modified
  to guarantee $C(E) \geq 1$,
  which allows us to simplify some of the recursive formulas.
  We have also loosened the bound by exponentiating both $l(E)$ and
  $u(E)$ by $D(E)$ (which is also $1$ larger than needed).

  We prove by structural induction on the expression $E$ that
  $2^{C(E)} \geq u(E) \cdot l(E)$.
  First note that $2^{C(E)} \ge 2$ and $u(E),l(E) \ge 1$ always.
  Then we proceed in cases:
\begin{enumerate}
\item
If $E$ is an integer expression $N$, 
then $u(E) = |N|$ and $l(E) = 1$, 
so $u(E) \cdot l(E) = |N| \le \max\{2,2^{\lceil \lg |E|\rceil}\} = 2^{C(E)}$.
\item
If $E = E_1 \circ E_2$ for some operator $\circ \in \{+,-\}$,
then $u(E) = u(E_1) \cdot l(E_2) + l(E_1) \cdot u(E_2) \le 2\max\{u(E_1) \cdot l(E_2), l(E_1) \cdot u(E_2)\}$ and $l(E) = l(E_1) \cdot l(E_2)$,
so
\begin{align*}
  u(E) \cdot l(E)
  &\leq 2\max\{u(E_1) \cdot l(E_1) \cdot l(E_2)^2, l(E_1)^2 \cdot l(E_2) \cdot u(E_2)\}
  \\
  &\leq 2 \, u(E_1) \cdot l(E_1) \cdot u(E_2) \cdot l(E_2) \cdot \max\{u(E_2)  \cdot l(E_2), u(E_1)  \cdot l(E_1)\}
  \\
  &\leq 2 \cdot 2^{C(E_1)} \cdot 2^{C(E_2)} \cdot \max\{2^{C(E_2)}, 2^{C(E_1)}\},
  \\
  &= 2^{1+C(E_1)+C(E_2)+\max\{C(E_2), C(E_1)\}}.
\end{align*}
Because $\min\{C(E_1),C(E_2)\} \ge 1$,
we have that $u(E) \cdot l(E)
\le 2^{2 C(E_1) + 2 C(E_2)} = 2^{C(E)}$, as claimed.
\item
If $E = E_1 \cdot E_2$,
then $u(E) = u(E_1) \cdot u(E_2)$ and $l(E) = l(E_1) \cdot l(E_2)$, so
$$u(E) \cdot l(E)
= u(E_1) \cdot l(E_1) \cdot l(E_2) \cdot u(E_2)
\le 2^{C(E_1)} \cdot 2^{C(E_2)}
= 2^{C(E_1) + C(E_2)}
= 2^{C(E)},$$
as claimed.
\item
If $E = E_1 / E_2$,
then $u(E) = u(E_1) \cdot l(E_2)$ and $l(E) = l(E_1) \cdot u(E_2)$, so
$$u(E) \cdot l(E)
= u(E_1) \cdot l(E_1) \cdot l(E_2) \cdot u(E_2)
\le 2^{C(E_1)} \cdot 2^{C(E_2)}
= 2^{C(E_1) + C(E_2)}
= 2^{C(E)},$$
as claimed.
\item
If $E = \sqrt[d]{E_1}$ and $u(E_1) \ge l(E_1)$,
then $u(E) = u(E_1)^{\frac 1 d} \cdot l(E_1)^{\frac{d-1}d}$
and $l(E) = l(E_1)$,
so $u(E) \cdot l(E)
= u(E_1)^{\frac 1 d} \cdot l(E_1)^{\frac{2d-1}d}
\le \left(2^{C(E_1)}\right)^2
= 2^{2 C(E_1)} = 2^{C(E)}$, as claimed.
\item
If $E = \sqrt[d]{E_1}$ and $u(E_1) < l(E_1)$,
then $u(E) = u(E_1)$ and $l(E) = u(E_1)^{\frac{d-1}d} \cdot l(E_1)^{\frac{1}d}$,
so $u(E) \cdot l(E)
= u(E_1)^{\frac{2d-1}d} \cdot l(E_1)^{\frac{1}d}
\le \left(2^{C(E_1)}\right)^2
= 2^{2 C(E_1)} = 2^{C(E)}$, as claimed.
\end{enumerate}

Finally, (\ref{eq:SepBound}) shows that $|E| \le u(E) \cdot l(E)^{D(E)-1}$,
which is at most $(u(E) \cdot l(E))^{D(E)} \leq 2^{C(E) \cdot D(E)}$;
and that $|E| \ge u(E)^{1-D(E)} \cdot l(E)^{-1}$,
which is at least $(u(E) \cdot l(E))^{-D(E)} \geq 2^{- C(E) \cdot D(E)}$,
as desired.
\end{proof}

\subsubsection{Real RAM Costs}

Next we show how to implement the expression-RAM operations
(except Operation~\ref{op:floor}) efficiently on the real RAM.



\begin{lemma} \label{lem:real RAM}
  Operations~\ref{op:combine},~\ref{op:real},~\ref{op:B}, and~\ref{op:realbits}
  can be implemented on the real RAM in the time bounds stated in
  Table~\ref{tab:ops}.
\end{lemma}


\begin{proof}
  Operation~\ref{op:combine} constructs the DAG node
  and actually performs the real-number computation,
  making Operation~\ref{op:real} run in constant time.

  Operation~\ref{op:B} computes $C(E)$ and an upper bound $D'(E)$ on $D(E)$
  using $O(1)$ operations on those values
  from the input expressions to implement recursions
  \eqref{eq:C} and \eqref{ineq:D};
  then we use $B(E) = C(E) \cdot D'(E)$.

  Finally, to perform Operation~\ref{op:realbits} in $O(b)$ time,
  we first compute $B(E)$ using Operation~\ref{op:B} in $O(1)$ time.
  Start with the interval $[-2^{B(E)}, 2^{B(E)}]$
  (where $B(E)$ is given to us by the last Operation~\ref{op:B}),
  which by Theorem~\ref{thm:sep} contains the given real number~$E$.
  We perform a binary search on the interval $[l, u]$.
  At each step, we compute the midpoint $m = (l+u)/2$
  via Operation~\ref{op:combine},
  compare $m$ against $E$ via Operation~\ref{op:real},
  and set $l$ or $u$ to $m$ according to whether $m \geq E$ or $m \leq E$.
  Each step takes $O(1)$ time, preserves $E \in [l, u]$,
  and reduces the interval length $u-l$ by a factor of~$2$.
  After $B(E) \leq b$ steps,
  the interval length is $\leq 2$.
  After another $b+1$ steps, the interval length is $\leq 1/2^b$.
  The total number of steps and thus running time is $O(b)$.
\end{proof}

\subsubsection{Recursive Cost Model}
\label{sec:recursive cost model}

In the recursive cost model, the expression-RAM operations have the
following time costs on the word RAM.
Operation~\ref{op:realbits} runs in $O(T(E,b))$ time,
Operation~\ref{op:real} runs in $O(T(E))$ time,
and Operation~\ref{op:floor} runs in $O(T(E)^2)$ time,
where $T(E) = T(E, B(E))$ and
$T(E,b)$ is given by the following recurrence:
\begin{equation} \label{eq:T}
T(E,b) = \begin{cases}
  S(E,b) & \text{\hspace*{-5em}if $E$ is an integer expression,}
  \\
    T(E_1, b + 1) + T(E_2, b + 1) + S(E,b)
      & \text{if $E = E_1 \pm E_2$,}
  \\
    T(E_1, b + B(E)) + T(E_2, b + B(E)) + S(E,b)
      & \text{if $E = E_1 \cdot E_2$,}
  \\
    T(E_1, b + B(E)) + T(E_2, b + 2 B(E) + 2) + S(E,b)
      & \text{if $E = E_1 / E_2$,}
  \\
  T(E_1, 2 d b) + d \cdot S(E,b)
      & \text{if $E = \sqrt[d]{E_1}$.}
\end{cases}
\end{equation}
%
Here $S(E,b)$ represents the maximum number of bits in the numerator or
denominator of $l$ or $u$ in the interval $[l,u]$ output by
Operation~\ref{op:realbits} on input $E$ and $b$;
$S(E,b)$ is given by another recurrence,
with the same recursive structure as $T(E,b)$
but differing in the additive terms:
\begin{equation} \label{eq:S}
S(E,b) = \begin{cases}
  B(E) = C(E) = b' & \text{\hspace*{-5em}if $E$ is a $b'$-bit integer expression,}
  \\
    S(E_1, b + 1) + S(E_2, b + 1) + 1
      & \text{if $E = E_1 \pm E_2$,}
  \\
    S(E_1, b + B(E)) + S(E_2, b + B(E))
      & \text{if $E = E_1 \cdot E_2$,}
  \\
    S(E_1, b + B(E)) + S(E_2, b + 2 B(E) + 2)
      & \text{if $E = E_1 / E_2$,}
  \\
  S(E_1, 2 d b) + 3 d b 
      & \text{if $E = \sqrt[d]{E_1}$.}
\end{cases}
\end{equation}

To give some intuition about how $S$ and $T$ behave,
we prove that their dependence on $b$ is at most linear:

\begin{lemma} \label{lem:T affine}
  For a fixed expression~$E$,
  $T(E,b)$ is an affine function $c \cdot b + c'$ of $b$,
  where both $c$ and $c'$ are nonnegative integers.
  The same is true for $S(E,b)$.
\end{lemma}

\begin{proof}
  Let $b_0$ represent the parameter $b$ at the top level of the recurrence.
  The recurrences for $T(E,b)$ and $S(E,b)$ are all sums of terms.
  Expanding out all occurrences of $T$ and $S$, we obtain a sum of terms
  all of the form $1$, $b'$, and $3 d b$ for positive integers $b',d$
  (dependent only on~$E$).
  In addition, some terms are multiplied by $d$
  (from the last case of $T(E,b)$).
  The $3 d b$ terms use a derived value of $b$, which is formed from $b_0$
  by repeatedly adding $1$, adding $B(E)$, adding $2 B(E)$,
  or multiplying by $2 d$.
  By induction, every $b$ value is an affine function of $b_0$
  with nonnegative integer coefficients, and thus so is each term $3 d b$.
  Therefore the sum of terms resulting from expanding $T(E,b_0)$ or $S(E,b_0)$
  is also such an affine function.
\end{proof}

\begin{corollary} \label{cor:growth in b}
  For any constant $c \geq 1$, $T(E,c \cdot b) \leq c \cdot T(E,b)$.
  The same is true for $S(E,b)$.
\end{corollary}

We also analyze how quickly $T(E)$ grows when $E$ gains an operation:

\begin{lemma} \label{lem:T minus}
  For two expression DAGs $E_1$ and $E_2$,
  $T(E_1 \pm E_2) = O\big(T(E_1) \cdot T(E_2)\big)$.
\end{lemma}

\begin{proof}
  First we show that $B(E_1 \pm E_2) = O\big(B(E_1) \cdot B(E_2)\big)$:
  \begin{align*}
  D(E_1 \pm E_2) &= D(E_1) \cdot D(E_2) && \text{by \eqref{eq:D}}
  \\
  C(E_1 \pm E_2) &= 2 \, [C(E_1) + C(E_2)] && \text{by \eqref{eq:C}}
  \\
  B(E_1 \pm E_2)
  &= C(E_1 \pm E_2) \cdot D(E_1 \pm E_2)
  \\
  &= 2 \, [C(E_1) + C(E_2)] \cdot D(E_1) \cdot D(E_2)
  \\
  &\leq 2 \, [2 \, C(E_1) \cdot C(E_2)] \cdot D(E_1) \cdot D(E_2) && \text{because $C(E_i) \geq 1$}
  \\
  &= 4 \, B(E_1) \cdot B(E_2).
  \end{align*}
  Define $b = B(E_1 \pm E_2) \leq 4 \, B(E_1) \cdot B(E_2)$.
  Now we compute $T(E_1 \pm E_2) = T(E_1 \pm E_2,b)$ as follows:
  $$
  \arraycolsep=0pt
  \def\arraystretch{1.2}
  \begin{array}{ccccccccccll}
  T(E_1 \pm E_2, b)
  &{}={}& T(E_1, b+1) &{}+{}& T(E_2, b+1) &{}+{}& \multicolumn{3}{l}{S(E_1 \pm E_2,b)} &
  &{\quad}& \text{by \eqref{eq:T}}
  \\
  &{}={}& T(E_1, b+1) &{}+{}& T(E_2, b+1) &{}+{}& S(E_1,b+1) &{}+{}& S(E_2,b+1) &{}+ 1
  && \text{by \eqref{eq:S}}
  \\
  &{}\leq{}& 2 \cdot T(E_1,b) &{}+{}& 2 \cdot T(E_2,b) &{}+{}& 2 \cdot S(E_1,b) &{}+{}& 2 \cdot S(E_2,b) &{}+ 1
  && \text{by Corollary~\ref{cor:growth in b} and $b \geq 1$}
  \\
  &{}\leq{}& 2 \cdot T(E_1,b) &{}+{}& 2 \cdot T(E_2,b) &{}+{}& 2 \cdot T(E_1,b) &{}+{}& 2 \cdot T(E_2,b) &{}+ 1,
  \end{array}$$
  where the last inequality follows from
  $T(E,b) \geq S(E,b)$ by inspection of~\eqref{eq:T}.

  Therefore $T(E_1 \pm E_2) = T(E_1 \pm E_2,b) = O(T(E_1,b) + T(E_2,b))$.
  By Lemma~\ref{lem:T affine},
  $T(E_i,b) = c_i \cdot b + c'_i$ for nonnegative integers $c_i,c'_i$
  (which depend on~$E_i$).
  Thus we can bound the sum
  %
  \begin{align*}
  T(E_1,b) + T(E_2,b)
  &= (c_1 \cdot b + c'_1) + (c_2 \cdot b + c'_2)
  \\
  &= (c_1 + c_2) \cdot b + (c'_1 + c'_2)
  \\
  &\leq (c_1 + c_2) \cdot \big(4 \cdot B(E_1) \cdot B(E_2)\big) + (c'_1 + c'_2)
  \\
  &\leq 4 \cdot \big(c_1 \cdot B(E_1) + c'_1\big) \cdot \big(c_2 \cdot B(E_2) + c'_2\big)
  \\
  &= 4 \cdot T\big(E_1,B(E_1)\big) \cdot T\big(E_2,B(E_2)\big)
  \\
  &= 4 \cdot T(E_1) \cdot T(E_2).
  \end{align*}
  In conclusion, $T(E_1 \pm E_2) = O\big(T(E_1) \cdot T(E_2)\big)$
  as desired.
\end{proof}

We also give a weak relation between $T(E)$ and $B(E)$:

\begin{lemma} \label{lem:S vs B}
  $\lg B(E) \leq 2 \, S(E)$.
\end{lemma}

\begin{proof}
  First we upper bound $C(E) \leq C'(E)$
  by increasing the base case of \eqref{eq:C} so that $C'(E) \geq 2$:
  \begin{align*}
  C'(E)
  &= \begin{cases}
    \max\{2,b'\} & \text{if $E$ is a $b'$-bit integer expression,}
    \\
    2 [C'(E_1) + C'(E_2)]
        & \text{if $E = E_1 \circ E_2$ for some operator $\circ \in \{+,-,\times,\div\}$,}
    \\
    2 \, C'(E_1)
        & \text{if $E = \sqrt[d]{E_1}$.}
  \end{cases}
  \end{align*}
  Note that $C'(E) \geq 4$ for any nonleaf expression $E$.
  Taking logarithms,
  \begin{align*}
  \lg C'(E)
  &= \begin{cases}
    \max\{1,\lg b'\} & \text{if $E$ is a $b'$-bit integer expression,}
    \\
    \lg [C'(E_1) + C'(E_2)] + 1
        & \text{if $E = E_1 \circ E_2$ for some operator $\circ \in \{+,-,\times,\div\}$,}
    \\
    \lg C'(E_1) + 1
        & \text{if $E = \sqrt[d]{E_1}$.}
  \end{cases}
  \end{align*}
  For $x,y \geq 2$, $\lg$ is subadditive:
  $\lg (x+y) \leq \lg x + \lg y$.
  In fact, for $x,y \geq 4$,
  $\lg (x+y) + 1 \leq \lg x + \lg y$.
  Thus
  $\lg [C'(E_1) + C'(E_2)] + 1 \leq \lg C'(E_1) + \lg C'(E_2)$
  if $E_1,E_2$ are nonleaf expressions,
  whereas we need the $+ 1$ if $E_1$ or $E_2$ is a leaf expression.
  By amortization, we can account for each needed $+ 1$ by charging
  to one of the child leaves, and thus we get an upper bound of
  $\lg C'(E) \leq X(E)$ where $X(E)$ moves the $+1$ to the leaf case:
  \begin{align*}
  X(E)
  &= \begin{cases}
    \max\{1,\lg b'\} + 1 & \text{if $E$ is a $b'$-bit integer expression,}
    \\
    X(E_1) + X(E_2)
        & \text{if $E = E_1 \circ E_2$ for some operator $\circ \in \{+,-,\times,\div\}$,}
    \\
    X(E_1) + 1
        & \text{if $E = \sqrt[d]{E_1}$.}
  \end{cases}
  \end{align*}
  By \eqref{eq:B} and \eqref{eq:D},
  we can similarly upper bound $\lg B(E) = \lg C(E) + \lg D(E) \leq Y(E)$
  by adding $d$ to the radical case:
  \begin{align*}
  Y(E)
  &= \begin{cases}
    \max\{1,\lg b'\} + 1 & \text{if $E$ is a $b'$-bit integer expression,}
    \\
    X(E_1) + X(E_2)
        & \text{if $E = E_1 \circ E_2$ for some operator $\circ \in \{+,-,\times,\div\}$,}
    \\
    X(E_1) + d+1
        & \text{if $E = \sqrt[d]{E_1}$.}
  \end{cases}
  \end{align*}
  By \eqref{eq:S}, $Y(E) \leq 2 \, S(E,b)$:
  in particular, $\max\{1,\lg b'\} + 1 \leq 2 b'$ for all $b' \geq 1$
  and $d+1 \leq 2 \cdot 3 d b$ for all $b, d \geq 1$.
  Thus we obtain the desired upper bound $\lg B(E) \leq Y(E) \leq 2 \, S(E,b)$.
\end{proof}

Now that we have stated the target time bounds, we develop algorithms
for the expression-RAM operations and prove that they achieve these bounds.
First we need a lemma for working with arbitrary-precision integers
and rationals.

\begin{lemma} \label{lem:rationals}
  In $O(b)$ time on the word RAM,
  we can add, subtract, multiply, integer divide, and compare
  $b$-bit integers; and we can add, subtract, multiply, divide, and compare
  rationals represented as quotients of $b$-bit integers.
\end{lemma}

\begin{proof}
  Integer addition and subtraction can be implemented
  using the grade-school algorithms, working in base $2^w$,
  with a running time of $O(b/w) = O(b)$ time.
  Integer comparison follows from subtraction and checking the resulting sign.
  An $O(b)$-time algorithm for integer multiplication on a word RAM with
  $w = \Omega(\lg b)$ goes back to Sch\"onhage in 1980
  \cite{Schoenhage-1980}; see also \cite{Fuerer-2014} and
  \cite[\S 4.3.3.C, p.~311 and Exercise 4.3.3-12, p.~317]{TAOCP2}.
  Furthermore, integer division can be implemented by a series of
  integer multiplications of geometrically decreasing size
  \cite[\S 4.3.3.D, pp.~311--313]{TAOCP2},
  so also in $O(b)$ time on a word RAM.

  Each rational arithmetic operation reduces to $O(1)$ integer arithmetic
  operations via the grade-school identities:
  $$
    \frac{p}{q} + \frac{r}{s} = \frac{p \cdot s + r \cdot q}{q \cdot s};
    \quad
    \frac{p}{q} - \frac{r}{s} = \frac{p \cdot s - r \cdot q}{q \cdot s};
    \quad
    \frac{p}{q} \cdot \frac{r}{s} = \frac{p \cdot r}{q \cdot s};
    \quad
    \left. \frac{p}{q} \right/ \frac{r}{s} = \frac{p \cdot s}{q \cdot r}.
  $$
  The sign of a rational number is the product of signs of its numerator and
  denominator, so we can support comparisons via subtraction.
  (We do not bother to reduce fractions,
  as that does not help in the worst case.)
\end{proof}

As the proof shows, the cost of integer addition, subtraction, and comparison
can be tightened to $O(b/w)$ time.
Recently, $b$-bit integer multiplication was shown to be possible in
$O(b \lg b)$ bit operations \cite{integer-multiply}, but to our knowledge
it remains open whether it is possible to achieve $o(b)$ time
(ideally, $O(b/w)$ time) on the word RAM.
Thus we opt for the simpler universal upper bound of $O(b)$ for all operations,
though perhaps this could be lowered in the future.

Next we implement the simplest expression-RAM operations,
Operation~\ref{op:combine} and~\ref{op:B}.
Operation~\ref{op:combine} just constructs the DAG node, so takes constant time.

\begin{lemma} \label{lem:B}
  Operation~\ref{op:B} can be implemented in
  $O\left(\sum_{E' \in E} \lg B(E')\right) = O\big(\#(E) \lg B(E)\big)$ time,
  where $E' \in E$ denotes $E'$ being a node of the expression tree~$E$,
  and $\#(E)$ is the number of nodes in expression~$E$.
  In the same time, we can compute $B(E')$, $C(E')$, and $D(E')$
  for all $E' \in E$.
\end{lemma}

\begin{proof}
  We perform a postorder traversal of the given expression DAG $E$.
  At a node
  $E' \in \{E_1 + E_2, E_1 - E_2, E_1 \cdot E_2, E_1 / E_2, \sqrt[d]{E_1}\}$,
  we compute $C(E')$ using the already computed values $C(E_i)$
  via the recurrence (\ref{eq:C}).
  This recurrence involves $O(1)$ additions and multiplications
  on $O(\lg B(E'))$-bit integers, so by Lemma~\ref{lem:rationals},
  takes $O(\lg B(E'))$ time.
  To compute $D(E')$, we maintain a running product of visited radical nodes.
  Each product is between a number at most $D(E')$ and a number at most
  $K(E') \leq D(E')$,
  so by Lemma~\ref{lem:rationals}, takes $O(\lg D(E')) = O(\lg B(E'))$ time.
  We can then compute $B(E') = C(E') \cdot D(E')$ in $O(\lg B(E'))$ time
  by Lemma~\ref{lem:rationals}.
  In total over the $\#(E)$ nodes,
  the postorder traversal costs
  $O\left(\sum_{E' \in E} \lg B(E')\right) = O\big(\#(E) \lg B(E)\big)$ time.
\end{proof}

Now we implement the core expression-RAM operation, Operation~\ref{op:realbits}.
Operations~\ref{op:real} and~\ref{op:floor} will be implemented on top of this
operation.

\begin{theorem} \label{thm:realbits}
  Operation~\ref{op:realbits} can be implemented on a word RAM in
  $O(T(E,b))$ time.
\end{theorem}


\begin{proof}
  We show how to $\epsilon$-compute $E$ where $\epsilon = \frac{1}{2^b}$
  and $b \geq B(E) \geq 1$ recursively in cases.
  Roughly speaking, each expression asks its children expressions
  for increasing precision $b'$ so that, when we combine the intervals
  from the children, we meet the desired interval length bound at the parent.
  Along the way, we keep track of how many bits $S$ are needed to represent the
  intervals themselves.
  We apply standard identities for interval arithmetic, specifically
  addition, subtraction, and multiplication; see
  \cite{Gibb-1961} and \cite[\S 2.2]{Moore-1979}.

  More formally, we prove by induction on the number of nodes in input
  expression $E$ that Operation~\ref{op:realbits}
  (1)~runs in $O(T(E,b))$ time, and
  (2)~produces an interval of rational numbers whose numerators
  and denominators each have at most $S(E,b)$ bits (ignoring the sign bit).
  Assume by induction that smaller expressions satisfy these
  two induction hypotheses.

  First, at the top level (not recursively),
  we apply Lemma~\ref{lem:B} to compute $B(E')$,
  for all nodes $E' \in E$, in $O\left(\sum_{E' \in E} \lg B(E')\right)$ time.
  By Lemma~\ref{lem:S vs B},
  this cost can be absorbed by charging $\lg B(E')$ to the $S(E',b')$
  we already pay at each node $E' \in E$
  in the $T(E,b)$ recurrence \eqref{eq:T}.


  \paragraph{Addition and subtraction.}
  To $\epsilon$-compute $E = E_1 \circ E_2$ where $\circ \in \{+,-\}$,
  we first recursively $\epsilon'$-compute $E_1$ and $E_2$
  where $\epsilon' = \epsilon/2$.
  In other words, the recursive call is with $b'=b+1$.
  Call the resulting intervals $[l_1,u_1]$ and $[l_2,u_2]$.
  Then $[l_1 \circ l_2,u_1 \circ u_2]$ is an $\epsilon$-computation of
  $E=E_1 \circ E_2$:
  \begin{align*}
  u_1 \circ u_2 - (l_1 \circ l_2)
  &= (u_1 - l_1) \circ (u_2 - l_2)
  \\
  &\leq \epsilon' + \epsilon' \quad \text{(by triangle inequality when $\circ$ is $-$)}
  \\
  &\leq 2 \epsilon' = \epsilon.
  \end{align*}
  We can compute the interval $[l_1 \circ l_2,u_1 \circ u_2]$ of rationals
  using Lemma~\ref{lem:rationals}.
  By induction, the numerators and denominators of $l_i$ and $u_i$ have
  at most $S(E_i,b')$ bits, so the numerators and denominators of the
  sums/differences $l_1 \circ l_2$ and $u_1 \circ u_2$ have at most
  \begin{align*}
    \max\{ S(E_1,b'), S(E_2,b') \} + 1
    &\leq S(E_1,b') + S(E_2,b') + 1
    \\
    &= S(E_1,b+1) + S(E_2,b+1) + 1
    \\
    &= S(E,b)
  \end{align*}
  bits, and computing them takes $O(S(E,b))$ time beyond the recursive calls.

  \paragraph{Multiplication.}
  To $\epsilon$-compute $E = E_1 \cdot E_2$,
  we first recursively $\epsilon'$-compute $E_1$ and $E_2$
  where $\epsilon' = \epsilon/2^{B(E)}$.
  In other words, the recursive call is with $b' = b + B(E)$,
  where $B(E)$ was already computed above.
  Call the resulting intervals $[l_1,u_1]$ and $[l_2,u_2]$.
  To properly handle signs, we can take all pairwise products
  and take their min and max \cite{Gibb-1961,Moore-1979}:
  $[\min\{l_1 \cdot l_2, l_1 \cdot u_2, u_1 \cdot l_2, u_1 \cdot u_2\},
    \max\{l_1 \cdot l_2, l_1 \cdot u_2, u_1 \cdot l_2, u_1 \cdot u_2\}]$
  contains the value of $E = E_1 \cdot E_2$.
  In fact, we claim it is an $\epsilon$-computation of $E$.

  There are $16 = 4^2$ cases for the choices of max and min.
  To enumerate the cases, we introduce some notation:
  for $x \in \{l, u\}$, define $\overline x$ is the other of $l$ and~$u$,
  i.e., $\overline x \in \{l, u\} \setminus \{x\}$;
  and for $i \in \{1, 2\}$, define
  $\overline i = 3 - i \in \{1, 2\} \setminus \{i\}$.
  Note that all terms in the min and max are of the form $x_1 \cdot y_2$,
  or by symmetry and commutativity, of the form $x_i \cdot y_{\overline i}$.
  We split the cases into groups according to how many of
  $l_1, l_2, u_1, u_2$ appear.
  \begin{enumerate}
  \item Exactly two of $l_1, l_2, u_1, u_2$ appear.
  These four of the sixteen cases have the form
  $[x_1 y_2, x_1 y_2]$ where $x,y \in \{l,u\}$.
  Then the max equals the min, so we get a $0$-computation.
  \item Exactly three of $l_1, l_2, u_1, u_2$ appear.
  These eight of the sixteen cases have the form
  $[x_i \cdot y_{\bar i}, x_i \cdot \bar y_{\bar i}]$.
  where $x,y \in \{l,u\}$ and $i \in \{1,2\}$.
  (Only four of these cases have positive length.)
  Then the length of the interval can be bounded as follows:
  \begin{align*}
  |x_i \cdot y_{\bar i} - x_i \cdot \bar y_{\bar i}|
  &= |x_i \cdot (y_{\bar i} - \bar y_{\bar i})|
  \\
  &\leq |x_i| \cdot |y_{\bar i} - \bar y_{\bar i}|
  \\
  &= |x_i| \cdot (u_{\bar i} - l_{\bar i})
  \\
  &\le \left( |E_i| + \epsilon' \right) \cdot \epsilon'
  && \text{because $E_i \in [l_i, u_i]$}
  \\
  &\le \left( 2^{B(E_i)} + \epsilon' \right) \cdot \epsilon'
  && \text{by Theorem~\ref{thm:sep}}
  \\
  &\le 2 \cdot 2^{B(E_i)} \cdot \epsilon'
  && \text{because $\epsilon' \leq 1 \leq 2^{B(E_i)}$}
  \\
  &\le 2^{B(E)} \cdot \epsilon'
  && \text{because $D(E) \geq D(E_i)$ and $C(E) \geq C(E_i)+1$}
  \\
  &= \epsilon.
  \end{align*}
  \item All four of $l_1, l_2, u_1, u_2$ appear.
  These four of the sixteen cases have the form
  $[x_1 \cdot y_2, \bar x_1 \cdot \bar y_2]$.
  where $x,y \in \{l,u\}$.
  (Only two of these cases have positive length.)
  Then the length of the interval can be bounded as follows:
  \begin{align*}
  |x_1 \cdot y_2 - \bar x_1 \cdot \bar y_2|
  &= |x_1 \cdot y_2 - x_1 \cdot \bar y_2 + x_1 \cdot \bar y_2 - \bar x_1 \cdot \bar y_2|
  \\
  &= |x_1 \cdot (y_2 - \bar y_2) + (x_1 - \bar x_1) \cdot \bar y_2|
  \\
  &\leq |x_1| \cdot |y_2 - \bar y_2| + |x_1 - \bar x_1| \cdot |\bar y_2|
  \\
  &= |x_1| \cdot (u_2 - l_2) + (u_1 - l_1) \cdot |\bar y_2|
  \\
  &\leq |x_1| \cdot \epsilon' + \epsilon' \cdot |\bar y_2|
  \\
  &= \epsilon' \cdot (|x_1| + |\bar y_2|)
  \\
  &\leq \epsilon' \cdot \left(|E_1| + \epsilon' + |E_2| + \epsilon'\right)
  && \text{because $E_i \in [l_i, u_i]$}
  \\
  &\leq \epsilon' \cdot \left(2 \cdot 2^{B(E_1)} + 2 \cdot 2^{B(E_2)}\right)
  && \text{because $\epsilon' \leq 1 \leq 2^{B(E_i)}$}
  \\
  &\leq \epsilon' \cdot \left(2 \cdot 2^{B(E_1)} \cdot 2^{B(E_2)}\right)
  && \text{because $2^{B(E_i)} \geq 2$}
  \\
  &\leq \epsilon' \cdot 2^{B(E)}
  && \text{because $D(E) \geq D(E_i)$}
  \\
  &&& \text{and $C(E) \geq C(E_1)+C(E_2)+1$}
  \\
  &= \epsilon.
  \end{align*}

  \end{enumerate}

  We can compute the interval in $O(S(E,b))$ time using $O(1)$ rational
  multiplications and comparisons via Lemma~\ref{lem:rationals}:
  by induction, the numerators and denominators of $l_i$ and $u_i$ have
  at most $S(E_i,b')$ bits, so the number of bits in the numerators and
  denominators of the products
  $l_1 \cdot l_2, l_1 \cdot u_2, u_1 \cdot l_2, u_1 \cdot u_2$ is at most
  \begin{align*}
    S(E_1,b') + S(E_2,b')
    &= S(E_1, b + B(E)) + S(E_2, b + B(E))
    \\
    &= S(E,b).
  \end{align*}

  \paragraph{Division.}
  To $\epsilon$-compute $E = E_1 / E_2$, we reduce to multiplying
  $E_1$ and $1/E_2$, as just analyzed, which requires
  $\epsilon'$-computing $E_1$ and $1/E_2$
  where $\epsilon' = \epsilon/2^{B(E)}$,
  i.e., computing $E_1$ and $1/E_2$ with $b' = b + B(E)$.

  To $\epsilon'$-compute $E'_2 = 1/E_2$,
  we first recursively $\epsilon_2$-compute $E_2$
  where $\epsilon_2 = \epsilon/2^{B(E'_2)+2}$.
  In other words, the recursive call is with $b_2 = b' + B(E'_2) + 2$,
  where $B(E'_2)$ was already computed above.
  Call the resulting interval $[l_2,u_2]$.
  If $l_2 \leq 0 \leq u_2$, then
  $$
  |E_2|
  \leq u_2 - l_2
  \leq \frac{1}{2^{b_2}}
  \leq \frac{1}{2^b \cdot 2^{B(E'_2)}}
  < \frac{1}{2^{B(E'_2)}} \quad \text{(because $b \geq 1$)},
  $$
  so $E_2 = 0$ by Theorem~\ref{thm:sep};
  thus we return ``undefined'' to indicate division by zero.
  Otherwise, $[1/u_2, 1/l_2]$ is an $\epsilon'$-computation of $E'_2 = 1/E_2$:
  \begin{align*}
    \frac{1}{l_2} - \frac{1}{u_2}
    &= \frac{u_2 - l_2}{l_2 \cdot u_2}
    \\
    &\leq \epsilon_2 \cdot \frac{1}{l_2} \cdot \frac{1}{u_2}
    \\
    &\leq \epsilon_2 \cdot \left(\frac{1}{|E_2| - \epsilon_2}\right)^2
    && \text{because $E_2 \in [l_2, u_2]$}
    \\
    &\leq \epsilon_2 \cdot \left(\frac{1}{1/2^{B(E_2)} - \epsilon_2}\right)^2
    && \text{by Theorem~\ref{thm:sep}}
    \\
    &\leq \epsilon_2 \cdot \left(2 \cdot 2^{B(E_2)}\right)^2
    && \text{because $\frac{1}{2^{B(E_2)}} = \frac{4 \epsilon_2}{\epsilon'} \geq 2 \epsilon_2$}
    \\
    &\leq \epsilon_2 \cdot 4 \cdot 2^{B(E'_2)}
    && \text{because $B(E'_2) \geq 2 \cdot B(E_2)$}
    \\
    &= \epsilon'.
  \end{align*}
  %
  We can compute each reciprocal $1/u_2,1/l_2$ in $O(1)$ time by
  swapping the numerator and denominator,
  which preserves the number of bits $S(E_2,b_2)$.

  When we multiply $E'_2$ with $E_1$,
  the resulting number of bits $S(E,b)$ is the sum
  \begin{align*}
  & S(E_1, b') + S(E_2, b_2)
  \\
  ={} & S(E_1, b + B(E)) + S(E_2, b + B(E) + B(E'_2) + 2)
  \\
  \leq{} & S(E_1, b + B(E)) + S(E_2, b + 2 B(E) + 2)
  \\
  & \hspace{6em} \text{because $B(E'_2) = B(1/E_2) \leq B(E_1/E_2) = B(E)$}
  \\
  ={} &S(E, b).
  \end{align*}
  The total running time is $O(S(E,b))$ beyond the recursive calls, as claimed.

  \paragraph{Roots.}
  To $\epsilon$-compute $E = \sqrt[d]{E_1}$ for $d \geq 2$,
  we first recursively $\epsilon_1$-compute $E_1$
  where $\epsilon_1 = 4 (\epsilon/8)^d$.
  In other words, the recursive call is with $b_1 = d (b+3) + 2$.
  For cleaner formulas, we assume $b \geq 4$ (increasing $T(E,b)$ and
  $S(E,b)$ by only a constant factor, by Lemma~\ref{lem:T affine}) and
  round up $b_1$ to $2 d b$ (because $d \geq 2$).%
  \footnote{By assuming $b$ sufficiently large,
    the factor $2$ could be improved to any constant $>1$,
    which would improve the exponents in our final bounds
    by nearly a factor of~$2$.}
  Call the resulting interval $[l_1,u_1]$.
  If $l_1 \leq 0 \leq u_1$, then
  $$ |E_1| \leq \max\{|l_1|,|u_1|\} \leq
  u_1 - l_1 \leq \frac{1}{2^{b_1}} < \frac{1}{2^b} \leq
     \frac{1}{2^{B(E)}} \quad \text{(because $b \geq B(E)$)},
  $$
  so $E_1 = 0$ by Theorem~\ref{thm:sep}, and thus we can return~$[0,0]$.
  If $l_1 \leq u_1 < 0$, then $E_1 < 0$;
  if $d$ is even,
  then we return ``undefined'' to indicate a negative square root;
  if $d$ is odd,
  we can negate $l_1$ and $u_1$ to reduce to the positive case,
  and negate the final computed interval.

  In the remaining case, $0 < l_1 \leq u_1$.
  We can compute the floor and ceiling of the $d$th root
  of a $b$-bit integer in $O(d \cdot b)$ time
  \cite{Zimmermann-1999,Bertot-Magaud-Zimmermann-2002}; see \cite[\S 1.5.1]{Brent-Zimmermann-2010}.
  To compensate for the error to be introduced,
  we first scale the interval $[l_1,u_1]$ by $1/\epsilon_2^d$
  where $\epsilon_2 = \epsilon/8$,
  resulting in the interval
  $\left[
    l_1/\epsilon_2^d, u_1/\epsilon_2^d
  \right]$,
  which is an
  $\epsilon_1/\epsilon_2^d$-computation of $E_1/\epsilon_2^d$.
  %
  %
  We can simplify
  $\epsilon_1/\epsilon_2^d = \big( 4 (\epsilon/8)^d \big) / (\epsilon/8)^d = 4$,
  so we have a $4$-computation of $E_1/\epsilon_2^d$.
  Next we round to the containing integer interval
  $\left[
    \lfloor l_1/\epsilon_2^d \rfloor, \lceil u_1/\epsilon_2^d \rceil
  \right]$,
  which adds an additive error of at most $2$,
  so is
  a $6$-computation
  of $E_1/\epsilon_2^d$.
  Now we apply the $d$th-root algorithm to both ends of the integer interval,
  rounding down and up in each case respectively, to obtain
  $\left[
    \left\lfloor \sqrt[d]{\lfloor l_1/\epsilon_2^d \rfloor} \right\rfloor,
    \left\lceil \sqrt[d]{\lceil u_1/\epsilon_2^d \rceil} \right\rceil
  \right]$,
  which is an
  $8$-computation
  of $\sqrt[d]{E_1/\epsilon_2^d}
  = \sqrt[d]{E_1}/\epsilon_2$.
  Finally, we undo the scaling by multiplying both ends of this integer interval
  by $\epsilon_2$, and return
  $\left[
    \left\lfloor \sqrt[d]{\lfloor l_1/\epsilon_2^d \rfloor} \right\rfloor \epsilon_2,
    \left\lceil \sqrt[d]{\lceil u_1/\epsilon_2^d \rceil} \right\rceil \epsilon_2
  \right]$
  which is an $8 \epsilon_2 = \epsilon$%
  %
  %
  %
  %
  \hbox{-}computation of
  $\sqrt[d]{E_1} = E$.
  We can compute this interval using $O(1)$ rational multiplications
  via Lemma~\ref{lem:rationals} and the integer $d$th-root algorithm,
  in $O(S(E_1, b_1) d)$ time.
  The resulting number of bits $S(E,b)$ grows by
  $\lg \frac{1}{\epsilon_2^d} = d \lg \frac{8}{\epsilon} = d (b+3)$
  from the initial division by $\epsilon_2^d$;
  only decreases when we take the integer $d$th root;
  and grows by $\lg \frac{1}{\epsilon_2} = b+3$
  when we multiply by~$\epsilon_2$.
  Thus $S(E,b) = S(E_1, 2 d b) + (d+1)(b+3) \leq
  S(E_1, 2 d b) + 3 d b$ because $d \geq 2$ and $b \geq 4$.
\end{proof}

\begin{theorem} \label{thm:real/floor}
  Operations~\ref{op:real} and~\ref{op:floor} can be implemented
  on a word RAM in $O(T(E))$ and $O\big(T(E) \cdot S(E)\big)$ time respectively.
\end{theorem}

\begin{proof}
  Given an expression $E$, we apply Operation~\ref{op:realbits} to $E$
  with $b = 1 + B(E)$, which by Theorem~\ref{thm:realbits}
  can be performed in $O(T(E,b))$ time.
  By Corollary~\ref{cor:growth in b},
  $T(E,b) \leq 2 \cdot T(E,B(E)) = 2 \cdot T(E)$
  because $B(E) \geq 1$.
  Thus we obtain a rational interval $[l,u]$ of length $u - l \leq 2^b$
  that contains the value of~$E$.

  First we show how to compute the sign of $E$ in $O(1)$ additional time
  given this interval.
  If $0 \in [l,u]$ (i.e., $l \leq 0 \leq u$), then
  $|l|,|u| \leq 1/2^{1 + B(E)} = \frac{1}{2} 2^{B(E)}$,
  so by Theorem~\ref{thm:sep}, $E$ must in fact be zero.
  Otherwise, we have either $0 < l \leq u$, in which case $E$ must be positive;
  or $l \leq u < 0$, in which case $E$ must be negative.
  We can compute the signs of $l$ and $u$ in constant time
  by Lemma~\ref{lem:rationals}.

  Second we show how to compute the floor and ceiling,
  by reducing to a sign computation.
  Because $b \geq 1$, the interval length $u-l \leq \frac 1 2$,
  so $[l,u]$ contains at most one integer.
  We can compute $\lfloor l \rfloor$ and $\lceil u \rceil$
  using integer division of Lemma~\ref{lem:rationals}.
  By measuring the length of the expanded interval
  $[\lfloor l \rfloor, \lceil u \rceil]$,
  we determine whether $[l,u]$ contains an integer
  (the expanded interval has length $2$) or not
  (the expanded interval has length $1$).
  If $[l,u]$ does not contain an integer, i.e.,
  $i < l \leq u < i+1$ for an integer $i = \lfloor l \rfloor$,
  then $\lfloor E \rfloor = i$ and $\lceil E \rceil = i+1$.
  If $[l,u]$ contains an integer
  $i = \lfloor l \rfloor + 1 = \lceil u \rceil - 1$, then
  we compute the sign of $E - i$ using the algorithm above,
  which costs an additional $O(T(E-i))$ time.
  If $E - i$ is zero, then $\lfloor E \rfloor = \lceil E \rceil = i$;
  if $E - i$ is positive, then $\lfloor E \rfloor = i$ and
  $\lceil E \rceil = i+1$; and
  if $E - i$ is negative, then $\lfloor E \rfloor = i-1$ and
  $\lceil E \rceil = i$.
  Thus we obtain the floor and ceiling of $E$ in all cases.

  It remains to show that $T(E-i) = O\big(T(E) \cdot S(E)\big)$.
  By Lemma~\ref{lem:T minus}, $T(E-i) = O\big(T(E) \cdot T(i)\big)$.
  We have
  $T(i) = S(i) = C(i) = \max\{1,\lceil \lg |i| \rceil\} = O(S(E))$
  where the upper bound follows
  because $S(i)$ is the maximum number of bits
  in an integer in a rational approximation of $E \approx i$.
\end{proof}


\begin{corollary}
  Operations~\ref{op:combine},~\ref{op:real}, \ref{op:B}, \ref{op:realbits},
  and~\ref{op:floor} can be implemented on the word RAM
  in the time bounds given by the recursive cost model in Table~\ref{tab:ops}.
\end{corollary}

\begin{proof}
  By Lemma~\ref{lem:B}, Theorem~\ref{thm:realbits}, and
  Theorem~\ref{thm:real/floor}.
\end{proof}

\begin{corollary} \label{cor:compare}
  We can compare whether $E_1 \leq E_2$ for two expression DAGs
  $E_1$ and $E_2$ in $O\big(T(E_1) \cdot T(E_2)\big)$ time.
  %
\end{corollary}

\begin{proof}
  To compare $E_1$ and $E_2$, we apply
  Operation~\ref{op:real} from Theorem~\ref{thm:real/floor}
  to evaluate the sign of $E_1 - E_2$ in $O(T(E_1 - E_2))$ time.
  By Lemma~\ref{lem:T minus},
  $T(E_1 - E_2) = O\big(T(E_1) \cdot T(E_2)\big)$.
\end{proof}

\subsubsection{Simple Cost Model}
\label{sec:simple cost model}

Our second cost model is easier to use, but in general may be weaker.
It focuses on five key properties of an expression:
\begin{enumerate}
\item The \defn{height} $H(E)$ of the expression DAG $E$, i.e.,
  the length of the longest directed path,
  where integer expressions have height~$0$.
\item The \defn{number of nodes} $\#(E)$ of the expression DAG~$E$.
  In particular, $H(E)+1 \leq \#(E) < 2^{H(E)+1}$.
\item The \defn{root set} $R(E)$ of the expression DAG $E$,
  i.e., the set of distinct roots taken.
  (For example, $E = \sqrt{\sqrt 2 + \sqrt[3] 5} + \sqrt 2$ has
  $R(E) = \{\sqrt 2, \sqrt[3] 5, \sqrt{\sqrt 2 + \sqrt[3] 5}\}$.)
\item The \defn{maximum root degree}
  $K(E) = \max \{ k \mid \sqrt[k]{x} \in R(E) \}$.
  In our algorithm (and many computational geometry algorithms),
  $K(E) = 2$, meaning we just take square roots.
\item The maximum number $b^*(E)$ of \defn{bits} in the integer leaves of
  the expression DAG~$E$.
\end{enumerate}




%
%

The main challenge in obtaining a running time dependent only on the
number $|R(E)|$ of \emph{distinct} roots is in dealing with multiple
radical nodes of equal value, by detecting these equal-value roots and thereby
avoid paying the exponential cost for multiple occurrences of the same root.
First, in Lemma~\ref{lem:simplified bound} and
Theorem~\ref{thm:somewhat simple}, we ignore this issue,
and solve the $B(E)$ and $T(E)$ recurrences in terms of $D(E)$.
Next, in Lemma~\ref{lem:dedup radicals}, we show how to modify expressions
to remove equal-value radical nodes.
Finally, in Theorem~\ref{thm:simple}, we combine these tools to
derive the simple cost model.

\begin{lemma} \label{lem:simplified bound}
  $b^*(E) \leq B(E) \leq D(E) \cdot 2^{H(E)} \cdot b^*(E)$.
\end{lemma}

\begin{proof}
  It suffices to prove $b^*(E) \leq C(E) \leq 2^{H(E)} \cdot b^*(E)$
  Consider the recurrence \eqref{eq:C} for $C(E)$.
  The two recursive cases (second and third) have factors of $2$ that
  accumulate to $2^d$ for a node of depth $d$,
  where $d \leq H(E)$.
  The base case (first) is at most $b^*(E)$,
  and in at least one case is exactly $b^*(E)$.
  Thus $b^*(E) \leq C(E) \leq 2^{H(E)} \cdot b^*(E)$.
\end{proof}

\begin{corollary} \label{cor:simple B}
  The running time $O(\#(E) \lg B(E))$ for Operation~\ref{op:B}
  from Lemma~\ref{lem:B} is bounded by
  $$O\Big( \#(E)^2 \lg K(E) + \#(E) \lg b^*(E) \Big).$$
\end{corollary}

\begin{proof}
  By Lemma~\ref{lem:simplified bound},
  $\lg B(E) \leq \lg D(E) + H(E) + \lg b^*(E)$.
  By definition, $\lg D(E) \leq \#(E) \lg K(E)$
  and $H(E) \leq \#(E)$.
\end{proof}

\begin{theorem} \label{thm:somewhat simple}
  The running time $T(E,b)$ for Operation~\ref{op:realbits} satisfies
  $$T(E,b) = O\left(b \cdot \big(8 \cdot K(E)\big)^{H(E)+2}\right).$$
  The running time $T(E)$ for Operation~\ref{op:real} satisfies
  $$T(E) = O\left(D(E) \cdot b^*(E) \cdot \big(16 \cdot K(E)\big)^{H(E)+2}\right).$$
\end{theorem}

\begin{proof}
  We analyze the growth of the $b$ parameter
  in the $S(E,b)$ and $T(E,b)$ recurrences of \eqref{eq:S} and~\eqref{eq:T},
  which follow the same pattern in terms of how $b$ grows down the recursion.
  Let $b_0$ represent the parameter $b$ at the top level of the recurrence,
  i.e., in the call to Operation~\ref{op:realbits}.
  By Lemma~\ref{lem:T affine}, we can assume that $b_0 \geq B(E) + 1$
  (instead of just $b_0 \geq B(E)$) by increasing our bound by a constant factor.
  Each radical node in $E$ recurses with $b' = 2 d b$,
  thus multiplying the input $b$ by a factor of $2 d \leq 2 \cdot K(E)$.
  Each nonradical node in $E$ recurses with $b' \leq b + 2 B(E) + 2
  \leq 3 b$ because $b \geq b_0 \geq B(E) + 1$, thus multiplying the input $b$
  by a factor of at most $2 \leq 2 \cdot K(E)$.
  Thus each recursive level increases $b$
  by a factor of at most $2 \cdot K(E)$.
  Then after $H(E)$ levels, we multiply $b$ by a factor of at most
  $(2 \cdot K(E))^{H(E)}$.
  Thus we can assume throughout the $S(E,b)$ and $T(E,b)$ recursions
  that $b \leq b_0 \cdot (2 \cdot K(E))^{H(E)}$.

  Now we compute $S(E,b)$ via the recurrence \eqref{eq:S}.
  The number of recursive calls to $S(E,b)$ is at most $2^{H(E)}$.
  Each recursive call has an additive term of
  $b' \leq b^*(E) \leq B(E) \leq b_0$ in the base case
  (by Lemma~\ref{lem:simplified bound}),
  and at most $3 d b \leq 4 \cdot K(E) \cdot b_0 \cdot (2 \cdot K(E))^{H(E)}
  = O(b_0 \cdot (2 \cdot K(E))^{H(E)+1})$
  in the recursive case (dominated by the radical case).
  Multiplying, the total cost is at most
  $2^{H(E)} \cdot O(b_0 \cdot (2 \cdot K(E))^{H(E)+1})
  = O(b_0 \cdot (4 \cdot K(E))^{H(E)+1})$.


  Finally we compute $T(E,b)$ via the recurrence \eqref{eq:T}.
  The number of recursive calls to $T(E,b)$ is at most $2^{H(E)}$.
  Each recursive call has an additive term of at most
  $d \cdot S(E,b) \leq K(E) \cdot O(b_0 \cdot (4 \cdot K(E))^{H(E)+1})$.
  Multiplying, the total cost is at most
  $2^{H(E)} \cdot K(E) \cdot O(b_0 \cdot (4 \cdot K(E))^{H(E)+1})
  = O(b_0 \cdot (8 \cdot K(E))^{H(E)+2})$.
  We obtain the desired upper bound on $T(E) = T(E,B(E))$
  by substituting Lemma~\ref{lem:simplified bound}'s upper bound
  $B(E) \leq D(E) \cdot 2^{H(E)} \cdot b^*(E)$.
  %
\end{proof}

Define a \defn{uniquification} of expression $E$
to be an expression $E'$
satisfying $E' = E$, $\#(E') \leq \#(E)$, $H(E') \leq H(E)$,
$K(E') = K(E)$, $b^*(E') = b^*(E)$, $R'(E) = R(E)$, and
the number of radical nodes in $E'$ is
the number $|R(E)| = |R(E')|$ of distinct roots in~$E$
(i.e., all roots in $E'$ have distinct values).
Define a \defn{near-uniquification} in the same way, but allowing
the number of radical nodes in $E'$ to be at most $1$ larger
(i.e., $\leq |R(E)| + 1 = |R(E')| + 1$).

\begin{lemma} \label{lem:dedup radicals}
  Given an expression DAG $E$, we can compute a uniquification $E'$ of $E$ in
  $$O\left(\left(\sum_{i=1}^{\#(E)} T(E_i) \right)^2 ~ \right)$$
  time, where each $E_i$ is a near-uniquification of a subexpression of~$E$.
\end{lemma}

\begin{proof}
  First we define a partial order on pairs of radical nodes of~$E$.
  Let $\mathcal R$ be the set of radical nodes in~$E$.
  Define a partial order $\preceq$ on $\mathcal R$ by $x \preceq y$ when
  $x$ is descendant of, or the same node as, $y$ in~$E$.
  This partial order induces a partial order on the cross product
  $\mathcal R \times \mathcal R$ by $(x,x') \preceq (y,y')$ when
  $x \preceq y$ and $x' \preceq y'$.

  Our algorithm compares all pairs $(x,x')$ of distinct radical nodes
  in $\mathcal R$ in a linearization of~$\preceq$ (found via topological sort),
  removing any duplicate (equal-value) nodes by combining those nodes.
  Thus, when we visit a pair $(x,x')$, we know that the
  expression DAG $E_x$ with source $x$ and the expression DAG $E_{x'}$
  with source $x'$ have already had all of their radical nodes pairwise compared
  and deduplicated, except for the single pair $(x,x')$.
  In particular, $E_x$ and $E_{x'}$ are near-uniquifications of the
  original expression DAGs with source nodes $x$ and $x'$ respectively.
  We apply Corollary~\ref{cor:compare} to compare whether $E_x = E_{x'}$.
  If so, we have detected a duplicate root, and we remove it as follows.
  If $x'$ is a descendant of $x$, then exchange $x$ and~$x'$,
  so that $x$ is a descendant of~$x'$.
  If $x$ and $x'$ are incomparable by the descendant relation, and $H(x) > H(x')$, then exchange $x$ and~$x'$.
  Then, in all cases, $x'$ is not a descendant of $x$
  and $H(x) \leq H(x')$.
  Now replace in $E$ all references to $x'$ by references to $x$
  (which preserves acyclicity because $x'$ is not a descendant of~$x$,
  and preserves or decreases height because $H(x) \leq H(x')$),
  and then remove $x'$ from the DAG~$E$.
  As a special case, if $x'$ is the source node of the expression DAG~$E$,
  then we replace the entire DAG with~$E_x$.
  The new $E_x$ is a uniquification of a subexpression of~$E$
  (the original expression DAG with source~$x$).
  Repeating this process, we remove all duplicate roots from $E$,
  while preserving the values of $E$, $K(E)$, $b^*(E)$, and $R(E)$,
  and only decreasing $\#(E)$ and $H(E)$,
  resulting in a uniquification of~$E$.

  Finally we analyze the running time of this algorithm.
  For each of ${|\mathcal R| \choose 2} \leq \#(E)^2$
  pairs of distinct radical nodes $(x,x')$, we pay
  $O\big(T(E_x) \cdot T(E_{x'})\big)$ time by Corollary~\ref{cor:compare},
  where $E_x$ and $E_{x'}$ are near-uniquifications of subexpressions of~$E$.
  Therefore the total time bound is
  \begin{align*}
  O\left(\sum_{i=1}^{\#(E)} \sum_{j=1}^{\#(E)} T(E_i) \cdot T(E_j)\right)
  = O\left(\sum_{i=1}^{\#(E)} T(E_i) \cdot \sum_{j=1}^{\#(E)} T(E_j)\right)
  = O\left(\left(\sum_{i=1}^{\#(E)} T(E_i) \right)^2 ~ \right).
  \end{align*}
\end{proof}

\begin{theorem} \label{thm:simple}
  Operations~\ref{op:real} and~\ref{op:floor}
  can be implemented in
  \begin{align*}
  & O\left( \#(E)^2 \cdot
      b^*(E)^2 \cdot \big(16 \cdot K(E)\big)^{2 |R(E)| + 2 H(E)+6}
    \right)
  \\
  ={}&
    O\left(
      b^*(E)^2 \cdot \big(32 \cdot K(E)\big)^{2 |R(E)| + 2 H(E)+6}
    \right)
  \end{align*}
  time.
\end{theorem}

\begin{proof}
  We implement each operation by first removing all duplicate roots
  from $E$ via Lemma~\ref{lem:dedup radicals},
  resulting in a uniquification $E'$ without duplicate roots,
  and second by running the appropriate algorithm from
  Theorem~\ref{thm:real/floor} on input~$E'$.

  By Lemma~\ref{lem:dedup radicals},
  the running time for the first step is given by
  $$O\left(\left(\sum_{i=1}^{\#(E)} T(E_i) \right)^2 ~ \right).$$
  By Theorem~\ref{thm:somewhat simple}, each $T(E_i)$ satisfies
  $$T(E_i) = O\left(D(E_i) \cdot b^*(E_i) \cdot \big(16 \cdot K(E_i)\big)^{H(E_i)+2}\right).$$
  Because each $E_i$ is a near-uniquification,
  it has at most $|R(E_i)|+1$ radical nodes,
  so $D(E_i) \leq K(E_i)^{|R(E_i)|+1}$.
  And because each $E_i$ is a near-uniquification of a subexpression of~$E$,
  we have $b^*(E_i) \leq b^*(E)$, $K(E_i) \leq K(E)$,
  $|R(E_i)| \leq |R(E)|$, and $H(E_i) \leq H(E)$.
  Thus
  \begin{align*}
    T(E_i)
    &= O\left(K(E)^{|R(E)|+1} \cdot b^*(E) \cdot \big(16 \cdot K(E)\big)^{H(E)+2}\right)
    \\
    &= O\left(b^*(E) \cdot \big(16 \cdot K(E)\big)^{|R(E)|+H(E)+3}\right)
    ,
  \end{align*}
  so the running time of the first step is
  \begin{align*}
  & O\left(\left(\sum_{i=1}^{\#(E)}
    b^*(E) \cdot \big(16 \cdot K(E)\big)^{|R(E)|+H(E)+3}
  \right)^2 ~ \right)
  \\
  &= O\left( \#(E)^2 \cdot
      b^*(E)^2 \cdot \big(16 \cdot K(E)\big)^{2 |R(E)| + 2 H(E)+6}
    \right).
  \end{align*}
  Because $\#(E) < 2^{H(E)+1}$,
  we have $\#(E)^2 < 2^{2 \cdot H(E)+2}$,
  whose exponent is smaller than $2 |R(E)| + 2 H(E)+6$,
  so we can absorb $\#(E)^2$ by increasing the base by a factor of~$2$.

  By Theorem~\ref{thm:real/floor} on input~$E'$,
  the running time for the second step is
  $O(T(E'))$ for Operation~\ref{op:real} and
  $O\big(T(E') \cdot S(E')\big) = O(T(E')^2)$ for Operation~\ref{op:floor}
  (where $S(E') \leq T(E')$ by inspection of \eqref{eq:T}).
  Because $E'$ is a uniquification of~$E$,
  $|R(E')|$ is the number of radical nodes in~$E'$,
  so $D(E') \leq K(E')^{|R(E')|} = K(E)^{|R(E)|}$.
  By Theorem~\ref{thm:somewhat simple},
  \begin{align*}
  T(E')
  &= O\left(K(E)^{|R(E)|} \cdot b^*(E) \cdot \big(16 \cdot K(E)\big)^{H(E)+2}\right).
  \end{align*}
  Therefore the running time is dominated by the first step.
\end{proof}

The same idea can be used to ``speed up'' Operation~\ref{op:realbits}
to require only $b \geq B(E')$ instead of $b \geq B(E)$.
Then the running time would be the sum of the costs from
Theorem~\ref{thm:simple} (for the first step) and
Theorem~\ref{thm:somewhat simple} (for the second step).

\subsubsection{Special Cases}

Now we look at some restricted expressions where the simple cost model
is particularly simple.
Our first restriction is to $O(1)$-size expressions,
where real operations run as fast as integer operations.
(An earlier version of this paper \cite{Quasigeodesics_SoCG2020}
called this model the \defn{$O(1)$-expression RAM}, and we suspect it is
useful in many computational geometry algorithms; sadly, it will not suffice
for our quasigeodesics algorithm.)

\begin{corollary}
  If $\#(E) = O(1)$ and $K(E) = O(1)$, then $T(E) = O(b^*(E))$.
\end{corollary}

\begin{proof}
  Follows from Theorem~\ref{thm:somewhat simple},
  $H(E) \leq \#(E)$, and
  $D(E) \leq K(E)^{2^{H(E)}} = O(1)$.
\end{proof}

Next we consider expressions of logarithmic height, where the running time
for real operations is polynomial except for an exponential dependence on
the number of distinct roots.
This result is what we will use in our quasigeodesics algorithm.

\begin{corollary}
  If $H(E) = O(\lg n)$, $b^*(E) = n^{O(1)}$, and $K(E) = O(1)$, then
  Operations~\ref{op:real} and~\ref{op:floor} can be implemented in
  $n^{O(1)} \cdot 2^{O(|R(E)|)}$ time.
\end{corollary}

\begin{proof}
  Follows from Theorem~\ref{thm:simple} and $\#(E) \leq 2^{H(E)}$.
\end{proof}

Finally, we prove a general exponential bound on the running time
for real operations:

\begin{corollary}
  If $\#(E) = O(n)$ nodes, $b^*(E) = 2^{O(n)}$, and $K(E) = O(1)$, then
  Operations~\ref{op:real} and~\ref{op:floor} can be implemented in
  $2^{O(n)}$ time.
\end{corollary}

\begin{proof}
  Follows from Theorem~\ref{thm:simple},
  $H(E) \leq \#(E)$, and $|R(E)| \leq \#(E)$.
\end{proof}

\begin{corollary}
  Any algorithm running in $2^{O(n)}$ time on the real RAM,
  where all radical operations are of $O(1)$ degree,
  can be run in $2^{O(n)}$ time on the word RAM.
\end{corollary}


\subsubsection{Application to Multi-Expression Objects}
\label{sec:multi-expression}

We provide some notation to more easily express the simple cost model
for algorithms whose input is not one expression but an ``object'' $M$
represented by multiple expressions.
Let $\mathcal E(M)$ be the set of expressions representing~$M$.
First we define parameters $H(M)$, $b^*(M)$, $K(M)$ and $R(M)$ as follows:
\begin{align*}
  H(M) &= \max_{E \in \mathcal E(M)} H(E), &
  b^*(M) &= \max_{E \in \mathcal E(M)} b^*(E), &
  K(M) &= \max_{E \in \mathcal E(M)} K(E), &
  R(M) &= \bigcup_{E \in \mathcal E(M)} R(E).
\end{align*}

Now we define $T'(M)$ to be the simple-cost-model running time
of Theorem~\ref{thm:simple} according to these parameters:
\begin{align} \label{eq:T'(M)}
T'(M) &=
    \Theta\left(
      b^*(M)^2 \cdot \big(32 \cdot \max\{2,K(M)\}\big)^{2 |R(M)| + 2 H(M) + 6}
    \right),
\end{align}

Any algorithm will construct expressions in terms of the input expressions
in $\mathcal E(M)$.
We define $T'(M,x)$ to be the simple-cost-model running time
of Theorem~\ref{thm:simple} for a height-$h$ expression whose leaves
are in fact input expressions in $\mathcal E(M)$,
with at most $r$ additional square roots beyond the roots in $R(M)$,
such that $h+r \leq x$ (so $x$ represents the ``extra complexity''):
\begin{align} \label{eq:T'(M,h)}
T'(M,x) &=
    \Theta\left(
      b^*(M)^2 \cdot \big(32 \cdot \max\{2,K(M)\}\big)^{2 |R(M)| + 2 H(M) + 2 x + 6}
    \right),
\\
T'(M) &= T(M,0). \nonumber
\end{align}

We treat a tuple $(M_1, M_2, \ldots, M_k)$ of objects
as itself an object,
whose expression set $\mathcal E$ is $\bigcup_{i=1}^k \mathcal E(M_i)$.

\subsection{Associative Operations}

\begin{lemma} \label{lem:associative}
  Consider an associative operator $\circ$ over $d$-dimensional real vectors,
  defined by $d$ expressions (forming $\mathcal E(\circ)$)
  whose leaves can also include one of the $2 \cdot d$ input variables.
  Given $d$-dimensional vectors $x_1, x_2, \dots, x_n$,
  we can compute in $O(n)$ time a representation of
  $x_1 \circ x_2 \circ \cdots \circ x_n$ as a $d$-dimensional vector $F$
  satisfying
  \begin{align*}
    H(F) &= \lceil \lg n \rceil \cdot H(\circ) + \max_{k=1}^n H(x_k), &
    b^*(F) &= \max\left\{b^*(\circ), \max_{k=1}^n b^*(x_k)\right\}, \\
    R(F) &= R(\circ) \cup \bigcup_{k=1}^n R(x_k), &
    K(F) &= \max\left\{K(\circ), \max_{k=1}^n K(x_k)\right\}.
  \end{align*}
  In particular, $T'(x_1 \circ x_2 \circ \cdots \circ x_n) \leq
  T'\big((x_1, x_2, \dots, x_n), \lceil \lg n \rceil \cdot H(\circ)\big)$.
\end{lemma}

\begin{proof}
  At a high level, we compute $x_1 \circ x_2 \circ \cdots \circ x_n$
  according to a complete rooted ordered binary tree $T$,
  with one leaf for each $x_i$ in order and
  where each internal node computes~$\circ$.
  This tree has height $\lceil \lg n \rceil$.
  To transform this vector expression into real expressions,
  we make $d$ copies $T_1, T_2, \dots, T_d$ of $T$.
  In $T_i$, we replace the $k$th leaf with
  the expression representing the $i$th element of~$x_k$,
  and we replace each internal node $v$ with
  the expression representing the $i$th element of~$\circ$,
  where each reference to the $j$th element of the left [right] expression
  becomes a pointer to the left [right] child of $v$ in~$T_j$.
  Then $F = (s_1, s_2, \dots, s_d)$
  where $s_i$ is the source node of expression DAG~$T_i$.
  The bound on $T$ follows from Equation~\eqref{eq:T'(M,h)}.
\end{proof}

\subsection{Polyhedral Inputs}
\label{Polyhedral Inputs}

The combinatorial structure of an input polyhedron can be encoded as a primal
or dual graph, as usual, but which real numbers should represent the geometry?
Because the quasigeodesic problem is about the intrinsic geometry of the
surface of a polyhedron, the input geometry for this problem is most
naturally represented intrinsically.
More generally, we have three natural encodings of a polyhedron:

\begin{enumerate}
\item \label{model:intrinsic}
  \textbf{Intrinsic coordinates:} For each face,
  for some isometric embedding of the face into 2D,
  the 2D coordinates of each vertex of the embedded face.
  This representation is the primary model assumed by our algorithm.
  In fact, it can be used to represent any \defn{Alexandrov gluing}
  \cite{Demaine-O'Rourke-2007}, where the gluing forms a topological sphere
  and at most $360^\circ$ of face angle gets glued at each vertex.
  In this situation, the encoded ``faces'' may not be extrinsically
  planar on the convex surface, but they would be intrinsically planar
  on the unique convex polyhedron guaranteed by Alexandrov's Theorem
  \cite{Alexandrov-1996,Demaine-O'Rourke-2007}.
\item \label{model:length}
  \textbf{Intrinsic lengths:} For each face, the lengths of the edges.
  This representation assumes the faces have been combinatorially
  triangulated (so some edges may be flat).
\item \label{model:extrinsic}
  \textbf{Extrinsic coordinates:} 3D coordinates for each vertex.
\end{enumerate}

In the expression RAM, we can convert coordinates (\ref{model:intrinsic} or \ref{model:extrinsic}) to
edge lengths (\ref{model:length}) as follows:
given vertex coordinates $(x_1,y_1,z_1), \allowbreak (x_2,y_2,z_2)$,
the distance between these two vertices is given by the constant-size
radical expression:
$$\sqrt{(x_1 - x_2) \cdot (x_1 - x_2) +
        (y_1 - y_2) \cdot (y_1 - y_2) +
        (z_1 - z_2) \cdot (z_1 - z_2)}.$$
The added roots from this transformation is exactly
the set of edge lengths $\Lambda$ in a triangulation of the polyhedron
(as needed by representation (\ref{model:length})).

We can also convert from intrinsic lengths (\ref{model:length}) to intrinsic coordinates (\ref{model:intrinsic})
by, for each triangle, placing two vertices on the $x$ axis and
finding the third vertex by the intersection of two circles
whose radii are the two incident edge lengths.
This transformation requires a square root,
potentially one for each distinct triangle in the input.
Thus the number of distinct added roots from this transformation is
at most $|\Lambda|^3$,
because we can form only $|\Lambda|^3$ distinct triangles
from $|\Lambda|$ edge lengths.

Therefore we can convert representations
$$
\text{(\ref{model:extrinsic})} \underset{|\Lambda|}{\rightarrow} \text{(\ref{model:length})} \overset{\strut|\Lambda|}{\underset{|\Lambda|^3}{\leftrightarrows}} \text{(\ref{model:intrinsic})},
$$
where the labels specify the number of added roots.
(The reverse direction, from intrinsic (\ref{model:intrinsic}/\ref{model:length}) to extrinsic (\ref{model:extrinsic}),
is more difficult, as it involves solving the Alexandrov problem
\cite{Alexandrov_WADS2009}.
Accordingly, the intrinsic representations (\ref{model:intrinsic}/\ref{model:length}) represent
a more general class of possible polyhedra.
By contrast, in a model restricting inputs to be integers or rationals,
these three input models would define incomparable classes of polyhedra,
so no representation conversions would be possible.)

Our quasigeodesic algorithm assumes the intrinsic input representation (\ref{model:intrinsic}).
By the reductions above, our algorithm also applies to polyhedra
given in the extrinsic representation (\ref{model:extrinsic})
or intrinsic length representation (\ref{model:length}).
On a real RAM, these conversions incur only linear additive time cost.
On a word RAM, these conversions add up to $|\Lambda|^3$ roots
to the root set $R$ of the input expressions.
Thus the $2^{O(|R|)}$ term in the running time increases to
$2^{O(|R| + |\Lambda|^3)}$ for these input representations.

\section{Expression RAM Algorithm}
\label{ExpressionRAMAlgorithmSection}

In this section, we describe how to re-implement
Section~\ref{RealRAMAlgorithmSection}'s real RAM algorithm
for finding closed quasigeodesics onto the expression RAM
of Section~\ref{sec:Expression RAM},
and analyze the resulting running time on a word RAM.

In the sense of Section~\ref{sec:multi-expression},
we define a \defn{polyhedron object} $P$ to be represented by expressions
$\mathcal{E}(P)$ consisting of,
for each face $f$ of $P$,
the $x$ and $y$ coordinates for each vertex of $f$ in an isometric
2D embedding of~$f$ (as in representation~(\ref{model:intrinsic})).

\subsection{Computing Quasigeodesic Rays}

\setcounter{theorem}{4}

\begin{lemma}[Expression RAM version of Lemma~\ref{lem:coordinate system transforms}]
  \label{lem2:coordinate system transforms}
  Given the intrinsic coordinates $C_1,C_2$ for two adjacent faces $f_1,f_2$
  sharing an edge~$e$, where each $C_i$ is a vector $C$ specifying
  the coordinates of all vertices of~$f_i$, we can compute
  the orientation-preserving isometry
  bringing $e$ on $f_1$ to $e$ on $f_2$ as a transformation matrix~$I$,
  where $H(I) \leq H\big((C_1,C_2)\big) + O(1)$ and
  $b^*$, $R$, and $K$ are the same for $I$ as for $(C_1,C_2)$.
  %
  On the word RAM, the running time is $O(1)$.
\end{lemma}

\begin{proof}
  As stated in the proof of Lemma~\ref{lem:coordinate system transforms},
  the affine transformation $I$ can be written as a $3 \times 3$
  matrix using $O(1)$ additions, subtractions, multiplications, and divisions
  of the input coordinates.
  Thus $H$ increases by at most an additive constant.
  There are no new leaf expressions or roots.
  Because we only use Operation~\ref{op:combine},
  the running time is $O(1)$.
\end{proof}

\begin{lemma}[Expression RAM version of Lemma~\ref{lem:ray follow exact}]
  \label{lem2:ray follow exact}
  Let $S = (X, \vec v, \infty)$ be a geodesic ray
  on an $n$-vertex polyhedron,
  where point $X$ is on the boundary of a face $f$,
  $\vec v$ points inside $f$ from $X$, and
  $X$ and $\vec v$ are specified in the local coordinate system of~$f$.
  We can compute the first $k$ faces and $k$ edges visited by~$S$;
  the corresponding $k$ intersection points; and
  a planar embedding of an unfolding of these faces and edges.
  In particular, this determines the direction at which $S$ exits the
  last face into the last intersection point.
  All points and directions are represented as (exact) expressions
  of height $H\big((P,S)\big) + 3 \lg k + O(1)$ and root set
  $R\big((P,S)\big)$.%
  %
  \footnote{For the purposes of $H(S)$ and $R(S)$,
    we ignore the $\infty$ component, and just view $S$ as the vector object
    $(X, \vec v)$; see Section~\ref{sec:multi-expression}.}
  If $S$ hits a vertex within the first $k$ steps, we stop there and
  also output the vertex; otherwise, we proceed for exactly $k$ steps.
  On the word RAM, the running time is
  $O\Big(T'\big((P,S), 3 \lg k + O(1)\big) \, k \lg n\Big)$.
\end{lemma}

\begin{proof}
  In the algorithm in the proof of Lemma~\ref{lem:ray follow exact},
  we can implement a single step $i$ as follows.

  During each step of the binary search,
  we compute $Q'$ by solving the $O(1)$-size linear system
  using $O(1)$ arithmetic operations (via \ref{op:combine})
  in terms of $A,B$ (part of~$P$), $\vec v_{i-1}$, and~$Q_{i-1}$.
  Then we use Type-\ref{op:real} comparisons to
  to compare the signed triangle areas with~$0$.
  According the simple cost model, this operation takes
  $O\Big(T'\big((P, Q_{i-1}, \vec v_{i-1})\big)\Big)$ time
  per step of the binary search.
  Thus the binary search finds the exit edge $e_i$ in
  $O\Big(T'\big((P, Q_{i-1}, \vec v_{i-1})\big) \lg n\Big)$ time.

  To reduce the expression complexity of $Q_i$ and $\vec v_i$,
  at each step $i$,
  we apply Lemma~\ref{lem:associative} to represent each $(Q_i, \vec v_i)$
  as the associative composition (matrix multiplication)
  of $\tau_1, \dots, \tau_{i-1}$ applied in that order to $(Q, \vec v)$.
  Each $\tau_i$ is the composition/product of $\rho_i$ and $\sigma_i$, where
  $\rho_i$ transforms from one edge of $f_i$ to another edge $e_i$ of $f_i$,
  and $\sigma_i$ transforms from the latter edge $e_i$ of $f_i$ to
  the corresponding edge of $f_{i+1}$
  (given by Lemma~\ref{lem2:coordinate system transforms}).
  Both $\rho_i$ and $\sigma_i$ (and thus $\tau_i$) are given by matrices
  whose elements are $O(1)$-expressions in terms of
  the expressions $\mathcal E(P)$ representing polyhedron~$P$,
  with no new roots.
  Each cell of the $3 \times 3$ matrix multiplication can be written
  as a height-3 expression (a height-2 sum of three products).
  By Lemma~\ref{lem:associative},
  this representation of $(Q_i,\vec v_i)$ uses expressions of
  height $3 \lceil \lg (2i) \rceil + O(1)
  = 3 \lg i + O(1)$
  in terms of the expressions $\mathcal E(P)$ representing polyhedron~$P$.
  Thus $T'\big((P, Q_i, \vec v_i)\big)
  = O\Big(T'\big((P,S), 3 \lg i + O(1)\big)\Big)$.
  We also pay $O(i)$ time to apply Lemmas~\ref{lem:associative}
  and~\ref{lem2:coordinate system transforms},
  but by Equation~\eqref{eq:T'(M,h)},
  this cost is dominated by
  $$T'\big((P,S), 3 \lg i + O(1)\big) = \Omega\big(2^{3 \lg i + O(1)}\big) = \Theta(i^3).$$

  Therefore, the total time for step $i$ is
  $O\Big(T'\big((P,S), 3 \lg i + O(1)\big) \lg n\Big)$.
  Summing over $i$,
  we obtain the claimed time bound
  $O\Big(T'\big((P,S), 3 \lg k + O(1)\big) \, k \lg n\Big)$.

  Finally, to compute the planar embedding of an unfolding of the
  faces and edges
  in the local coordinate system of~$f_1$,
  we use Lemma~\ref{lem:associative} to represent the associative composition
  of $\sigma^{-1}_{i-1}, \sigma^{-1}_{i-2}, \dots, \sigma^{-1}_1$
  applied in that order, and then apply that transformation to
  $f_i$ and $e_i$.
  Because $f_i$ and $e_i$ are part of
  the expressions $\mathcal E(P)$ representing polyhedron~$P$,
  we add $3 \lceil \lg i \rceil = \lg k + O(1)$ height
  and no new roots.
\end{proof}

\begin{corollary}[Expression RAM version of Corollary~\ref{cor:cone following}]
  \label{cor2:cone following}
  Consider an angle-$\theta$ cone between geodesic rays
  $S_1 = (X,\vec v_1, \infty)$ and $S_2 = (X,\vec v_2, \infty)$.
  We can compute a geodesic segment $S = (X, \vec v, d)$
  that is in the given cone and ends at a vertex $Y$ of~$P$
  and has $k = O\left(\frac{L^2}{\theta \, \ell^2}\right)$
  intersections between $S$ and edges of~$P$,
  along with the identity of the faces and edges intersected by~$S$.
  The output consists of $\vec v$
  and the $k$ intersections,
  represented by (exact) expressions of height
  $H\big((P,S_1,S_2)\big) + 6 \lg k + O(1)$ and
  root set $R\big((P,S_1,S_2)\big)$.
  On the word RAM, the running time is
  $O\Big(T'\big((P,S_1,S_2), 6 \lg k + O(1)\big) \, k \lg n\Big)$.
\end{corollary}

\begin{proof}
  We follow the algorithm of Corollary~\ref{cor:cone following},
  but replacing Lemma~\ref{lem:ray follow exact} with
  Lemma~\ref{lem2:ray follow exact}.
  This lemma gives us the identity of the faces and edges intersected by
  $S_1$ and $S_2$ and thus~$S$, as well as an unfolding of these faces and edges
  in the local coordinate system of~$f_1$.
  We can compute $\vec v$ as the difference between points $Y$ and $X$
  in this unfolding, which adds only constant height to the expression.
  We can compute each intersection $X_i$ in the local coordinate system of
  $f_1$ by intersecting the corresponding edge $e_i$ in the unfolding
  with the segment $X Y$ in the local coordinate system of~$f_1$,
  which again adds constant height to these expressions.
  We can map this intersection point $X_i$ to the local coordinate system of
  $f_i$ (which better reflects a ``point on the polyhedron $P$'')
  by applying the transforms $\sigma_1, \sigma_2, \dots, \sigma_{i-1}$
  in that order, using Lemma~\ref{lem:associative}, which adds an
  additional $3 \lg k + O(1)$ to the height of the expressions
  (as in Lemma~\ref{lem2:ray follow exact}).
  Thus the total height is $6 \lg k + O(1)$
  beyond the height of the input expressions
  in $\mathcal E(P)$, and the root sets remain unchanged.%
  \footnote{Likely this constant $6$ could be improved. For example, we may be
    able to combine the two matrix products into one tree, reducing $6$ to $3$.}
\end{proof}

\subsection{Full Algorithm}

\begin{theorem}[Expression RAM version of Theorem~\ref{FindQuasigeodesicTheorem}]
\label{thm2:FindQuasigeodesicTheorem}
Let $P$ be a convex polyhedron with $n$ vertices all of curvature at least
$\epsilon$ and degree at most $\Delta$,
let $L$ be the length of the longest edge, and let $\ell$ be the
smallest distance within a face between a vertex and a nonincident edge.
Then we can find a closed quasigeodesic on $P$ consisting of
$O\left(\frac{n \, L^2}{\epsilon^2 \, \ell^2}\right)$
segments on faces of~$P$.
On an expression RAM, each segment endpoint of the closed quasigeodesic
can be represented by an expression with root set $R(P)$
plus one additional root,
and height $H(P) +
2 \lg \tfrac{1}{\epsilon} + 3 \lg \Delta + 6 \lg \tfrac{L^2}{\epsilon \, \ell^2} + O(1)$,
and the running time is
$$
O\left( b^*(P)^2 \cdot \big(32 \cdot \max\{2,K(P)\}\big)^{2 |R(P)| + 2 H(P) + 16 \lg \tfrac{1}{\epsilon} + 6 \lg \Delta + 24 \lg \tfrac{L}{\ell} + O(1)} \cdot \tfrac{L^2}{\epsilon^2 \, \ell^2} \, n \lg n \right).
$$
In particular, if $K(P) \leq 2$, this running time is
$$O\left(b^*(P)^2 \cdot 2^{12 \big( |R(P)| + H(P)\big)}
\cdot
\tfrac{\Delta^{36} \, L^{146}}{\epsilon^{98} \, \ell^{146}} \, n \lg n
\right).$$
\end{theorem}

\begin{proof}
We follow the proof of Theorem~\ref{FindQuasigeodesicTheorem}.

First, we compute the unit vector $\vec v_\epsilon$
whose counterclockwise angle from the positive $x$~axis
is the minimum curvature $\epsilon$ of the polyhedron's vertices.
At a vertex $V$ with incident faces $f_1, f_2, \dots, f_k$,
we apply Lemma~\ref{lem2:coordinate system transforms}
to compute transforms from the local coordinate system of each face $f_i$
to the local coordinate system of the previous face $f_{i-1}$,
or for $i=1$, to place the vertex $V$ at the origin with the appropriate edge
of $f_1$ on the positive $x$~axis with $f_1$ below the axis.
By Lemma~\ref{lem2:coordinate system transforms},
these transformations add only constant height and no new roots.
To compute each placed face $f_i$,
we apply Lemma~\ref{lem:associative} to compute the
(associative) product of the transformation matrices from
Lemma~\ref{lem2:coordinate system transforms}, and then
apply that transformation to compute the vertex coordinates of the placed~$f_i$.
As in Lemma~\ref{lem2:ray follow exact}, this adds
$3 \lg k + O(1)$ to the height.
An edge $e$ of the last placed face $f_k$, divided by its length $\|e\|$
(which involves a square root),
gives a candidate vector for $\vec v_\epsilon$,
with added height at most $3 \lg \Delta + O(1)$
and one additional square root.
To determine which unit vector is $\vec v_\epsilon$, i.e., closest to the $x$ axis,
we compute the slope of each vector (the ratio of the two coordinates)
and the signs of the two coordinates (determining the quadrant).
Then we can use Operation~\ref{op:real} to compare two vectors to find
which has the smaller angle in the better quadrant, and via a linear scan,
find the vector $\vec v_\epsilon$ with the smallest angle in the best quadrant.
The total running time is
\begin{equation} \label{eq:T1}
O\big(n \cdot T'(P, 3 \lg \Delta + O(1))\big).
\end{equation}
Because the resulting vector $\vec v_\epsilon$ is one of the individual vectors,
it has added height $3 \lg \Delta + O(1)$
and one additional square root.

Second, we compute $\vec v_{\epsilon/2}$ by
averaging $\vec v_\epsilon$ with $(1, 0)$,
negating the vector if the $y$ coordinate of $\vec v_\epsilon$ is negative,
and normalizing the resulting vector.
The normalization adds one more square root to the root set.%
\footnote{It is possible to avoid (and an earlier version of this paper avoided)
adding any roots, by approximating the (no longer unit)
vectors $\vec d_0, \vec d_1, \dots, \vec d_{k-1}$ so that the angle
between consecutive vectors $\vec d_i, \vec d_{i+1}$ is at most $\epsilon/2$
and $\Omega(\epsilon)$. Namely, for $\epsilon \leq 45^\circ$,
we can take $\vec d_i = (a,i)$ for the smallest integer $a$
for which the slope $1/a$ is less than the slope of~$\vec v_\epsilon$.}
Also, $\vec v_{\epsilon/2}$ has height only an additive constant
larger than $\vec v_\epsilon$.
Similarly, we can compute $\vec v_{\epsilon/4}$ from $\vec v_{\epsilon/2}$,
with a third added root.
The running time for this step is
\begin{equation} \label{eq:T2}
O\big(T'(P, 3 \lg \Delta + O(1))\big),
\end{equation}
where the $O(1)$ accounts for the three added roots
(one for normalizing each of $\vec v_\epsilon$, $\vec v_{\epsilon/2}$,
and $\vec v_{\epsilon/4}$).

Third, we construct $k = \left\lfloor \frac{2\pi}{\epsilon/4} \right\rfloor$
unit direction vectors $\vec d_0, \vec d_1, \dots, \vec d_{k-1}$
such that the counterclockwise angle between consecutive vectors
$\vec d_i, \vec d_{i+1}$ is exactly $\epsilon/4$,
and the wraparound pair $\vec d_{k-1}, \vec d_0$
has angle between $\epsilon/4$ and $\epsilon/2$.
For height efficiency, we use a recursive algorithm
(similar to the repeated squaring algorithm for computing a power).
As base cases, $\vec d_0 = (1,0)$ and $\vec d_1 = \vec v_{\epsilon/4}$
which are both unit vectors.
Then, for $i = 2, 3, \dots$, we define the $i$th unit direction vector $\vec d_i$
to have the sum of the angles of $\vec d_{\lfloor i/2 \rfloor}$ and
$\vec d_{\lceil i/2 \rceil}$, which we compute using cosine/sine sum formulas:
\begin{align*}
  \vec d_i = (x_i, y_i)
  &=\left(\cos i \tfrac{\epsilon}{4}, ~ \sin i \tfrac{\epsilon}{4}\right) \\
  &=\left(
      \cos \left[ \big\lfloor \tfrac{i}{2} \big\rfloor \tfrac{\epsilon}{4} + \big\lceil \tfrac{i}{2} \big\rceil \tfrac{\epsilon}{4} \right], ~
      \sin \left[ \big\lfloor \tfrac{i}{2} \big\rfloor \tfrac{\epsilon}{4} + \big\lceil \tfrac{i}{2} \big\rceil \tfrac{\epsilon}{4} \right]
    \right) \\
  &=\left(
      \cos \big\lfloor \tfrac{i}{2} \big\rfloor \tfrac{\epsilon}{4} \cos \big\lceil \tfrac{i}{2} \big\rceil \tfrac{\epsilon}{4} -
      \sin \big\lfloor \tfrac{i}{2} \big\rfloor \tfrac{\epsilon}{4} \sin \big\lceil \tfrac{i}{2} \big\rceil \tfrac{\epsilon}{4}, ~
      \sin \big\lfloor \tfrac{i}{2} \big\rfloor \tfrac{\epsilon}{4} \cos \big\lceil \tfrac{i}{2} \big\rceil \tfrac{\epsilon}{4} +
      \cos \big\lfloor \tfrac{i}{2} \big\rfloor \tfrac{\epsilon}{4} \sin \big\lceil \tfrac{i}{2} \big\rceil \tfrac{\epsilon}{4}
    \right) \\
  &=\left(
      x_{\lfloor i/2 \rfloor} x_{\lceil i/2 \rceil} - y_{\lfloor i/2 \rfloor} y_{\lceil i/2 \rceil}, ~
      y_{\lfloor i/2 \rfloor} x_{\lceil i/2 \rceil} + x_{\lfloor i/2 \rfloor} y_{\lceil i/2 \rceil}
    \right).
\end{align*}
The height of each $\vec d_i$ is at most
$2 \lg \lceil \frac{4}{\epsilon} \rceil = 2 \lg \frac{1}{\epsilon} + O(1)$
beyond the height of $\vec v_\epsilon$,
so a total added height of $2 \lg \frac{1}{\epsilon} + 3 \lg \Delta + O(1)$
relative to the inputs, and three added roots.
The running time of fully evaluating the recurrence is
\begin{equation} \label{eq:T3}
O\Big(\tfrac{1}{\epsilon} \cdot T'\big(P, 3 \lg \Delta + 2 \lg \tfrac{1}{\epsilon} + O(1)\big) \Big).
\end{equation}

Fourth, for each vertex $V$ of each face $F$ of~$P$,
we choose $O(1/\epsilon)$ direction vectors in $F$'s local coordinate system
by taking the two incident edges $e_1, e_2$ of~$F$ and the subset of the vectors
$\vec d_0, \vec d_1, \dots, \vec d_{k-1}$ that are within that wedge.
As in the real RAM algorithm, we check each $\vec d_i$ against each $e_j$
by comparing slopes and checking signs of coordinates to determine the quadrant.
These computations can be done via Operation~\ref{op:real}
with a running time of
\begin{equation*}
O\Big(\tfrac{\deg(V)}{\epsilon} \cdot T'\big(P, 3 \lg \Delta + 2 \lg \tfrac{1}{\epsilon} + O(1)\big) \Big)
\end{equation*}
for each vertex $V$, which by the Handshaking Lemma has a total of
\begin{equation} \label{eq:T4}
O\Big(\tfrac{n}{\epsilon} \cdot T'\big(P, 3 \lg \Delta + 2 \lg \tfrac{1}{\epsilon} + O(1)\big) \Big).
\end{equation}

Fifth, we show how to find an outgoing edge from any given node
$s = (U, A)$ in the graph~$G$, with corresponding vertex $U$ and
a wedge $A$ of directions from $\vec v_1$ to $\vec v_2$.
We follow the same construction in Theorem~\ref{FindQuasigeodesicTheorem},
but cone-following via Corollary~\ref{cor2:cone following}
instead of Corollary~\ref{cor:cone following}.
By Corollary~\ref{cor2:cone following}, following the cone
$(X,\vec d_i,\infty)$ and $(X,\vec d_{i+1},\infty)$
for $k = O\left( \frac{L^2}{\theta \, \ell^2} \right)
= O\left( \frac{L^2}{\epsilon \, \ell^2} \right)$ steps
costs
\begin{align}
  & O\Big(T'\big((P,\vec d_i,\vec d_{i+1}), 6 \lg k + O(1)\big) \, k \lg n\Big).
  \nonumber\\
  ={} & O\Big(T'\big(P, 2 \lg \tfrac{1}{\epsilon} + 3 \lg \Delta + 6 \lg k + O(1)\big) \, k \lg n\Big)
  \nonumber\\
  ={} & O\Big(T'\big(P, 2 \lg \tfrac{1}{\epsilon} + 3 \lg \Delta + 6 \lg \tfrac{L^2}{\epsilon \, \ell^2} + O(1)\big) \, \tfrac{L^2}{\epsilon \, \ell^2} \lg n\Big).
  \label{eq:follow}
\end{align}
Once we determine the reachable vertex $V$, we follow a similar construction
to Theorem~\ref{FindQuasigeodesicTheorem} to find the reachable angular range
$B$ of $V$.
We construct a planar embedding of the faces incident to $V$
in the same way as the first paragraph of this proof.
Then we do a binary search to choose the sector among
the $O(1/\epsilon)$ vectors around the embedding of~$V$.
The cost for this binary search is
\begin{equation} \label{eq:outgoing}
   O\Big(T'\big(P, 2 \lg \tfrac{1}{\epsilon} + 3 \lg \Delta + 6 \lg \tfrac{L^2}{\epsilon \, \ell^2} + O(1)\big) \lg \tfrac{1}{\epsilon} \Big).
\end{equation}
The total cost to find an outgoing edge is the sum of
expressions \eqref{eq:follow} and \eqref{eq:outgoing},
which is dominated by \eqref{eq:follow}.

Sixth, we traverse the graph in the same combinatorial way as
Theorem~\ref{FindQuasigeodesicTheorem}, but using the new
algorithm in the fifth paragraph to make each step.
As before, the number of steps is $O(n/\epsilon)$, so the
total running time is this number times expression \eqref{eq:follow}
(the running time of the subroutine in the fifth paragraph):
\begin{equation} \label{eq:T6}
   O\Big(T'\big(P, 2 \lg \tfrac{1}{\epsilon} + 3 \lg \Delta + 6 \lg \tfrac{L^2}{\epsilon \, \ell^2} + O(1)\big) \cdot \tfrac{n \, L^2}{\epsilon^2 \, \ell^2} \lg n \Big).
\end{equation}

The total running time on the word RAM is the sum of
expressions \eqref{eq:T1}, \eqref{eq:T2}, \eqref{eq:T3}, \eqref{eq:T4},
and \eqref{eq:T6}, of which \eqref{eq:T6} dominates.
Substituting Equation~\eqref{eq:T'(M,h)}, \eqref{eq:T6} expands to%
\begin{align*}
&
O\left( b^*(P)^2 \cdot \big(32 \cdot \max\{2,K(P)\}\big)^{2 |R(P)| + 2 H(P) + 4 \lg \tfrac{1}{\epsilon} + 6 \lg \Delta + 12 \lg \tfrac{L^2}{\epsilon \, \ell^2} + O(1)} \cdot \tfrac{L^2}{\epsilon^2 \, \ell^2} \, n \lg n \right).
\\
={}
&
O\left( b^*(P)^2 \cdot \big(32 \cdot \max\{2,K(P)\}\big)^{2 |R(P)| + 2 H(P) + 16 \lg \tfrac{1}{\epsilon} + 6 \lg \Delta + 24 \lg \tfrac{L}{\ell} + O(1)} \cdot \tfrac{L^2}{\epsilon^2 \, \ell^2} \, n \lg n \right).
\qedhere
\end{align*}
\end{proof}

\section{Conclusion}
\label{OpenQuestionsSection}

It has been known for seven decades \cite{DiscreteThreeGeodesics} that every
convex polyhedron has a closed quasigeodesic, but our algorithm is
the first finite algorithm to find one.
We end with some open problems about extending our approach,
though they all seem difficult.

\begin{open}
  Is there a reduction from sum-of-square-roots \cite{TOPP-33} to
  a decision problem involving quasigeodesics?
\end{open}

One goal would be to represent the sum of $n$ given square roots by
unfolding/following a geodesic ray across $\Theta(n)$ edges and faces,
where ideally the faces could come from a convex polyhedron.
As described in Section~\ref{sec:sum-of-square-roots}, such a reduction
would justify our algorithm taking exponential time in the worst case,
as no better bound is known for sum-of-square-roots.

\begin{open}
  Is there a pseudopolynomial-time algorithm on the real RAM
  for finding a \emph{non-self-intersecting} closed quasigeodesic?
  In particular, can we find the \emph{shortest} closed quasigeodesic?
\end{open}

At least three non-self-intersecting closed quasigeodesics exist
\cite{DiscreteThreeGeodesics}, but Theorem~\ref{FindQuasigeodesicTheorem}
does not necessarily find one.
Any approach similar to Theorem~\ref{FindQuasigeodesicTheorem} is unlikely to
resolve this, for several reasons:
\begin{enumerate}
\item Quasigeodesics could enter a vertex at a continuum of angles.
Theorem~\ref{FindQuasigeodesicTheorem} makes this manageable by grouping similar
angles of entry to a vertex, but if similar angles of entry to a vertex are
combined, extensions that would be valid for some of them but invalid for others
are treated as invalid for all of them. For instance, a quasigeodesic found by
Theorem~\ref{FindQuasigeodesicTheorem} will almost never turn by the maximum
allowed at any vertex, since exiting a vertex at the maximum possible turn from
one entry angle to the vertex may mean exiting it with more of a turn than
allowed for another very close entry angle. So there are some closed
quasigeodesics not findable by Theorem~\ref{FindQuasigeodesicTheorem}, and those
may include non-self-intersecting ones.
\item Given a vertex and a wedge determined by a range of directions from it, we
can find \emph{one} vertex in the wedge, but if we wish to find more than one,
the problem becomes more complicated. When we seek only one vertex, we only need consider
one unfolding of the faces, in which the entire wedge remains until it
hits a vertex; when we pass a vertex, the unfoldings on each side of it might be
different, so we multiply the size of the problem by 2 every time we pass a
vertex. There may, in fact, be exponentially many non-self-intersecting geodesic
paths between two vertices: for instance,
Aronov and O'Rourke \cite[Section~24.4]{Demaine-O'Rourke-2007}
give the
example of a doubly covered regular polygon, in which a geodesic path may visit
every vertex in order around the cycle but may skip vertices.
\end{enumerate}
%
Chartier and de Mesmay \cite{simple-geodesics}
give an algorithm to find a non-self-intersecting closed quasigeodesic
on a polyhedron, but the running time is pseudo-exponential
(exponential in both $n$ and $L/\ell$)
even on the real RAM; see Section~\ref{sec:simple-geodesics-time}.


\begin{open}
On the real RAM, the running time of Theorem~\ref{FindQuasigeodesicTheorem}
is polynomial in not just $n$ but the
smallest curvature at a vertex, the length of the longest edge, and the shortest
distance within a face between a vertex and an edge not containing it. Are all
of those necessary? Can the last be simplified to the length of the shortest
edge?
\end{open}

\begin{open}
  On the expression RAM (or word RAM),
  the running time of Theorem~\ref{FindQuasigeodesicTheorem}
  has an exponential dependence on $n$ and the geometric features.
  Can this be improved to a polynomial dependence,
  that is, a pseudopolynomial-time algorithm?
\end{open}



\begin{open}
Does every \emph{nonconvex} polyhedron (without boundary) have a closed quasigeodesic?
Can the algorithm of Theorem~\ref{FindQuasigeodesicTheorem} be extended to this case?
\end{open}

A quasigeodesic cannot pass through a nonconvex vertex (with more than $360^\circ$ of
material), as it is impossible to split into at most $180^\circ$ on both sides.
If the extended wedge in our algorithm contains a nonconvex vertex,
the wedge will split in two, as shown in Figure~\ref{fig:nonconvex}.
This could grow the complexity exponentially, and furthermore it is unclear how to
guarantee that the wedge eventually hits a convex vertex.

\begin{figure}
\centering
\includegraphics[width=0.7\linewidth]{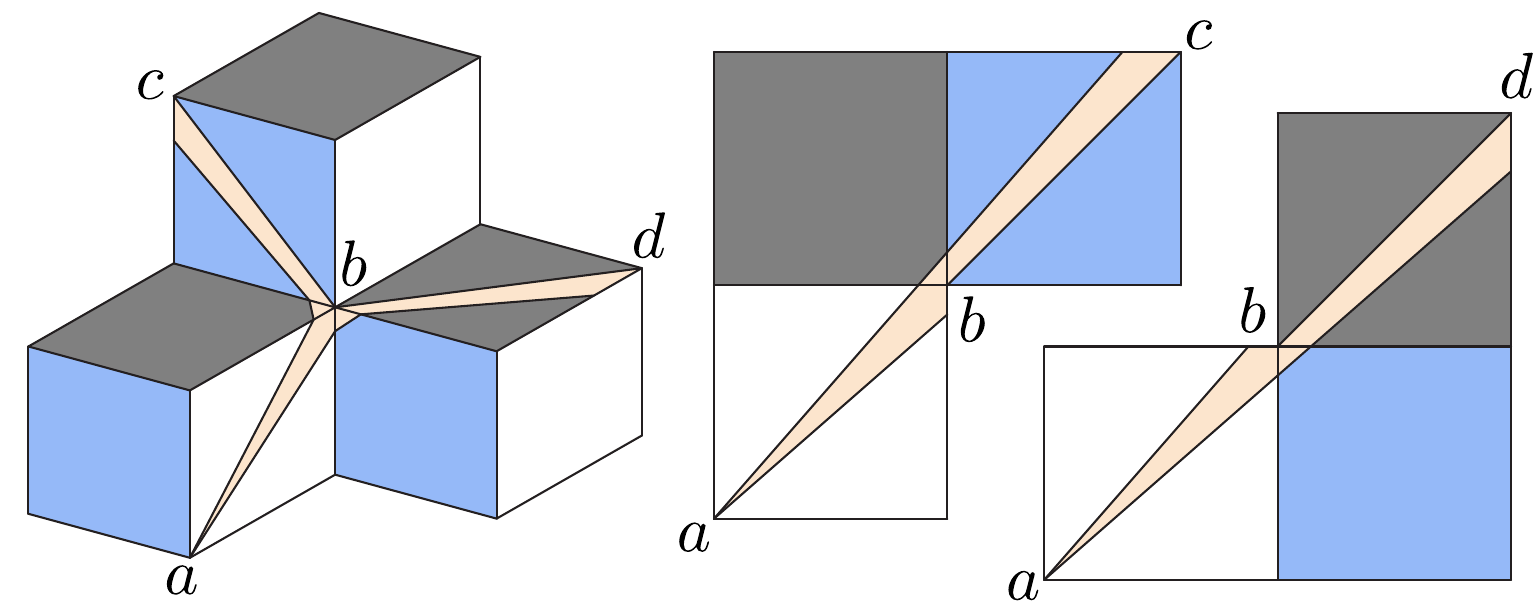}
\caption{An example of our algorithm applied to a nonconvex polycube. A
geodesic search from vertex $a$ within an angular interval may encounter a
nonconvex vertex from which the search space divides.}
\label{fig:nonconvex}
\end{figure}

\label{sec:nonconvex}
Chartier and de Mesmay \cite{simple-geodesics}
define an alternate notion of quasigeodesic path for nonconvex polyhedra,
which requires at negative-curvature vertices that the path has an angle of
\emph{at least} $180^\circ$ on both sides
(while at positive-curvature vertices the path
still has angles of at most~$180^\circ$).
They prove that every nonconvex polyhedron has such a closed ``quasigeodesic'',
and gave an algorithm to find one.
Because such paths exist, our algorithm can also find such a path,
redefining $\epsilon$ to be the smallest \emph{absolute}
curvature of a polyhedron vertex.
(Their algorithm additionally guarantees weak non-self-intersection,
at the cost of a larger running time.)

\section*{Acknowledgments}

We thank Zachary Abel, Nadia Benbernou, Fae Charlton, Jayson Lynch,
Joseph O'Rourke, Diane Souvaine, and David Stalfa for many discussions
related to this paper.
We also thank the referees for many helpful comments
that greatly improved and corrected this paper.

\bibliographystyle{alpha}
\bibliography{quasigeodesics}

\newcommand{\etalchar}[1]{$^{#1}$}
\begin{thebibliography}{BFM{\etalchar{+}}01}

\bibitem[AAOS97]{Agarwal-Aronov-O'Rourke-Schevon-1997}
Pankaj~K. Agarwal, Boris Aronov, Joseph O'Rourke, and Catherine~A. Schevon.
\newblock Star unfolding of a polytope with applications.
\newblock {\em SIAM Journal on Computing}, 26(6):1689--1713, December 1997.

\bibitem[Ale96]{Alexandrov-1996}
Aleksandr~D. Alexandrov.
\newblock Existence of a convex polyhedron and a convex surface with a given metric.
\newblock In Yu.~G. Reshetnyak and S.~S. Kutateladze, editors, {\em A. D. Alexandrov: Selected Works: Part I}, pages 169--173. Gordon and Breach, Australia, 1996.
\newblock Translation of \emph{Doklady Akad. Nauk SSSR, Matematika}, volume 30, number 2, 103--106 (1941).

\bibitem[Bal78]{ThreeGeodesics}
Hans~Werner Ballmann.
\newblock Der {S}atz von {L}usternik und {S}chnirelmann.
\newblock {\em Bonner Mathematische Schriften}, 102:1--25, 1978.

\bibitem[BFM{\etalchar{+}}01]{SepBound}
Christoph Burnikel, Stefan Funke, Kurt Mehlhorn, Stefan Schirra, and Susanne Schmitt.
\newblock A separation bound for real algebraic expressions.
\newblock In {\em Proceedings of the 9th Annual European Symposium on Algorithms}, volume 2161 of {\em Lecture Notes in Computer Science}, pages 254--265, Aarhus, Denmark, August 2001.

\bibitem[Bir27]{OneGeodesic}
George~D. Birkhoff.
\newblock {\em Dynamical Systems}, volume~9 of {\em Colloquium Publications}.
\newblock American Mathematical Society, 1927.

\bibitem[BMS85]{bms-scram-85}
A.~Bertoni, G.~Mauri, and N.~Sabadini.
\newblock Simulations among classes of random access machines and equivalence among numbers succinctly represented.
\newblock {\em Annals of Discrete Mathematics}, 25:65--90, 1985.

\bibitem[BMZ02]{Bertot-Magaud-Zimmermann-2002}
Yves Bertot, Nicolas Magaud, and Paul Zimmermann.
\newblock A proof of {GMP} square root.
\newblock {\em Journal of Automated Reasoning}, 29:225--252, 2002.

\bibitem[BTZ83]{ThreeGeodesicsDiscuss}
Werner Ballmann, Gudlaugur Thorbergsson, and Wolfgang Ziller.
\newblock On the existence of short closed geodesics and their stability properties.
\newblock In {\em Seminar on Minimal Submanifolds}, pages 53--63. Princeton University Press, 1983.

\bibitem[BZ10]{Brent-Zimmermann-2010}
Richard~P. Brent and Paul Zimmermann.
\newblock {\em Modern Computer Arithmetic}.
\newblock Cambridge University Press, November 2010.

\bibitem[CdM24]{simple-geodesics}
Jean Chartier and Arnaud de~Mesmay.
\newblock Finding weakly simple closed quasigeodesics on polyhedral spheres.
\newblock {\em Discrete \& Computational Geometry}, 71(1):95--120, 2024.

\bibitem[CP09]{Chan-Patrascu-I}
Timothy~M. Chan and Mihai P\v{a}tra\c{s}cu.
\newblock Transdichotomous results in computational geometry, {I}: Point location in sublogarithmic time.
\newblock {\em SIAM Journal on Computing}, 39(2):703--729, 2009.

\bibitem[CP10]{Chan-Patrascu-II}
Timothy~M. Chan and Mihai P\v{a}tra\c{s}cu.
\newblock Transdichotomous results in computational geometry, {II}: Offline search.
\newblock arXiv:1010.1948, 2010.
\newblock Originally published at STOC 2007.

\bibitem[DHK20]{Quasigeodesics_SoCG2020}
Erik~D. Demaine, Adam~C. Hesterberg, and Jason~S. Ku.
\newblock Finding closed quasigeodesics on convex polyhedra.
\newblock In {\em Proceedings of the 36th International Symposium on Computational Geometry}, pages 33:1--33:13, June 2020.

\bibitem[DMO09]{TOPP-33}
Erik~D. Demaine, Joseph S.~B. Mitchell, and Joseph O'Rourke.
\newblock Problem 33: Sum of square roots.
\newblock In {\em The Open Problems Project}. 2009.
\newblock \url{https://topp.openproblem.net/p33}.

\bibitem[DO07]{Demaine-O'Rourke-2007}
Erik~D. Demaine and Joseph O'Rourke.
\newblock {\em Geometric Folding Algorithms: Linkages, Origami, Polyhedra}.
\newblock Cambridge University Press, July 2007.

\bibitem[F{\"{u}}r14]{Fuerer-2014}
Martin F{\"{u}}rer.
\newblock How fast can we multiply large integers on an actual computer?
\newblock In Alberto Pardo and Alfredo Viola, editors, {\em Proceedings of the 11th Latin American Symposium on Theoretical Informatics}, number 8392 in Lecture Notes in Computer Science, pages 660--670, 2014.

\bibitem[FW93]{Fredman-Willard-1993}
Michael~L. Fredman and Dan~E. Willard.
\newblock Surpassing the information theoretic bound with fusion trees.
\newblock {\em Journal of Computer and System Sciences}, 47(3):424--436, 1993.

\bibitem[Gib61]{Gibb-1961}
Allan Gibb.
\newblock Algorithm 61: Procedures for range arithmetic.
\newblock {\em Communications of the ACM}, 4(7):319--320, July 1961.

\bibitem[HvdH19]{integer-multiply}
David Harvey and Joris van~der Hoeven.
\newblock Integer multiplication in time {$O(n \log n)$}.
\newblock HAL Preprint hal-02070778, 2019.
\newblock \url{https://hal.archives-ouvertes.fr/hal-02070778}.

\bibitem[IRV19]{Itoh-Rouyer-Vilcu-2019}
Jin-Ichi Itoh, Jo\"el Rouyer, and Costin V\^ilcu.
\newblock Polyhedra with simple dense geodesics.
\newblock {\em Differential Geometry and its Applications}, 66:242--252, 2019.

\bibitem[KLPY99]{CORE}
V.~Karamcheti, C.~Li, I.~Pechtchanski, and C.~Yap.
\newblock A core library for robust numeric and geometric computation.
\newblock In {\em Proceedings of the 15th Annual Symposium on Computational Geometry}, pages 351--359, Miami Beach, Florida, 1999.

\bibitem[Knu98]{TAOCP2}
Donald~E. Knuth.
\newblock {\em The Art of Computer Programming}, volume~2.
\newblock Addison-Wesley, 3rd edition, 1998.

\bibitem[KPD09]{Alexandrov_WADS2009}
Daniel Kane, Gregory~N. Price, and Erik~D. Demaine.
\newblock A pseudopolynomial algorithm for {A}lexandrov's {T}heorem.
\newblock In {\em Proceedings of the 11th Algorithms and Data Structures Symposium}, volume 5664 of {\em Lecture Notes in Computer Science}, pages 435--446, Banff, Canada, August 2009.

\bibitem[Lac03]{Laczkovich-2003}
M.~Laczkovich.
\newblock The removal of {$\pi$} from some undecidable problems involving elementary functions.
\newblock {\em Proceedings of the American Mathematical Society}, 131(7):2235--2240, 2003.

\bibitem[LS29]{ThreeGeodesicsWrong}
Lazar Lyusternik and Lev Schnirelmann.
\newblock Sur le probl\'{e}me de trois g\'{e}od\'{e}siques ferm\'{e}es sur les surfaces de genre 0.
\newblock {\em Comptes Rendus de l'Acad\'{e}mie des Sciences de Paris}, 189:269--271, 1929.

\bibitem[Moo79]{Moore-1979}
Ramon~E. Moore.
\newblock {\em Methods and Applications of Interval Analysis}.
\newblock Society for Industrial and Applied Mathematics, Philadelphia, 1979.

\bibitem[O'R20]{ORourke-2020}
Joseph O'Rourke.
\newblock Personal communication, 2020.

\bibitem[Pog49]{DiscreteThreeGeodesics}
Aleksei~Vasilevich Pogorelov.
\newblock Quasi-geodesic lines on a convex surface.
\newblock {\em Matematicheskii Sbornik}, 25(62):275--306, 1949.
\newblock English translation in \emph{American Mathematical Society Translations} 74, 1952.

\bibitem[Poi05]{Poincare}
Henri Poincar\'{e}.
\newblock Sur les lignes g\'{e}od\'{e}siques des surfaces convexes.
\newblock {\em Transactions of the American Mathematical Society}, 6(3):237--274, 1905.

\bibitem[SC20]{Sharp-Crane-2020}
Nicholas Sharp and Keenan Crane.
\newblock You can find geodesic paths in triangle meshes by just flipping edges.
\newblock {\em ACM Transactions on Graphics}, 39(6):1--15, December 2020.

\bibitem[Sch79]{s-pram-79}
Arnold Sch{\"o}nhage.
\newblock On the power of random access machines.
\newblock In {\em Proceedings of the 6th International Colloquium on Automata, Languages, and Programming}, volume~71 of {\em Lecture Notes in Computer Science}, pages 520--529, 1979.

\bibitem[Sch80]{Schoenhage-1980}
A.~Sch{\"o}nhage.
\newblock Storage modification machines.
\newblock {\em SIAM Journal on Computing}, 9(3):490--508, 1980.

\bibitem[Sha78]{Shamos-1978}
Michael~Ian Shamos.
\newblock {\em Computational Geometry}.
\newblock PhD thesis, Yale University, 1978.

\bibitem[Zie08]{Ziegler-2008}
G{\"u}nter~M. Ziegler.
\newblock Nonrational configurations, polytopes, and surfaces.
\newblock {\em The Mathematical Intelligencer}, 30(3):36--42, 2008.

\bibitem[Zim99]{Zimmermann-1999}
Paul Zimmermann.
\newblock Karatsuba square root.
\newblock Research Report 3805, INRIA, 1999.
\newblock \url{http://hal.inria.fr/docs/00/07/28/54/PDF/RR-3805.pdf}.

\end{thebibliography}

\end{document}